\newtheorem{problem}{Problem}
\newcommand{\RNum}[1]{\uppercase\expandafter{\romannumeral #1\relax}}
\newcommand{\stitle}[1]{\vspace*{0.5em}\noindent{\bf \em #1\/} $\\$ \indent}
\newcommand{\polyfit}[1]{\textsf{PolyFit}}
\newcommand*{\affaddr}[1]{\nolinkurl{#1}}
\begin{document}
\title{PolyFit: Polynomial-based Indexing Approach for Fast Approximate Range Aggregate Queries} \titlenote{This work was supported by grant GRF 152050/19E from the Hong Kong RGC.}

\author{Zhe Li$^{\text{1}}$, Tsz Nam Chan$^{\text{2}}$, Man Lung Yiu$^{\text{1}}$, Christian S. Jensen$^{\text{3}}$}
\affiliation{%
  \institution{Hong Kong Polytechnic University$^{\text{1}}$, Hong Kong Baptist University$^{\text{2}}$, Aalborg University$^{\text{3}}$}
  \nolinkurl{richie.li@connect.polyu.hk, edisonchan@comp.hkbu.edu.hk, csmlyiu@comp.polyu.edu.hk, csj@cs.aau.dk}
}

\begin{abstract}
Range aggregate queries find frequent application in data analytics. In many use cases, approximate results are preferred over accurate results if they can be computed rapidly and satisfy approximation guarantees. Inspired by a recent indexing approach, we provide means of representing a discrete point dataset by continuous functions that can then serve as compact index structures. More specifically, we develop a polynomial-based indexing approach, called \polyfit{}, for processing approximate range aggregate queries. \polyfit{} is capable of supporting multiple types of range aggregate queries, including COUNT, SUM, MIN and MAX aggregates, with guaranteed absolute and relative error bounds. Experimental results show that \polyfit{} is faster and more accurate and compact than existing learned index structures.
\end{abstract}

\maketitle

\vspace{-3.5mm}
\section{Introduction} \label{sec:intro}
\vspace{-1.0mm}
A {\em range aggregate query}~\cite{ho1997range} retrieves records in a dataset that belong to a given key range and then applies an aggregate function (e.g., \verb"SUM", \verb"COUNT", \verb"MIN", \verb"MAX") to an attribute of those records.
Range aggregate queries are used in OLAP~\cite{vitter1999approximate,ho1997range}
and data analytics applications, e.g., for outlier detection \cite{XGGKS15,ES13}, data visualization \cite{AMSAKS15}, and tweet analysis \cite{AMODSR11}.
For example,  network intrusion detection systems \cite{XGGKS15} utilize range \verb"COUNT" queries to monitor a network for anomalous activities. Furthermore, applications with huge numbers of users are expected to receive queries frequently. For instance, Foursquare, with more than 50 million monthly active users~\cite{Foursquareusers}, helps users find the number of specific POIs (e.g., restaurants) within given regions \cite{FoursquareAPI}.
In many application scenarios, users accept approximate results provided that
(i) they can be computed quickly and
(ii) they are sufficiently accurate (e.g., within 5\% error).
We target such applications and focus on
error-bounded evaluation of range aggregate queries.



A recent indexing approach represents the values of attributes in a dataset by continuous functions, which then serve to enable compact index structures~\cite{kraska2018case,fiting2019}.
When compared to traditional index structures,
this approach is able to yield a smaller index size and faster response time.
The existing studies~\cite{kraska2018case,fiting2019} focus on computing exact results
for point and range queries on 1-dimensional data. 
In contrast, we conduct a {\bf comprehensive study of approximate range aggregate queries,
supporting many aggregate functions and multi-dimensional data}.

The idea that underlies our proposal for using functions to answer approximate range aggregate queries may be explained as follows.
Consider a stock market index (e.g., the Hong Kong Hang Seng Index) at different times as a dataset $\mathcal{D}$ consisting of records of the form (index value, timestamp), where the former is our measure and the latter is our key that is used for specifying query ranges---see Figure~\ref{fig:example1}(a). A user can find the average stock market index value in a specified time range $[l_q,u_q]$ by issuing a range \verb"SUM" query (and divide by $u_q - l_q + 1$). We propose to construct the cumulative function of $\mathcal{D}$ as shown in Figure~\ref{fig:example1}(b).
If we can approximate this function well
by a polynomial function $\mathbb{P}(x)$ then
the range \verb"SUM" query can be approximated as
$\mathbb{P}(u_q)-\mathbb{P}(l_q)$, which takes $O(1)$ time.
As another example, the user may wish to find the maximum stock market index in a specified time range.
The timestamped index values in $\mathcal{D}$ can be modeled by the continuous function shown in Figure~\ref{fig:example1}(c).
Again, if we can approximate this function well
using a polynomial function $\mathbb{P}(x)$ then the range \verb"MAX" query
can be answered quickly using mathematical tools, e.g., by applying differentiation to identify maxima in $\mathbb{P}(x)$.



Regarding the two-dimensional case, consider the dataset of tweets' locations
as shown in Figure~\ref{fig:example2D}(a) in Section~\ref{sec:extend_2d},
where each data point has a longitude (as key 1) and a latitude (as key 2).
Suppose that the user wishes to count the number of tweets in a geographical region.
Our idea is to derive the cumulative count function shown in Figure~\ref{fig:example2D}(b),
and then approximate this function with a polynomial function $\mathbb{P}(x_1,x_2)$ (of two variables).
This enables us to answer a two-dimensional range \verb"COUNT" query in $O(1)$ time.


\begin{figure*}[!hbt]
\center
\subfloat[timestamped index values]{ \includegraphics[width=0.65\columnwidth]{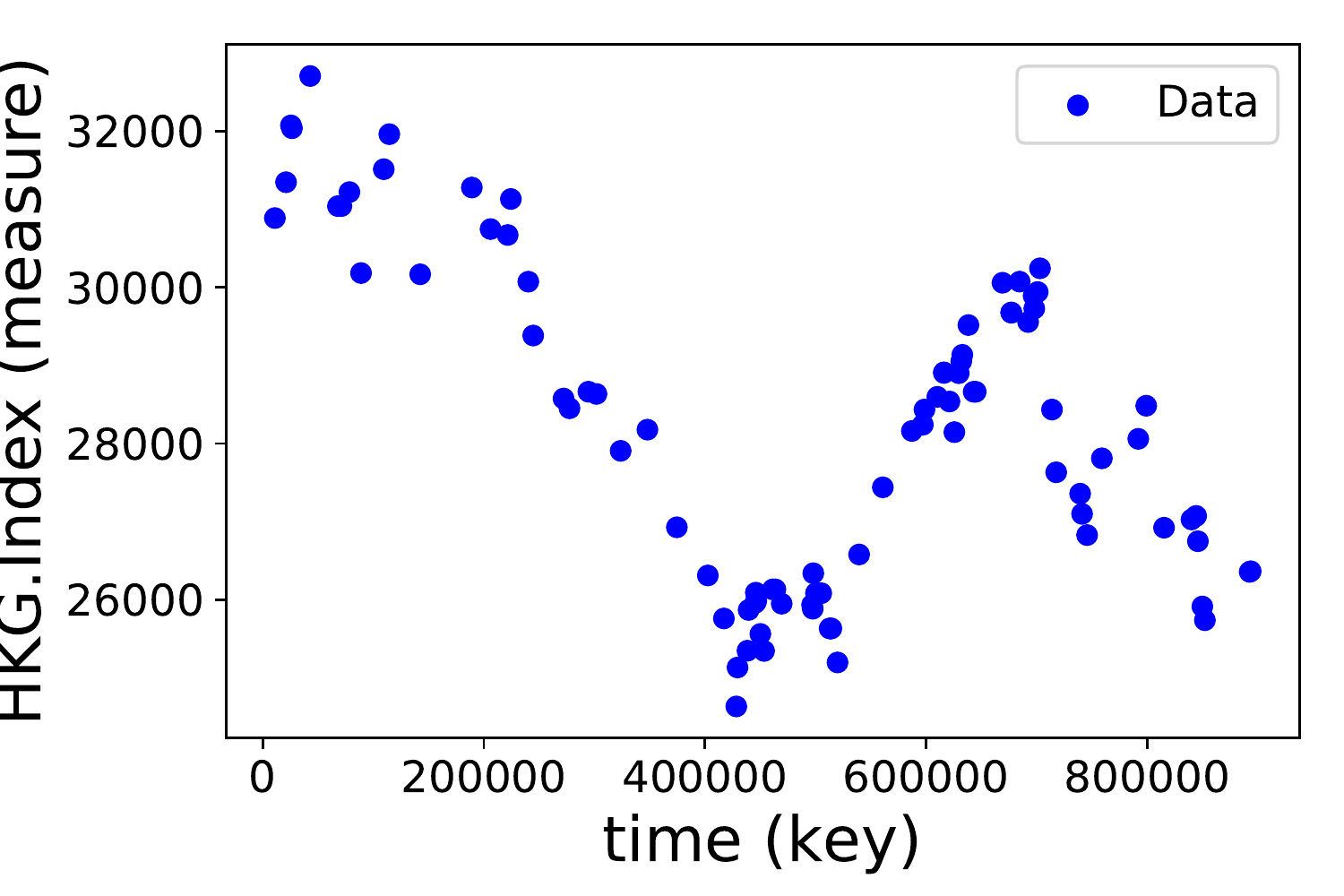}  }
\subfloat[function for range SUM queries]{ \includegraphics[width=0.7\columnwidth]{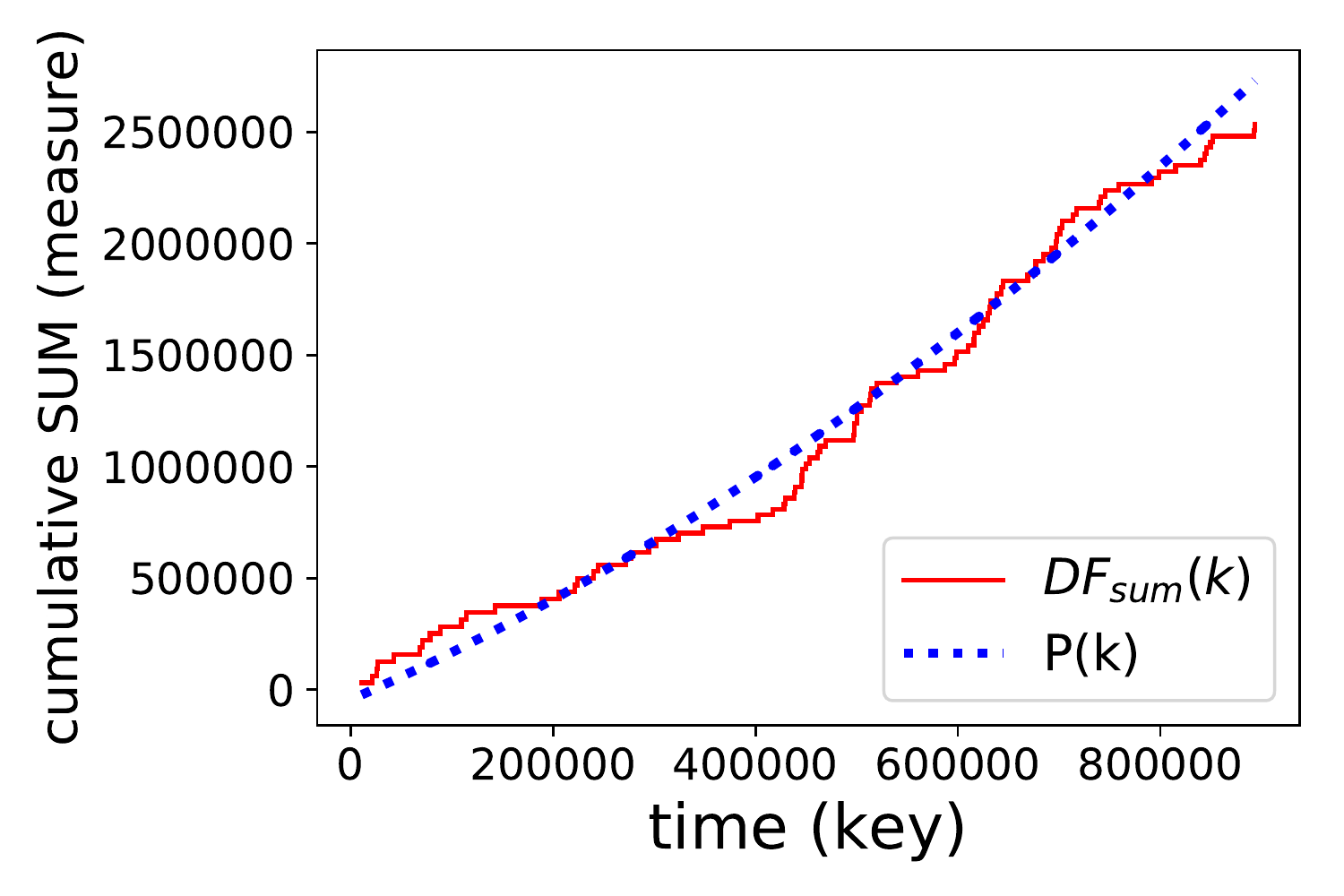} }
\subfloat[function for range MAX queries]{ \includegraphics[width=0.65\columnwidth]{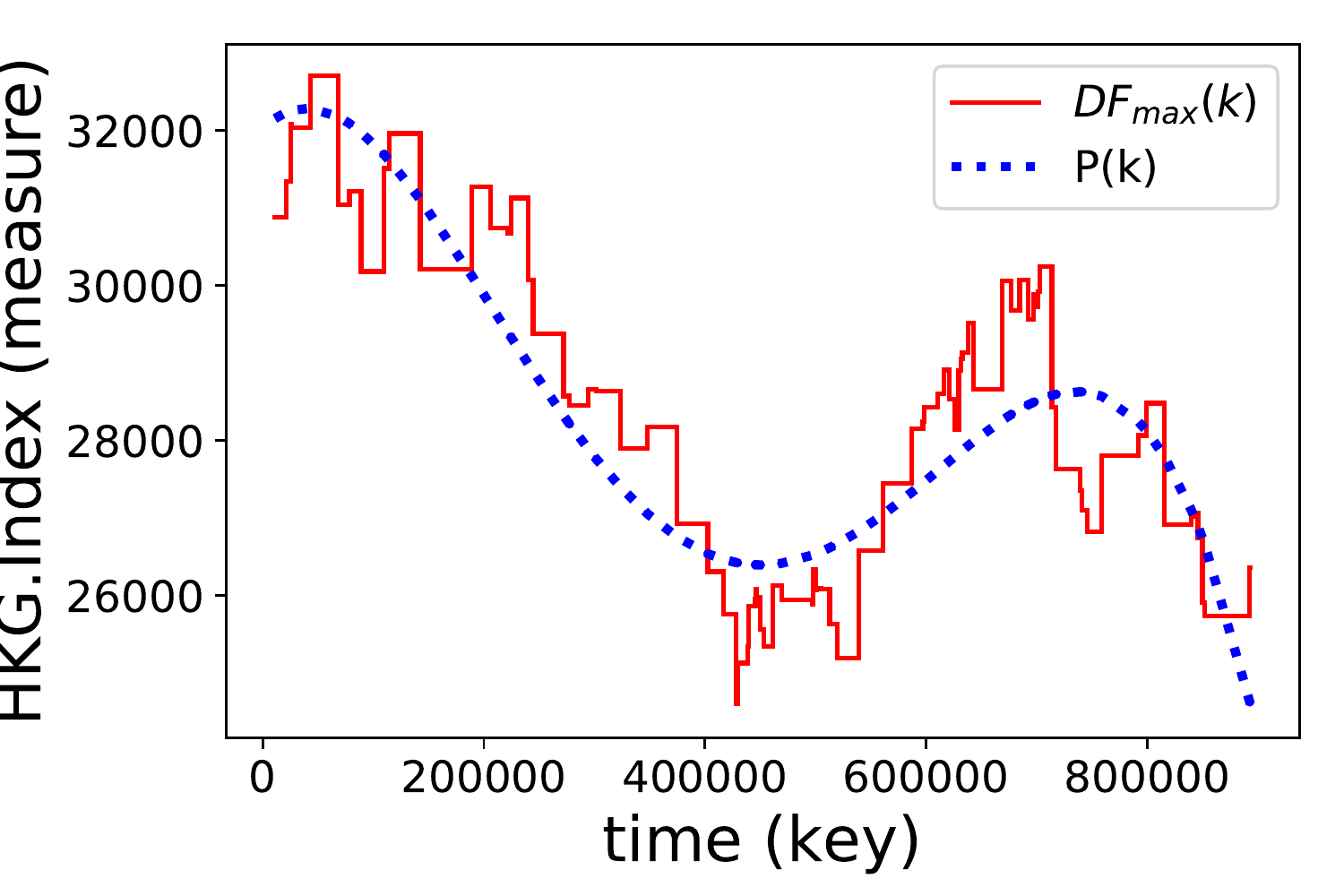} }
\vspace{-2mm}
\caption{Stock market index values, 1-dimensional keys: discrete data points vs. continuous function}
\label{fig:example1}
\vspace{-5mm}
\end{figure*}

Another difference between our work and existing studies \cite{kraska2018case,fiting2019}
is the types of functions used for approximation.
Our proposal uses piecewise polynomial functions,
rather than piecewise linear functions \cite{kraska2018case,fiting2019}.
As we will show in Section~\ref{sec:index_construction},
using polynomial functions yields lower fitting errors than using linear functions.
Thus, our proposal leads to smaller index sizes and faster queries.




The key technical challenges are as follows.
(1) How to find polynomial functions with low approximation error efficiently?
(2) How to answer range aggregate queries with error guarantees?
(3) How to support common aggregate functions (e.g., \verb"COUNT", \verb"SUM", \verb"MIN", \verb"MAX") and multi-dimensional data?


To tackle these challenges, we develop a polynomial-based indexing approach (\polyfit{}) for
processing approximate range aggregate queries.
Our contributions are summarized as follows.
\begin{itemize}
	\item To the best of our knowledge, this is the first study that utilizes polynomial functions to learn indexes that support approximate range aggregate queries.
	\item \polyfit{} supports multiple types of range aggregate queries, including \verb"COUNT", \verb"SUM", \verb"MIN" and \verb"MAX" with guaranteed deterministic absolute and relative error bounds.
	\item Experiment results show that \polyfit{} achieves significant speedups, compared with the closest related works~\cite{fiting2019,kraska2018case}, and traditional exact/approximate methods. For instance, for the OpenStreetMap dataset with 100M records, our index occupies only 4 MBytes and offers 5 $\mu s$ query response time (per 2-dimensional range \verb"COUNT" query).
\end{itemize}


The rest of the paper is organized as follows. We first review the related work in Section \ref{sec:related}. Next, we introduce preliminaries in Section~\ref{sec:prelim}. Then, we present our index construction techniques in Section~\ref{sec:index_construction} and
cover how to answer approximate range aggregate queries in~Section \ref{sec:query}. Next, we extend our proposal to datasets with two keys in Section~\ref{sec:extend_2d}. Lastly, we present experiments in Section~\ref{sec:exp} and conclude in Section~\ref{sec:conclusion}. 
\vspace{-3mm}
\section{Related Work}
\label{sec:related}
\vspace{-1mm}
Range aggregate queries are used frequently in analytics applications and constitute important functionality in OLAP and data warehousing \cite{momjian2001postgresql,delaney2000inside,bartholomew2014mariadb,condie2010online,armbrust2015spark,jensen2018modelardb,vitter1999approximate,ho1997range}. Exact solutions are based on prefix-sum arrays \cite{ho1997range} or aggregate R-trees \cite{DPJY01}.
Due to the need for real-time performance in some applications (e.g., $\mu s$-level response time \cite{XGGKS15}), many proposals exist that aim to improve the efficiency of range aggregate queries. These proposals can be classified as being either data-driven or query-driven. In addition, we also review some other studies, including learned indexes, and time series databases, which are also related to this work.

\label{sec:data_aware_rel}
{\bf Data-driven proposals} build statistical models of a dataset for
estimating query selectivity or the results of range aggregate queries. These models employ
multi-dimensional histograms~\cite{lynch1988selectivity,muralikrishna1988equi,ilyas2004cords,to2013entropy},
data sampling~\cite{lipton1990practical,haas1994relative,riondato2011vc,BlinkDB13,VerdictDB18,han2019iterative}, or kernel density estimation~\cite{gunopulos2000approximating,gunopulos2005selectivity,heimel2015self}.
Although such proposals that compute approximate results are much faster than exact solutions, e.g., achieving ms ($10^{-3}$) level response time \cite{park2018quicksel},
they still do not offer real-time performance (e.g., $\mu s$ level \cite{XGGKS15}). Furthermore, these proposals do not offer theoretical approximation error guarantees.

\label{sec:query_aware_rel}
The {\bf query-driven approaches} utilize query workloads to build statistical models of datasets. Typical methods include error-feedback histograms \cite{aboulnaga1999self,lim2003sash,anagnostopoulos2015learning}, max-entropy histograms \cite{markl2007consistent,Re:2012:UCE:2109196.2109202}, and learning-based models \cite{QP19, PeterAggregate2019}. In addition, Park et al. \cite{park2018quicksel} explore the approach of using mixture probabilistic models. These methods assume that new queries follow historical query workload distributions. However, as one study \cite{BJCGF04} observes, this assumption may not always hold in practice. Further, even when this assumption is valid, the number of queries that are similar to those used for training may be much smaller if the queries follow a power law distribution \cite{yu2018sundial}, which can cause poor accuracy and may render it impossible to obtain useful approximation error guarantees for range aggregate queries.

\label{sec:learned_index_rel}
Recently, {\bf learning-based methods} have been used to construct more compact and effective index structure, that hold potential to accelerate database operations. Kraska et al. \cite{kraska2018case} propose the RMI index, which incorporates different machine learning models, e.g., linear regression and deep-learning, to improve the efficiency of range queries. Galakatos et al. \cite{fiting2019} develop the FITing-tree, which is a segment-tree-like structure \cite{SegmentTreeBook,AXNS17} that can significantly improve the efficiency of exact point queries. Ferragina et al. \cite{PG20} further support efficient update operations for range queries. Wang et al. \cite{WangFX019} extend this learning-based approach to the spatial domain with their learned Z-order model that aims to support fast spatial indexing. However, there are two main differences between these proposals and our proposal. First, they either support range queries \cite{kraska2018case,WangFX019,PG20} or point queries \cite{fiting2019}, but not range aggregate queries. Second, we are the first to exploit polynomial functions to build index structures for approximate range aggregate queries.
\label{sec:time_series_rel}
In the {\bf time series database} community, some research studies utilize mathematical models to approximate time series data. Representative approaches include piecewise linear approximation \cite{Plato20,Online2001,PLAKDD98,FastSimSearchKeogh97,HazemSwingSlide2009}, discrete wavelet transform \cite{WavletICDE02, WavletICDE99}, discrete Fourier transform \cite{DFTSIGMOD94,DFTICDE99}, and their combinations \cite{EamonnTKDE08, jensen2018modelardb}. However, these studies focus on either time series similarity search (e.g., range or nearest neighbor queries) or time series compression and they are not designed to answer the range aggregate queries we target. Some of these studies also utilize piecewise linear approximation \cite{Plato20,Online2001,HazemSwingSlide2009,FastSimSearchKeogh97} to approximate time-series, which we also do. In contrast, we achieve better performance by utilizing nonlinear (polynomial) functions to approximate curves, which can reduce the number of segments dramatically. Furthermore, we can also support the segmentation of surfaces (e.g., Figure \ref{fig:example2D}(b)), rather than only 1-D curves.


\vspace{-3mm}
\section{Preliminaries}
\vspace{-1mm}
\label{sec:prelim}
First, we define range aggregate queries and their approximate versions in Section~\ref{sec:background}. Then, we discuss the baselines for answering exact range aggregate queries in Section~\ref{sec:exact}. Table~\ref{tab:symbol} summarizes frequently used symbols in this paper. 

\begin{table}[htb]
\vspace{-1mm}
\small \center
\caption{Symbols} \label{tab:conditions}
\vspace{-3mm}
\label{tab:symbol}
\begin{tabular}{| c | c |}
    \hline
    Symbol & Description\\ \hline \hline
    $\mathcal{D}$ & dataset \\ \hline
    $n$ & number of records in $\mathcal{D}$ \\ \hline
    $R_{count}$ & range \verb"COUNT" query \\ \hline
    $R_{sum}$ & range \verb"SUM" query \\ \hline
    $R_{min}$ & range \verb"MIN" query \\ \hline
    $R_{max}$ & range \verb"MAX" query \\ \hline
    $CF_{sum}$ & cumulative function for range \verb"SUM" query \\ \hline
    $DF_{max}$ & key-measure function for range \verb"MAX" query \\ \hline
    $\mathbb{P}(k)$ & polynomial function \\ \hline
    $I$ & interval \\ \hline
    $deg$ & degree of polynomial function \\ \hline
\end{tabular}
\vspace{-3mm}
\end{table}

\vspace{-3mm}
\subsection{Problem Definition}
\label{sec:background}
\vspace{-1mm}
%
We focus on the setting that
a range aggregate query specifies a $key$ attribute (for range selection)
and a $measure$ attribute for aggregation.
We shall consider the setting of two keys in Section~\ref{sec:extend_2d}.
As such, the dataset $\mathcal{D}$ is a set of $(key, measure)$ records,
i.e., $\mathcal{D}=\{(k_1, m_1), (k_2, m_2),..., (k_n, m_n)\}$.
For ease of discussion, we assume that key values are distinct and measure values are numerical. 
We leave the discussion of repeated keys and negative measure values in Appendices A.3 and A.4 \cite{ZTMC20_arxiv}. Then we define a range aggregate query as follows.
\begin{definition} \label{def:RAQ}
Let $\mathcal{G}$ be an aggregate function (e.g.,  \verb"COUNT", \verb"SUM", \verb"MIN", \verb"MAX")
on a measure attribute.
Given a dataset $\mathcal{D}$ and a key range $[l_q, u_q]$,
we define $V$ as the following multi-set
$$V= \{ m\mid (k,m) \in \mathcal{D} \wedge l_q \leq k \leq u_q \}$$
and then define the result of the range aggregate query as
\begin{equation}
\label{eq:query}
R_{\mathcal{G}}(\mathcal{D},[l_q, u_q]) = \mathcal{G}(V).
\end{equation}
\end{definition}
%

We aim to develop efficient methods for obtaining an approximate result of $R_{\mathcal{G}}(\mathcal{D},[l_q, u_q])$
with two types of error guarantees~\cite{MP02,MP04}, namely the absolute error guarantee (cf. Problem \ref{prob:abs_error}) and the relative error guarantee (cf. Problem \ref{prob:rel_error}).

\begin{problem}[$Q_{abs}$]
\label{prob:abs_error}
Given an absolute error $\varepsilon_{abs}$ and a range aggregate query, we ask for an approximate result $A_{abs}$ such that:
\begin{equation}
\label{eq:abs_error}
|A_{abs}-R_{\mathcal{G}}(\mathcal{D},[l_q, u_q])| \leq \varepsilon_{abs}
\end{equation}
\end{problem}

\begin{problem}[$Q_{rel}$]
\label{prob:rel_error}
Given a relative error $\varepsilon_{rel}$ and a range aggregate query, we ask for an approximate result $A_{rel}$ such that:
\begin{equation}
\label{eq:rel_error}
\left| \frac{A_{rel}-R_{\mathcal{G}}(\mathcal{D},[l_q, u_q])}{R_{\mathcal{G}}(\mathcal{D},[l_q, u_q])} \right| \leq \varepsilon_{rel}
\end{equation}
\end{problem}

\subsection{Baselines: Exact Methods}
\label{sec:exact}
\vspace{-0.5mm}
We proceed to discuss exact methods for answering
range \texttt{SUM} queries and range \texttt{MAX} queries.
These methods can be easily extended to support \texttt{COUNT} and \texttt{MIN}, respectively.

\subsubsection{{\bf Exact method for range \texttt{SUM} queries}}
\label{sec:exact_count}
First, we define the key cumulative function as $CF_{sum}(k)$:
\vspace{-1mm}
\begin{equation}
CF_{sum}(k)=R_{sum}(\mathcal{D},[-\infty,k]).
\end{equation}

The additive property of $CF_{sum}$ enables us
to compute the exact result of the range \texttt{SUM} query as:
\vspace{-1mm}
\begin{equation}
\label{eq:R_c_cum}
R_{sum}(\mathcal{D},[l_q, u_q])=CF_{sum}(u_q)- CF_{sum}(l_q).
\end{equation}

Then, we discuss how to obtain the terms
$CF_{sum}(l_q)$ and $CF_{sum}(u_q)$ efficiently.
Although $CF_{sum}$ is a continuous function, it can be expressed
by a discrete data structure in finite space.
Specifically, we presort dataset $\mathcal{D}$ in ascending key order and
then follow this order to construct a key-cumulative array
of entries $(k,CF_{sum}(k))$. 
At query time, the terms $CF_{sum}(l_q)$ and $CF_{sum}(u_q)$
are obtained by performing binary search on the above key-cumulative array. This step takes $O(\log n)$ time.


As a remark, this key-cumulative array is similar to the prefix-sum array~\cite{ho1997range}.
The difference is that our array allows floating-point search keys,
while the prefix-sum array does not.





\subsubsection{{\bf Exact method for range \texttt{MAX} queries}}
\label{sec:exact_max}
First, we define the key-measure function $DF_{max}(k)$ in Equation \ref{eq:DF} to capture the data distribution in the dataset $\mathcal{D}$.
In the definition, we assume that each pair $(k_i, m_i)$ in $\mathcal{D}$ is arranged in ascending order by the key.

\vspace{-3.0mm}
\begin{eqnarray}
\label{eq:DF}
   DF_{max}(k) =
   \begin{cases}
   	 m_1 & \mbox{ if } k_1 \leq k < k_{2} \\
   	 \vdots &\mbox{ } \vdots \\
   	 m_i &\mbox{ if } k_i \leq k < k_{i+1} \\
   	 \vdots &\mbox{ } \vdots \\
   	 m_n & \mbox{ if } k = k_n \\
   	 -\infty &\mbox{ otherwise} \\
   \end{cases}
\end{eqnarray}
Figure~\ref{fig:max_index}(a) exemplifies the function $DF_{max}(k)$.


An aggregate max-tree~\cite{DPJY01} (cf. Figure~\ref{fig:max_index}(b)) can be
built to answer range \texttt{MAX} queries.
In this tree, each internal node stores two entries, where each entry stores an interval and the maximum measure within that interval (e.g., $(I_1,m_6)$ and $(I_2,m_7)$ are two entries of the root node $N_{root}$). We then explain how to process the query $R_{max}(\mathcal{D},[l_q, u_q])$, whose query range is indicated by the red line in Figure~\ref{fig:max_index}(a). In Figure~\ref{fig:max_index}(b), we start from the root of the tree. If the interval of an entry intersects with the query range (e.g., $I_1$ and $I_2$ in Figure~\ref{fig:max_index}(a)), we visit its child nodes (e.g., $N_1$  and $N_2$). When the interval of an entry (e.g., $I_4$ and $I_5$ in Figure~\ref{fig:max_index}a) is covered by the query range, we directly use its stored aggregate value without visiting its child nodes (e.g., yellow nodes in Figure~\ref{fig:max_index}b). During the traversal, we keep track of the maximum measure seen so far. This procedure takes $O(\log n)$ time as we check at most two branches per level.

\begin{figure}[!hbt]
\vspace{-3mm}
\center
\includegraphics[width=1.05\columnwidth,page=2]{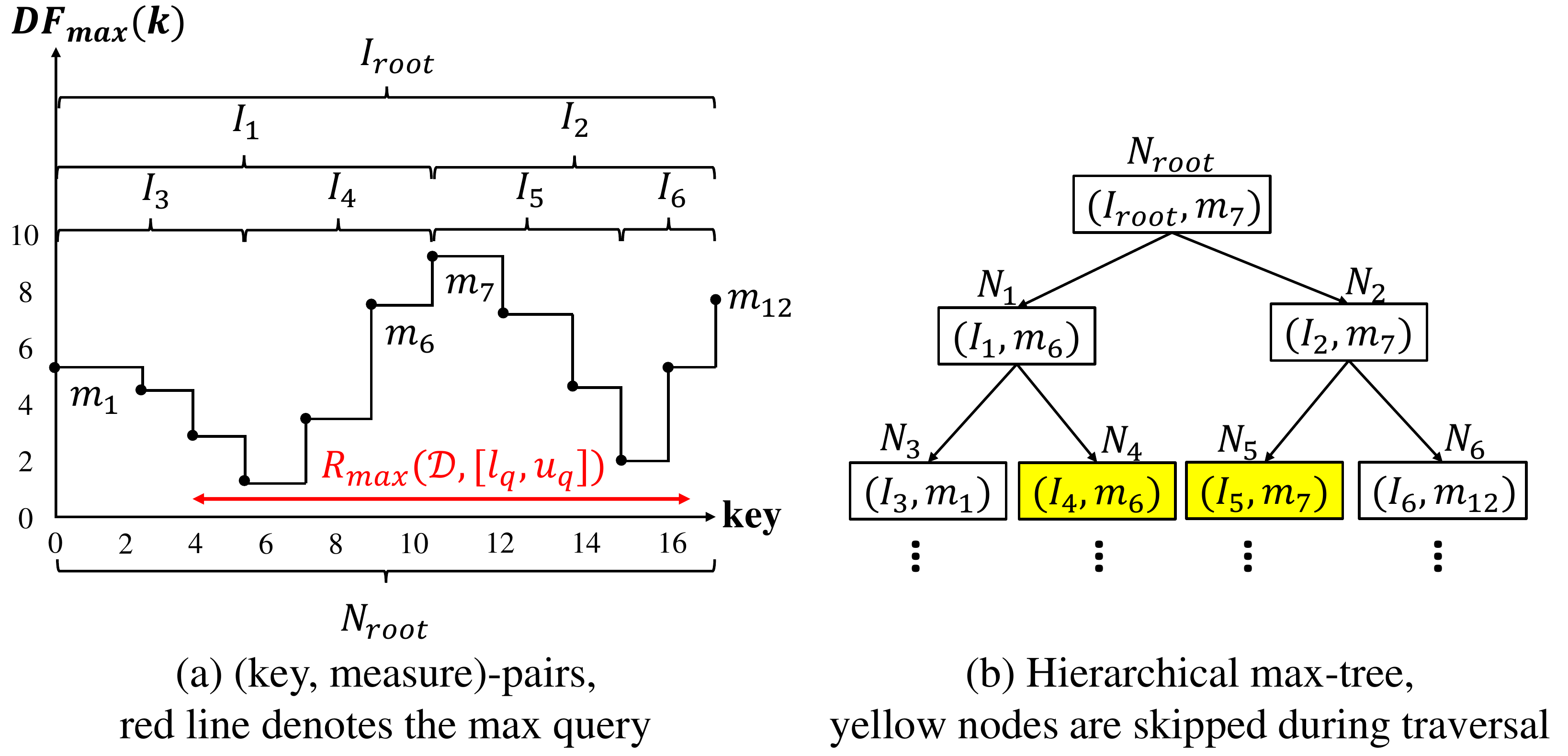}
\vspace{-4mm}
\caption{Aggregate \texttt{MAX} tree}
\label{fig:max_index}
\vspace{-3mm}
\end{figure}

\section{Index Construction}
\label{sec:index_construction}
Traditional index structures (e.g., B-tree \cite{TCRC01}) need to store $n$ keys, where $n$ is the cardinality of the dataset $\mathcal{D}$.
Thus, the index size grows linearly with the data size.
To reduce the index size dramatically, we plan to index a limited number of functions (instead of $n$ keys).


As a case study, we compare existing fitting functions~\cite{kraska2018case,fiting2019}
with our fitting function (polynomial)
on a real dataset (the Hong Kong 40 Index in 2018 \cite{hk40index}) in Figure \ref{fig:idea}.
The exact key-measure function $DF_{max}(k)$ exhibits a complex shape.
Observe that linear functions, e.g., linear regression $LR(k)$ \cite{kraska2018case} and linear segment $FIT(k)$ \cite{fiting2019},
cannot accurately approximate the exact function.
In this paper, we adopt the polynomial function $\mathbb{P}(k)$, which captures the nonlinear property\footnote{As a remark, other types of nonlinear functions (e.g., logarithmic and trigonometric functions) require higher computation cost than polynomial functions. Thus,
we leave other types of nonlinear functions as future work.} and
achieves a better approximation of $DF_{\max}(k)$. In this example, $\mathbb{P}(k)$ is a degree-4 polynomial function (blue dotted line).


\begin{figure}[!hbt]
\vspace{-2mm}
\center
\includegraphics[width=0.8\columnwidth]{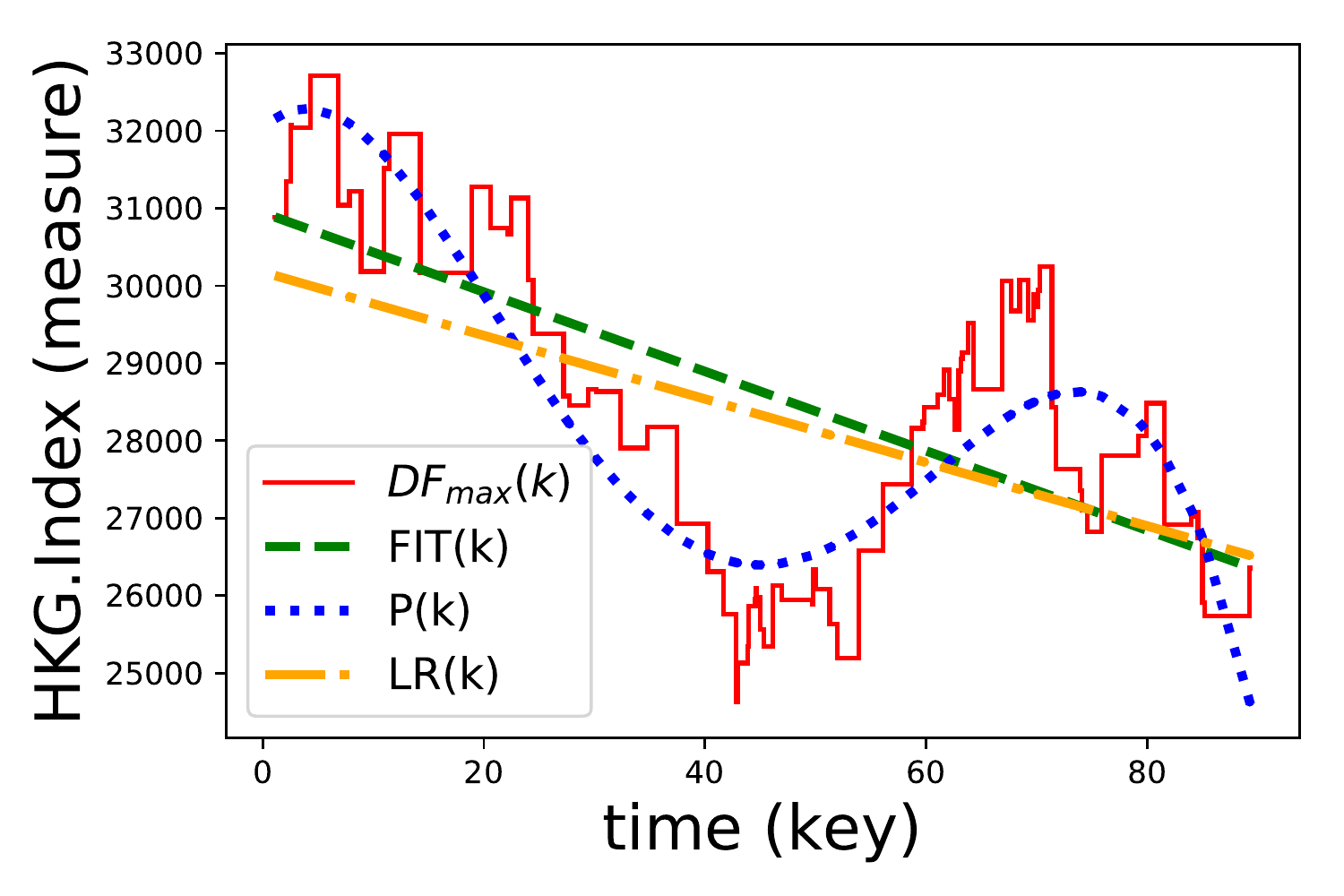}
\vspace{-2mm}
\caption{Curve fitting of the HKG 40 Index in 2018 \cite{hk40index}}
\label{fig:idea}
\vspace{-2mm}
\end{figure}

\begin{figure*}[hbt]
\vspace{-2mm}
\center
\includegraphics[width=2.0\columnwidth]{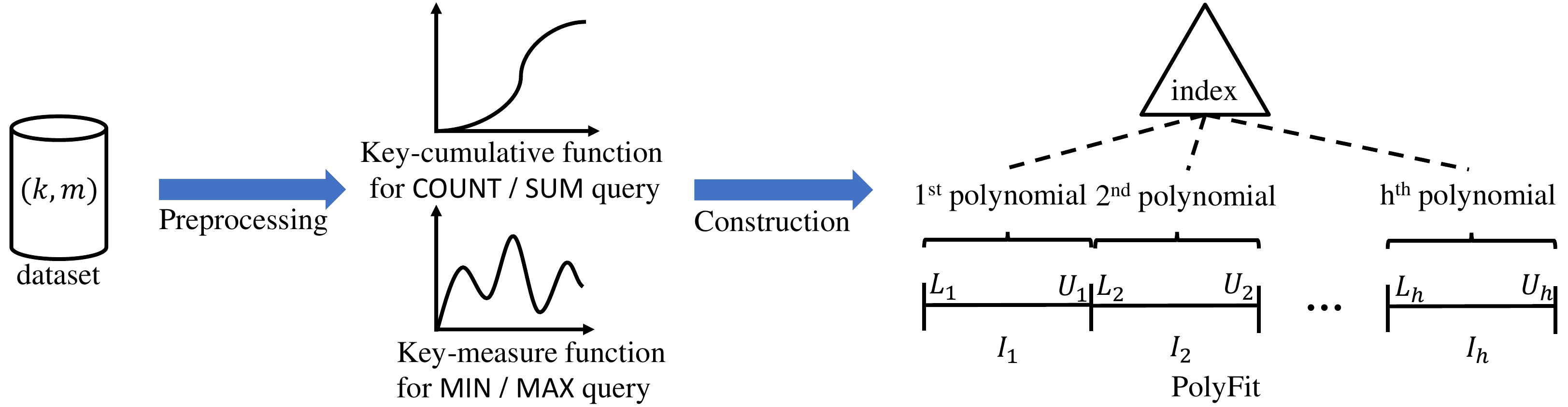}
\vspace{-3mm}
\caption{Indexing framework for \polyfit{}, each leaf entry stores a polynomial function}
\label{fig:indexframework}
\vspace{-4mm}
\end{figure*}

We introduce our indexing framework in Figure \ref{fig:indexframework}.
First, we convert the dataset into the following exact function $F(k)$ based on
the aggregate function $\mathcal{G}$ and the functions in Section~\ref{sec:exact}.
\begin{eqnarray}
\label{eq:Fk}
   F(k) =
   \begin{cases}
       CF_{sum}(k) & \text{if } \mathcal{G} = \texttt{SUM}\\
       DF_{max}(k) & \text{if } \mathcal{G} = \texttt{MAX}
   \end{cases}
\end{eqnarray}
We plan to compute an error-bounded approximation of $F(k)$
by using a sequence of polynomial functions.
In Section~\ref{sec:min_error}, we examine how to find the
best polynomial fitting of $F(k)$ in a given key interval $I$.
Then, in Section~\ref{sec:optimal}, we propose a segmentation method
for $F(k)$ in order to minimize the index size subject to a given deviation threshold.
Finally, in Section~\ref{sec:organizing}, we discuss how to
build an index for a sequence of polynomial functions.


\subsection{Polynomial Fitting in a Key Interval} \label{sec:min_error}
%
We discuss how to find the best fitting polynomial function of $F(k)$
in a given key interval $I$.
First, we express a polynomial function $\mathbb{P}(k)$ as follows:
\begin{equation}
\mathbb{P}(k)=\sum_{j=0}^{deg} a_j k^j,
\end{equation}
where $deg$ is the degree and each $a_j$ is a coefficient.
Note that the choice of $deg$ entails tradeoffs between the fitting error and the online query evaluation cost.
We discuss the choice of $deg$ in Section \ref{sec:choose_degree}.


We formulate the following optimization problem in order to minimize the fitting error
between $\mathbb{P}(k)$ and $F(k)$.
\begin{definition}
\label{def:min_error_problem}
Let $F(k)$ be the exact function and $I$ be a given key interval.
Let $k_1, k_2, \cdots , k_\ell$ be the keys of $\mathcal{D}$ in interval $I$.
We aim to find polynomial coefficients, $a_0, a_1, \cdots, a_{deg}$ that minimize the following error:
\begin{equation}
\label{eq:error}
E(I)=\min_{a_0,a_1,..., a_{deg} \in \mathbb{R}} \; \max_{1\leq i\leq \ell} |F(k_i)-\mathbb{P}(k_i)|
\end{equation}
\end{definition}
This is equivalent to the following linear programming problem, where the coefficients $a_0, a_1, \cdots, a_{deg}$ and $t$ are variables.
\begin{equation}
\label{eq:LP}
\begin{cases}
& \textsc{minimize} \quad t \\
& \textsc{subject \: to:} \\
& -t \leq F(k_1) - (a_{deg} k_1^{deg}+...+a_2 k_1^2+a_1 k_1+a_0) \leq t \\
& -t \leq F(k_2) - (a_{deg} k_2^{deg}+...+a_2 k_2^2+a_1 k_2+a_0) \leq t \\
&... \\
& -t \leq F(k_\ell) - (a_{deg} k_\ell^{deg}+...+a_2 k_\ell^2+a_1 k_\ell+a_0) \leq t \\
& \forall a_i \in \mathbb{R} \\
\end{cases}
\end{equation}

It takes $O(\ell^{2.5})$ time to solve the above linear programming problem (Equation~\ref{eq:LP})~\cite{lee2015efficient}. In our experimental study, we adopt the IBM CPLEX linear programming library as the LP Solver, which is believed to be the most reliable and efficient among other implementations~\cite{LPSolverComparison}. We discuss some subtle issues like precision limitations in Section~\ref{sec:choose_degree}.

\vspace{-2mm}
\subsection{Minimal Index Size with Bounded Error} \label{sec:optimal}
\vspace{-1mm}
To support approximate query evaluation (in Section~\ref{sec:query}),
we require that the fitting polynomial functions should satisfy a given error constraint.
However, a single polynomial function is unlikely to fit accurately for the entire key domain.
Thus, we propose to partition the key domain into intervals $I_1, I_2, \cdots, I_h$ so that
each interval $I_i$ satisfies the following requirement:
$$E(I)\leq \delta ,$$
where $\delta$ is a given deviation threshold.
For instance, in Figure~\ref{fig:bounded_error_fitting},
the key domain is partitioned into two intervals $I_1$ and $I_2$
so that the best fitting polynomial function in each interval satisfies the error requirement.


\begin{figure}[!hbt]
	\vspace{-1mm}
	\center
	\includegraphics[width=0.8\columnwidth]{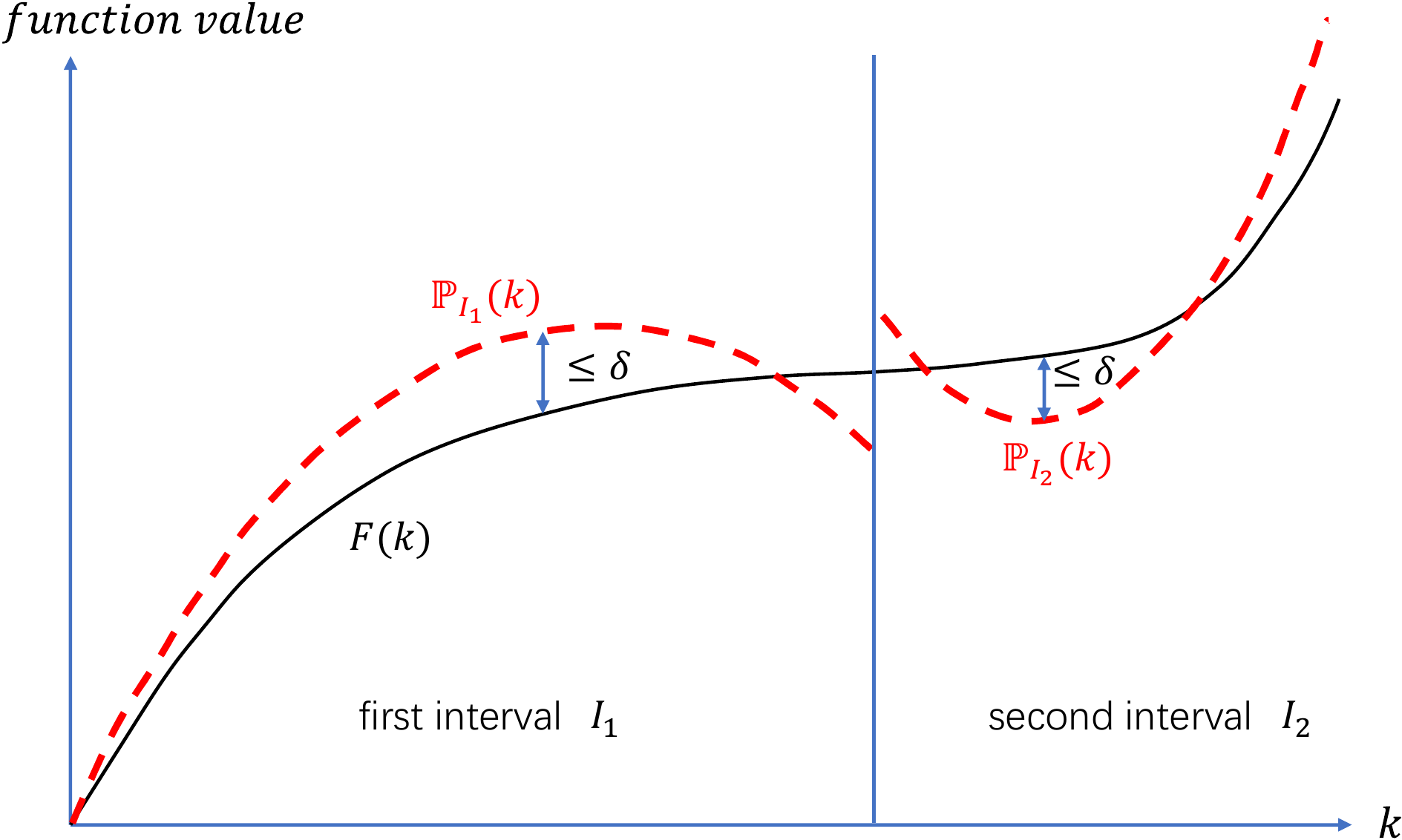}
	\vspace{-2mm}
	\caption{Fitting $F(k)$ with multiple polynomial functions, subject to the deviation threshold $\delta$}
	\label{fig:bounded_error_fitting}
	\vspace{-4mm}
\end{figure}

To achieve a small index size, we aim to minimize the number of intervals (i.e., $h$ in Figure \ref{fig:indexframework}).
An existing dynamic programming (DP) approach~\cite{leenaerts2013piecewise}, though designed for piecewise linear functions,
can be adapted to solve our partitioning problem of $F(k)$.
However, this method takes $O(n^2 \times \ell_{max}^{2.5})$ time\footnote{Recall that the state-of-the-art linear programming solver \cite{lee2015efficient} takes $O(\ell_{max}^{2.5})$ time for each curve-fitting problem (cf. Equation \ref{eq:LP}).}, where $\ell_{max}$ is the maximum number of keys covered by any interval. Obviously, this method does not scale well with the data size $n$.

In Section~\ref{sec:GS}, we present a more efficient method, called greedy segmentation (GS), to segment the exact function $F(k)$.
As we show later, the time complexity of GS is $O(n \times \ell_{max}^{2.5})$,
which scales well with the data size $n$.
Then, in Section~\ref{sec:OPT}, we show that GS is guaranteed to return the optimal solution.



\subsubsection{{\bf Greedy Segmentation (GS) Method}}
\label{sec:GS}
We present the pseudo-code of the Greedy Segmentation (GS) method in Algorithm~\ref{alg:GS}.
It examines the key domain from left to right (line 2).
In each iteration, it expands the interval $I$ by including the next key (line 3),
calls an LP solver on the interval $I$ to obtain a fitting function $\mathbb{P}_{now}$ (line 4),
and tests whether it fulfills the error requirement.
When this test fails (i.e., $E(I) > \delta$),
we conclude that the previous interval is a maximal interval
and thus insert its corresponding fitting function $\mathbb{P}_{prev}$ into the result.
The above procedure is repeated until all keys are covered.

\vspace{-2.5mm}
\begin{algorithm}[hbt]
	\small
	\caption{Greedy Segmentation (GS)}
	\begin{algorithmic}[1]
		\Statex {\bf Input:} function $F(k)$, degree $deg$, deviation threshold $\delta$
		\Statex {\bf Output:} sequence of polynomial functions $\text{Seq}_{\mathbb{P}}$
		\State $\text{Seq}_{\mathbb{P}} \leftarrow \emptyset$; \; $l \gets 1$; \; $\mathbb{P}_{prev} \leftarrow null$
		\For{$u \gets 2$ to $n$}
    		\State $I\gets [k_l,k_u]$ \Comment the interval for polynomial function $\mathbb{P}$
    		\State $\mathbb{P}_{now} \leftarrow$ call LP solver on $I$ \Comment Equation \ref{eq:LP}
    		\If{$E(I) > \delta$ or $u = n$} \Comment Equation \ref{eq:error}
                \State insert $\mathbb{P}_{prev}$ into $\text{Seq}_{\mathbb{P}}$
                \State $l\gets u$
    		\EndIf
    		\State $\mathbb{P}_{prev} \leftarrow \mathbb{P}_{now}$
		\EndFor
		\State \Return $\text{Seq}_{\mathbb{P}}$
	\end{algorithmic} \label{alg:GS}
\end{algorithm}
\vspace{-2.5mm}


The time complexity of GS is $O(n\ell_{max}^{2.5})$
because it invokes $O(n)$ calls to the LP solver, where each call takes $O(\ell_{max}^{2.5})$ time~\cite{lee2015efficient}.
We further accelerate GS by applying an existing exponential search technique~\cite{JA76},
which can reduce the number of LP calls per interval by $\frac{\ell}{\log \ell}$ times.
With this technique, GS takes only 70 seconds (cf. Section \ref{sec:exp_construction_time}) to complete for a real dataset with 1 million data points.
This is acceptable for many data analytics tasks (with static datasets) in OLAP. In our experiments, we find that $\ell_{max}$ usually ranges between hundreds and thousands, thus the term $O(\ell_{max}^{2.5})$ is acceptable in practice. In Appendix \cite{ZTMC20_arxiv}, we discuss how to utilize parallel computation to further improve the construction time.


\vspace{-2mm}
\subsubsection{{\bf GS is Optimal}} \label{sec:OPT}
%
We first prove the following property (Lemma \ref{lem:mono}) of our curve fitting problem (cf. Definition \ref{def:min_error_problem}).
\vspace{-1mm}
\begin{lemma} \label{lem:mono}
Let $I_l$ and $I_u$ be two intervals, which contain two sets of keys $S_{l}$ and $S_{u}$, respectively.
If $S_{l} \subseteq S_{u}$, then $E(I_l) \leq E(I_u)$.
\end{lemma}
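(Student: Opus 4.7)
The plan is to exploit the fact that $E(I)$ is defined as an infimum over polynomial coefficients of a supremum over the keys lying in $I$, and such min-max quantities are monotone in the constraint set. Crucially, the polynomials considered in the two sub-problems range over the same class (all polynomials of degree at most $deg$), so any candidate for the larger-interval problem is automatically a candidate for the smaller-interval problem.

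Concretely, I would proceed in three steps. First, let $\mathbb{P}^{\star}$ denote an optimal polynomial for the interval $I_u$, i.e.\ one whose coefficients attain the minimum in Equation~\ref{eq:error} restricted to the keys of $S_u$, so that
\[
E(I_u) \;=\; \max_{k \in S_u} |F(k) - \mathbb{P}^{\star}(k)|.
\]
Second, use the hypothesis $S_l \subseteq S_u$ to conclude
\[
\max_{k \in S_u} |F(k) - \mathbb{P}^{\star}(k)| \;\geq\; \max_{k \in S_l} |F(k) - \mathbb{P}^{\star}(k)|,
\]
since taking a maximum over a subset can only shrink the value. Third, observe that $\mathbb{P}^{\star}$ is just one feasible choice of coefficients for the curve-fitting problem on $I_l$, so the optimum on $I_l$ is no larger than the value $\mathbb{P}^{\star}$ achieves there:
\[
\max_{k \in S_l} |F(k) - \mathbb{P}^{\star}(k)| \;\geq\; \min_{a_0,\ldots,a_{deg}\in\mathbb{R}} \max_{k \in S_l} |F(k) - \mathbb{P}(k)| \;=\; E(I_l).
\]
Chaining these three inequalities gives $E(I_u) \geq E(I_l)$, as required.

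There is no real obstacle here; the only point that needs a line of justification is that the same class of polynomials (same degree bound $deg$) is used in both optimization problems, so $\mathbb{P}^{\star}$ is indeed admissible on $I_l$. If repeated keys had been allowed one would additionally note that the max is over a \emph{multi-set}, but under the assumption of distinct keys (stated in Section~\ref{sec:background}) nothing changes. The lemma will be used later to argue greedily that once GS detects $E(I) > \delta$, no superset of the current keys can satisfy the deviation requirement either, which is why the algorithm can safely close off the previous maximal interval and start a new one.
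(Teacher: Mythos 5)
Your proof is correct and is essentially the same argument as the paper's: the paper phrases it as the LP for $S_l$ having a subset of the constraints of the LP for $S_u$ (hence a larger feasible region and a smaller minimum), while you phrase it as taking the optimizer $\mathbb{P}^{\star}$ for $S_u$, noting its max error can only shrink when restricted to $S_l \subseteq S_u$, and that it remains admissible for the $I_l$ problem. These are the same monotonicity-under-constraint-inclusion argument, so no substantive difference.
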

\vspace{-2mm}
\begin{proof}
Recall that the value of $E(I)$ (cf. Equation \ref{eq:error}) is equal to the minimum value of the optimization problem (Equation~\ref{eq:LP}). Since $S_l$ is a subset of $S_u$, the set of constraints for solving $E(I_l)$ is also the subset of constraints for solving $E(I_u)$.
Thus, for the minimization problem in Equation~\ref{eq:LP},
the possible solution space for $S_l$ is a superset of the possible solution space for $S_u$.
Therefore, we conclude that $E(I_l) \leq E(I_u)$.
\end{proof}
\vspace{-2mm}


Based on Lemma \ref{lem:mono}, we then show that GS produces the fewest polynomial functions (cf. Theorem \ref{thm:GS_opt}), i.e., the optimal solution.
\vspace{-2mm}
\begin{theorem}
\label{thm:GS_opt}
GS always produces the optimal number of functions (with respect to the given parameters $deg$ and $\delta$).
\end{theorem}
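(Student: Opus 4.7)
The plan is to prove Theorem~\ref{thm:GS_opt} via a standard greedy-stays-ahead exchange argument, with the monotonicity property of Lemma~\ref{lem:mono} as the engine that drives the induction step. Let $\text{Seq}_{\mathbb{P}}^{GS}=\langle I_1^{GS},\ldots,I_p^{GS}\rangle$ be the sequence of intervals produced by Algorithm~\ref{alg:GS} and let $\text{Seq}_{\mathbb{P}}^{OPT}=\langle I_1^{OPT},\ldots,I_q^{OPT}\rangle$ be any feasible partition (every interval satisfies $E(\cdot)\le\delta$) with the minimum possible number of pieces $q$. The goal is to show $p\le q$.

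I would set up the induction on the index $j$ of the intervals, using the invariant: \emph{the right endpoint of $I_j^{GS}$ lies at a key index at least as large as the right endpoint of $I_j^{OPT}$, for every $1\le j\le\min(p,q)$.} The base case $j=1$ is immediate because both partitions start at key $k_1$, and GS extends the first interval until the very first violation of the error constraint; thus its right endpoint cannot end earlier than $I_1^{OPT}$'s, for otherwise GS would have been forced to stop by a violation, and Lemma~\ref{lem:mono} (applied to the superset $I_1^{OPT}$) would then give $E(I_1^{OPT})\ge E(I_1^{GS})>\delta$, contradicting feasibility of OPT.

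For the induction step, suppose the right endpoint of $I_j^{GS}$ is at key index $u_j^{GS}\ge u_j^{OPT}$, so $I_{j+1}^{GS}$ starts at key index $u_j^{GS}+1$ while $I_{j+1}^{OPT}$ starts at $u_j^{OPT}+1\le u_j^{GS}+1$. Consider the candidate interval $I^\ast$ that begins at $u_j^{GS}+1$ and ends at $u_{j+1}^{OPT}$ (if this is nonempty). Because $I^\ast$ is a subset of the keys contained in $I_{j+1}^{OPT}$, Lemma~\ref{lem:mono} gives $E(I^\ast)\le E(I_{j+1}^{OPT})\le\delta$. Since GS only terminates an interval upon detecting a strict violation, $I_{j+1}^{GS}$ must therefore extend at least up to $u_{j+1}^{OPT}$, establishing the invariant at step $j+1$.

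Finally, I would derive the conclusion $p\le q$ from the invariant: after $q$ GS intervals, the invariant forces $u_q^{GS}\ge u_q^{OPT}=n$, so all keys are already covered and the \textbf{for} loop in Algorithm~\ref{alg:GS} terminates, implying $p\le q$. Combined with the optimality of $q$ this yields $p=q$, so GS returns an optimal-cardinality partition. The main obstacle I anticipate is being careful at the interval boundaries — specifically, ruling out off-by-one issues when the greedy interval ends strictly after the optimal one (so $I^\ast$ is a proper subset rather than equal), and making sure the argument correctly invokes Lemma~\ref{lem:mono} on key sets rather than on continuous intervals; the monotonicity lemma is stated exactly in those terms, so the argument goes through cleanly once the invariant is phrased in terms of key indices.
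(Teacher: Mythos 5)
Your proposal is correct and follows essentially the same route as the paper's proof: a greedy-stays-ahead induction on the interval index, with the invariant that the $j$-th GS interval ends no earlier than the $j$-th OPT interval, driven by Lemma~\ref{lem:mono} exactly as the paper's Case-2 contradiction does. The only quibble is the base case, where the interval whose error exceeds $\delta$ is the one GS \emph{tried and rejected} (one key longer than $I_1^{GS}$), not $I_1^{GS}$ itself — precisely the off-by-one boundary issue you already flagged, and it does not affect the argument.
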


\begin{proof}
We denote the minimum key and the maximum key of an interval $I$
by $I.\min$ and $I.\max$, respectively.

Let $\mathcal{I}_{\text{OPT}}^*=(I_{\text{OPT}}^{(1)},I_{\text{OPT}}^{(2)},\cdots)$ and $\mathcal{I}_{\text{GS}}^*=(I_{\text{GS}}^{(1)},I_{\text{GS}}^{(2)},\cdots)$ be two ascending sequences of intervals for the optimal solution and our GS method, respectively (i.e., $I^{(i)}.\max < I^{(i+1)}.\min$ for $i=1,2,...,n-1$).
Every interval $I$ in $\mathcal{I}_{\text{OPT}}^*$ and $\mathcal{I}_{\text{GS}}^*$ must satisfy $E(I) \le \delta$.
We now prove the theorem by mathematical induction.

In the base step, we consider the first interval in each sequence. Since both GS and OPT must cover the key domain, we have:
$$I_{\text{GS}}^{(1)}.\min = I_{\text{OPT}}^{(1)}.\min$$

According to GS, the first interval $I_{\text{GS}}^{(1)}$ is maximal,
because a longer interval would violate the deviation threshold $\delta$.
Thus, we have:
\begin{equation}
\label{eq:initial_end}
I_{\text{GS}}^{(1)}.\max \geq I_{\text{OPT}}^{(1)}.\max
\end{equation}

In the inductive step, assume that the first $\ell$ intervals of the two sequences satisfy the following property:
\begin{equation}
\label{eq:initial_end}
I_{\text{GS}}^{(\ell)}.\max \geq I_{\text{OPT}}^{(\ell)}.\max
\end{equation}

Since $\mathcal{I}_{\text{OPT}}^*$ and $\mathcal{I}_{\text{GS}}^*$ are ascending sequences of intervals,
Equation~\ref{eq:initial_end} implies the following:
\begin{equation}
\label{eq:second_intervals_initial}
I_{\text{GS}}^{(\ell+1)}.\min \geq I_{\text{OPT}}^{(\ell+1)}.\min
\end{equation}
Now, we consider two cases for comparing $I_{\text{GS}}^{(\ell+1)}$ and $I_{\text{OPT}}^{(\ell+1)}$.

{\bf Case 1:}
$$I_{\text{GS}}^{(\ell+1)}.\max \ge I_{\text{OPT}}^{(\ell+1)}.\max$$
In this case, the first $\ell+1$ intervals of GS cover all keys in the first $\ell+1$ intervals of OPT.

{\bf Case 2:}
\begin{equation}
\label{eq:second_intervals_end_assumption}
I_{\text{GS}}^{(\ell+1)}.\max < I_{\text{OPT}}^{(\ell+1)}.\max
\end{equation}

Consider the interval $I'=[I_{\text{GS}}^{(\ell+1)}.\min,I_{\text{OPT}}^{(\ell+1)}.\max]$.
By using Equations \ref{eq:second_intervals_initial} and \ref{eq:second_intervals_end_assumption},
we obtain: $I' \subset I_{\text{OPT}}^{(\ell+1)}$.
By Lemma~\ref{lem:mono}, we get: $E(I') \le E(I_{\text{OPT}}^{(\ell+1)})$.
Since $E(I_{\text{OPT}}^{(\ell+1)}) \le \delta$, we get: $E(I') \le \delta$.

Observe that $I'$ has the same minimum key as  $I_{\text{GS}}^{(\ell+1)}$
but a larger maximum key than $I_{\text{GS}}^{(\ell+1)}$.
Since $I'$ does not pass the error test in GS, we get $E(I') > \delta$.
This contradicts the statement $E(I') \le \delta$.

Therefore, only the first case is true, and we have:
$$I_{\text{GS}}^{(\ell+1)}.\max \ge I_{\text{OPT}}^{(\ell+1)}.\max$$
	
This means GS always covers no fewer keys than OPT with the same number of intervals. Thus, GS produces the optimal number of functions.
\end{proof}

\vspace{-5mm}
\subsection{Indexing of polynomial functions} \label{sec:organizing}
\vspace{-1mm}
In our experimental study, the number of intervals (for polynomial functions) ranges from 100 to 1000.
We adopt existing index structures on these intervals to support fast query evaluation.
Specifically, we employ an in-memory index called the STX B-tree~\cite{STXBtree} to index intervals.
In each internal node entry, we maintain an additional attribute to store the aggregate value of its subtree.
In each leaf node entry, we store an interval and its corresponding polynomial model (in the form of coefficients). 
In summary, this index is similar to the aggregate tree exemplified in Figure \ref{fig:max_index}(b), 
except that we store polynomial models in leaf nodes.

\vspace{-2mm}
\section{Approximate Query Evaluation} \label{sec:query}
\vspace{-1mm}
We present our framework for answering approximate range aggregate queries in Figure~\ref{fig:queryframework}.
The first step is to compute an initial approximate result quickly by using our index (\polyfit{}).
Then, we check whether the error condition is satisfied and refine the approximate result if necessary.
We discuss how to answer the approximate range \verb"SUM" query
and the approximate range \verb"MAX" query in Sections~\ref{sec:count} and \ref{sec:max}, respectively.
Finally, in Section~\ref{sec:choose_degree}, we discuss how to tune our index parameters (e.g., $deg, \delta$)
in order to optimize the query response time.



\begin{figure*}[!hbt]
\center
\includegraphics[width=2.0\columnwidth]{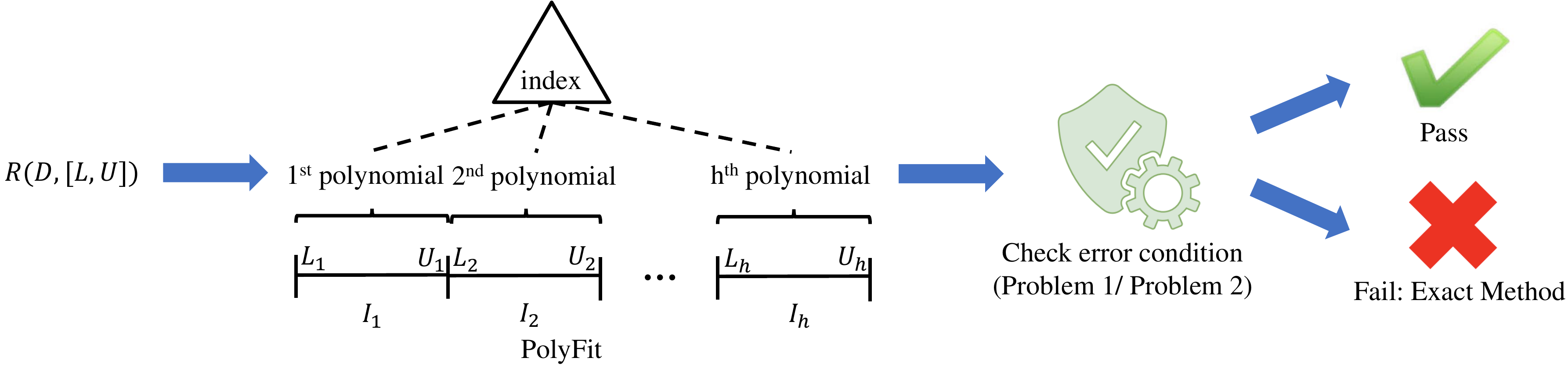}
\vspace{-2.5mm}
\caption{Querying framework for \polyfit{}}
\label{fig:queryframework}
\vspace{-4mm}
\end{figure*}

\vspace{-2mm}
\subsection{Approximate range \texttt{SUM} Query} \label{sec:count}
\vspace{-1mm}
Given the query range $[l_q,u_q]$, we propose to compute the approximate result as:
\begin{equation} \label{eq:approx-sum}
\tilde{A}_{sum}=\mathbb{P}_{I_u}(u_q)-\mathbb{P}_{I_l}(l_q),
\end{equation}

\noindent where $I_l$ and $I_u$ denote the intervals of $\mathbb{P}$ that contain the values $l_q$ and $u_q$, respectively.


Then, we show the error conditions for
$Q_{abs}$ (cf. Problem \ref{prob:abs_error}) and $Q_{rel}$ (cf. Problem \ref{prob:rel_error}).


\stitle{Error condition for $Q_{abs}$}
Given the absolute error $\varepsilon_{abs}$,
we recommend to use the deviation threshold $\delta=\frac{\varepsilon_{abs}}{2}$ in constructing \polyfit{}.
With this setting, the following lemma offers the absolute error guarantee
for the approximate result $\tilde{A}_{sum}$ (in Equation~\ref{eq:approx-sum}).


\begin{lemma}
\label{lem:count_abs_error}
If $\delta=\frac{\varepsilon_{abs}}{2}$,
then $\tilde{A}_{sum}$ (in Equation~\ref{eq:approx-sum}) satisfies the absolute error guarantee with respect to $\varepsilon_{abs}$.
\end{lemma}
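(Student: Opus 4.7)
The plan is a direct triangle inequality argument anchored on the per-interval fitting guarantee that GS enforces. First I would recall that for every interval $I$ produced by Algorithm~\ref{alg:GS}, the fitting error satisfies $E(I)\le \delta$, and that by the choice $\delta=\varepsilon_{abs}/2$, each interval's polynomial satisfies $|\mathbb{P}_I(k)-F(k)|\le \varepsilon_{abs}/2$ at the relevant evaluation points, where $F=CF_{sum}$ for the \texttt{SUM} setting (Equation~\ref{eq:Fk}).

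Next I would instantiate this bound at $k=l_q$ and $k=u_q$, obtaining
\[
\bigl|\mathbb{P}_{I_l}(l_q)-CF_{sum}(l_q)\bigr|\le \tfrac{\varepsilon_{abs}}{2},\qquad
\bigl|\mathbb{P}_{I_u}(u_q)-CF_{sum}(u_q)\bigr|\le \tfrac{\varepsilon_{abs}}{2}.
\]
Then, plugging in the exact expression $R_{sum}(\mathcal{D},[l_q,u_q])=CF_{sum}(u_q)-CF_{sum}(l_q)$ from Equation~\ref{eq:R_c_cum} and the definition of the approximation $\tilde{A}_{sum}=\mathbb{P}_{I_u}(u_q)-\mathbb{P}_{I_l}(l_q)$ from Equation~\ref{eq:approx-sum}, the triangle inequality gives
\[
\bigl|\tilde{A}_{sum}-R_{sum}\bigr|
\le \bigl|\mathbb{P}_{I_u}(u_q)-CF_{sum}(u_q)\bigr|+\bigl|\mathbb{P}_{I_l}(l_q)-CF_{sum}(l_q)\bigr|
\le 2\delta=\varepsilon_{abs},
\]
which is exactly the condition in Equation~\ref{eq:abs_error}.

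The one step I expect to be slightly delicate is the very first one, because Definition~\ref{def:min_error_problem} only bounds $|\mathbb{P}(k_i)-F(k_i)|$ at the actual dataset keys $k_i$, whereas $l_q$ and $u_q$ are arbitrary query values that may fall strictly between two consecutive keys. Here I would use the fact that $CF_{sum}$ is a right-continuous step function that only jumps at dataset keys: for any $l_q$ lying in $[k_i,k_{i+1})$, $CF_{sum}(l_q)=CF_{sum}(k_i)$, so the bound at the largest dataset key not exceeding $l_q$ transfers to $l_q$ itself, provided the query point is mapped to that key's polynomial value (and analogously for $u_q$). Once this identification is made explicit, the triangle inequality step is routine and the factor of two on $\delta$ is tight, justifying the choice $\delta=\varepsilon_{abs}/2$.
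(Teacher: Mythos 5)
Your proof is correct and follows essentially the same route as the paper's: bound the polynomial's deviation from $CF_{sum}$ at the two query endpoints by $\delta$, combine via the triangle inequality with $R_{sum}(\mathcal{D},[l_q,u_q])=CF_{sum}(u_q)-CF_{sum}(l_q)$, and conclude from $2\delta=\varepsilon_{abs}$. Your closing remark about query points falling strictly between dataset keys addresses a detail the paper's proof silently glosses over, and resolving it via the step-function structure of $CF_{sum}$ is the right fix.
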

\begin{proof}
Let $I_l$ and $I_u$ be two intervals (in \polyfit{}) which contain $l_q$ and $u_q$
(of the query range $[l_q,u_q]$), respectively.
Based on the deviation threshold guarantee in Section~\ref{sec:OPT}, we obtain:
$$|CF_{sum}(l_q)-\mathbb{P}_{I_l}(l_q)| \leq \delta$$
$$|CF_{sum}(u_q)-\mathbb{P}_{I_u}(u_q)| \leq \delta$$

By combining them, we have:
{\small $$CF_{sum}(u_q)-CF_{sum}(l_q)-2\delta \leq \tilde{A}_{sum} \leq CF_{sum}(u_q)-CF_{sum}(l_q)+2\delta$$}
By using Equation~\ref{eq:R_c_cum}, we have:
$$R_{sum}(\mathcal{D},[l_q,u_q])-2\delta \leq \tilde{A}_{sum} \leq R_{sum}(\mathcal{D},[l_q,u_q])+2\delta$$
Since $\delta=\frac{\varepsilon_{abs}}{2}$, $\tilde{A}_{sum}$ satisfies the absolute error guarantee $\varepsilon_{abs}$.
\end{proof}

\vspace{-3mm}
\stitle{Error condition for $Q_{rel}$}
In this scenario, there is no specific preference for
setting the deviation threshold $\delta$ when constructing \polyfit{}.
The following lemma suggests a condition to test whether $\tilde{A}_{sum}$ satisfies the relative error guarantee.
If this test fails, we resort to the exact method (cf. Section \ref{sec:exact_count}) to obtain the exact result.



\begin{lemma}
\label{lem:count_rel_error}
If $\tilde{A}_{sum} \geq 2\delta(1+\frac{1}{\varepsilon_{rel}})$,
then $\tilde{A}_{sum}$ (in Equation~\ref{eq:approx-sum}) satisfies the relative error guarantee with respect to $\varepsilon_{rel}$.
\end{lemma}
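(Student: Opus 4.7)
The plan is to reduce the relative-error claim to the absolute-error bound already established inside the proof of Lemma~\ref{lem:count_abs_error}. Specifically, the same derivation that appears there shows, without any assumption on $\delta$, that
\[
|\tilde{A}_{sum} - R_{sum}(\mathcal{D},[l_q,u_q])| \leq 2\delta,
\]
simply by applying the deviation threshold guarantee from Section~\ref{sec:OPT} to each of $\mathbb{P}_{I_l}(l_q)$ and $\mathbb{P}_{I_u}(u_q)$. So the first step of my proof is to invoke that inequality directly.

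Next, I would use this absolute bound together with the hypothesis $\tilde{A}_{sum} \geq 2\delta(1+\tfrac{1}{\varepsilon_{rel}})$ to derive a lower bound on the true answer. From $R_{sum} \geq \tilde{A}_{sum} - 2\delta$ and the hypothesis, we obtain
\[
R_{sum}(\mathcal{D},[l_q,u_q]) \;\geq\; 2\delta\Bigl(1 + \tfrac{1}{\varepsilon_{rel}}\Bigr) - 2\delta \;=\; \frac{2\delta}{\varepsilon_{rel}}.
\]
In particular $R_{sum} > 0$ (so division by it is legitimate and the sign of the ratio is unambiguous; this is consistent with the non-negative measure assumption made in Section~\ref{sec:background}, with the negative-measure case deferred to the appendix).

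Finally, combining the two displayed inequalities gives
\[
\left| \frac{\tilde{A}_{sum} - R_{sum}(\mathcal{D},[l_q,u_q])}{R_{sum}(\mathcal{D},[l_q,u_q])} \right| \;\leq\; \frac{2\delta}{R_{sum}(\mathcal{D},[l_q,u_q])} \;\leq\; \frac{2\delta}{2\delta/\varepsilon_{rel}} \;=\; \varepsilon_{rel},
\]
which is exactly the relative error guarantee required by Problem~\ref{prob:rel_error}.

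There is no real obstacle here: the lemma is essentially a reformulation of the $2\delta$ absolute bound into a multiplicative one, and the threshold $2\delta(1+1/\varepsilon_{rel})$ is tuned precisely so that after subtracting the worst-case slack $2\delta$ the remainder is still at least $2\delta/\varepsilon_{rel}$. The only modelling subtlety worth flagging explicitly in the writeup is that the argument uses $R_{sum}\geq 0$; this is inherited from the standing assumption on measure values, and the proof would need minor adjustment in the signed-measure extension handled in the appendix.
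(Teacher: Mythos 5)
Your proposal is correct and follows essentially the same route as the paper's proof: both derive the absolute bound $|\tilde{A}_{sum}-R_{sum}|\le 2\delta$ from the deviation threshold guarantee, use it together with the hypothesis to lower-bound $R_{sum}$ away from zero, and then bound the relative error by $\varepsilon_{rel}$. The only cosmetic difference is that you substitute the hypothesis early to get the explicit lower bound $R_{sum}\ge 2\delta/\varepsilon_{rel}$, whereas the paper keeps the denominator as $\tilde{A}_{sum}-2\delta$ and invokes the hypothesis at the end; the arithmetic is identical.
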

\begin{proof}
Like in the proof of Lemma~\ref{lem:count_abs_error}, we can derive
Equations~\ref{eq:numerator} and~\ref{eq:denominator}.
\begin{equation}
\label{eq:numerator}
|\tilde{A}_{sum}-R_{sum}(\mathcal{D},[l_q,u_q])| \leq 2\delta
\end{equation}
which also implies (by simple derivations):
\begin{equation}
\begin{split}
\label{eq:denominator}
R_{sum}(\mathcal{D},[l_q,u_q])\geq \tilde{A}_{sum} - 2\delta
\end{split}
\end{equation}

Since $\delta$ and $\varepsilon_{rel}$ must be positive,
the given condition $\tilde{A}_{sum} \geq 2\delta(1+\frac{1}{\varepsilon_{rel}})$
implies that $\tilde{A}_{sum} > 2\delta$ and $\frac{2\delta}{\tilde{A}_{sum}-2\delta}\leq \varepsilon_{rel}$.

Dividing Equation \ref{eq:numerator} by Equation \ref{eq:denominator}, we obtain the following inequality (under the condition $\tilde{A}_{sum} > 2\delta$).
$$\frac{|\tilde{A}_{sum}-R_{sum}(\mathcal{D},[l_q,u_q])|}{R_{sum}(\mathcal{D},[l_q,u_q])} \leq \frac{2\delta}{\tilde{A}_{sum}-2\delta}$$

This completes the proof because $\frac{2\delta}{\tilde{A}_{sum}-2\delta}\leq \varepsilon_{rel}$.
%
\end{proof}

\stitle{The overall query algorithm}
We summarize the query algorithm for both types of error guarantees in Algorithm~\ref{alg:query_processing_count}.
The processing for $Q_{abs}$ is composed of two parts: index search $T_1$ (i.e., Lines 1-2) and function evaluation $T_2$ (i.e., Line 3).
The processing for $Q_{rel}$ includes $T_1$, $T_2$, and possible refinement $T_3$ (i.e., Lines 4-6).
The time complexity of $T_1, T_2$, and $T_3$ are $O(\log(|\text{Seq}_{\mathbb{P}}|))$, $O(deg)$, and $O(\log |\mathcal{D}|)$ respectively.

\vspace{-3mm}
\begin{algorithm}[hbt]
	\small
	\caption{Query Processing for SUM (or COUNT)}
	\label{alg:query_processing_count}
	\begin{algorithmic}[1]
		\Statex {\bf Input:} $\text{Seq}_{\mathbb{P}}$ (output from Algorithm \ref{alg:GS}), $l_q$, $u_q$, $\mathcal{D}$, $\delta$, $Q_{type}$
		\Statex {\bf Output:} Approximate query result $A$
		\State $\mathbb{P}_{I_l} \leftarrow$ index search $\mathbb{P}$ from $\text{Seq}_{\mathbb{P}}$ that includes $l_q$
		\State $\mathbb{P}_{I_u} \leftarrow$ index search $\mathbb{P}$ from $\text{Seq}_{\mathbb{P}}$ that includes $u_q$
		\State $\tilde{A}_{sum} \leftarrow \mathbb{P}_{I_u}(u_q) - \mathbb{P}_{I_l}(l_q)$
		\If{$Q_{type}=Q_{rel}$}
    		\If{$\tilde{A}_{sum}$ fails the error condition of Lemma~\ref{lem:count_rel_error}}
                \State $\tilde{A}_{sum} \leftarrow$ perform refinement on $\mathcal{D}$ \Comment Section \ref{sec:exact_count}
            \EndIf
		\EndIf
		\State \Return $\tilde{A}_{sum}$
	\end{algorithmic}
\end{algorithm}
\vspace{-3mm}

\vspace{-3mm}
\subsection{Approximate range \texttt{MAX} Query} \label{sec:max}
The query method described in Section~\ref{sec:exact_max} can be applied here,
except that we employ the index described in Section~\ref{sec:organizing}.

Given the query range $[l_q,u_q]$, we propose to compute the approximate result as:
\begin{equation} \label{eq:approx-max}
\begin{split}
\tilde{A}_{max}= \max \{
\max_{k \in I_l, k \ge l_q} \mathbb{P}_{I_l}(k),
\max_{k \in I_u, k \le u_q} \mathbb{P}_{I_u}(k), \\
\max_{N_j.I \subseteq [l_q,u_q]}  N_j.max  \}
\end{split}
\end{equation}

\noindent where $N_j$ denotes an internal node of the index built on top of $\text{Seq}_{\mathbb{P}}$. $I_l$ and $I_u$ denote the intervals of $\mathbb{P}$ that contain the values $l_q$ and $u_q$, respectively.

The error conditions for $Q_{abs}$ and $Q_{rel}$ are presented in
Lemmas \ref{lem:max_abs_error} and \ref{lem:max_rel_error} respectively.
We omit their proofs; they are similar to the proofs of Lemmas \ref{lem:count_abs_error} and \ref{lem:count_rel_error}.

\begin{lemma} \label{lem:max_abs_error}
If $\delta=\varepsilon_{abs}$, then
$\tilde{A}_{max}$ (in Equation~\ref{eq:approx-max}) satisfies the absolute error guarantee $\varepsilon_{abs}$.
\end{lemma}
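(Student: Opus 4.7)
The plan is to mirror the proof of Lemma~\ref{lem:count_abs_error}, exploiting that for \texttt{MAX} the approximation error propagates through a single polynomial evaluation rather than a difference of two, so the looser setting $\delta=\varepsilon_{abs}$ (instead of $\varepsilon_{abs}/2$) is enough.

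First, I would decompose the exact answer according to how the query range interacts with the partition produced by Algorithm~\ref{alg:GS}, writing $R_{max}(\mathcal{D},[l_q,u_q]) = \max\{M_l,\,M_{mid},\,M_u\}$, where $M_l$ and $M_u$ are the true maxima of $DF_{max}$ on the two partially-covered boundary intervals $I_l\cap[l_q,u_q]$ and $I_u\cap[l_q,u_q]$, and $M_{mid}$ is the true maximum over those fitted intervals that lie entirely inside $[l_q,u_q]$. This three-way split matches the three arguments of the outer $\max$ in Equation~\ref{eq:approx-max}.

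Second, I would bound each summand. For the fully-covered middle intervals, the aggregate tree of Section~\ref{sec:organizing} stores the exact per-subtree maxima $N_j.max$, so $M_{mid}$ is recovered without any error. For $I_l$, the construction guarantees $|DF_{max}(k_i) - \mathbb{P}_{I_l}(k_i)| \le \delta$ at every data key $k_i\in I_l$ (cf. Equation~\ref{eq:error} and Section~\ref{sec:OPT}), and the standard fact that $\max$ is $1$-Lipschitz in the sup-norm then gives $|\max_{k_i} \mathbb{P}_{I_l}(k_i) - M_l|\le\delta$. An identical argument bounds the error on $I_u$ by $\delta$. Since the outer $\max$ is also $1$-Lipschitz in each argument, combining these three bounds yields $|\tilde{A}_{max} - R_{max}(\mathcal{D},[l_q,u_q])|\le\delta=\varepsilon_{abs}$, which is precisely the absolute error guarantee.

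The main obstacle I foresee is a subtle definitional mismatch: Equation~\ref{eq:approx-max} writes $\max_{k\in I_l,\,k\ge l_q}\mathbb{P}_{I_l}(k)$ as a \emph{continuous} maximum over the interval (as suggested by the paper's hint of locating extrema via differentiation), while the deviation guarantee $E(I)\le\delta$ from Definition~\ref{def:min_error_problem} is enforced only at the discrete data keys $k_i\in I$. A polynomial that fits $DF_{max}$ to within $\delta$ at the keys could in principle overshoot between adjacent keys, so the $1$-Lipschitz step above does not automatically apply to the continuous max. To close this gap, I would either (i) restrict the maxima in Equation~\ref{eq:approx-max} to the data keys lying in the relevant sub-interval, or (ii) strengthen the fitting LP of Equation~\ref{eq:LP} with piecewise-constant constraints that force $|\mathbb{P}(k)-DF_{max}(k)|\le\delta$ uniformly on $I$. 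Either adjustment legitimizes the Lipschitz step and the rest of the proof is routine.
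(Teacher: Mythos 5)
Your proof is correct and follows essentially the approach the paper intends: the paper omits the proof of Lemma~\ref{lem:max_abs_error}, stating only that it parallels Lemma~\ref{lem:count_abs_error}, and your three-way decomposition matching the arguments of the outer $\max$ in Equation~\ref{eq:approx-max} --- with the fully-covered middle term recovered exactly from the stored aggregates and each boundary term off by at most $\delta$ via the $1$-Lipschitz property of $\max$ --- is precisely that parallel, yielding total error $\delta=\varepsilon_{abs}$. The discrete-versus-continuous mismatch you flag is a genuine looseness in the paper's formulation rather than a gap in your argument: the same issue already arises in the proof of Lemma~\ref{lem:count_abs_error}, where $\mathbb{P}_{I_l}(l_q)$ is evaluated at a query endpoint that need not be a data key even though $E(I)\le\delta$ is enforced only at the keys, so either of your proposed repairs (restricting the maxima to data keys, or strengthening the fitting constraints to hold uniformly on $I$) would be needed to make both lemmas fully rigorous.
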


\begin{lemma} \label{lem:max_rel_error}
If $\tilde{A}_{max}\geq \delta(1+\frac{1}{\varepsilon_{rel}})$, then
$\tilde{A}_{max}$ (in Equation~\ref{eq:approx-max}) satisfies the relative error guarantee $\varepsilon_{rel}$.
\end{lemma}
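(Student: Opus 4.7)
The plan is to mirror the structure of Lemma~\ref{lem:count_rel_error}, but using the tighter $\delta$-bound (rather than $2\delta$) that Lemma~\ref{lem:max_abs_error} already provides for the MAX aggregate. First I would record the absolute-error estimate established in (the proof of) Lemma~\ref{lem:max_abs_error}, namely $|\tilde{A}_{max} - R_{max}(\mathcal{D},[l_q,u_q])| \le \delta$. The reason this works with $\delta$ rather than $2\delta$ is that $\tilde{A}_{max}$ is a maximum of values each of which is either an exact aggregate (the $N_j.\textit{max}$ terms) or a polynomial approximation of a true sub-maximum with error at most $\delta$; since $|\max_i a_i - \max_i a_i^\ast| \le \max_i |a_i - a_i^\ast|$, the overall deviation cannot exceed $\delta$.

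Next I would derive the companion lower bound $R_{max}(\mathcal{D},[l_q,u_q]) \ge \tilde{A}_{max} - \delta$ by unpacking the absolute-value inequality. This is the analogue of Equation~\ref{eq:denominator} in the proof of Lemma~\ref{lem:count_rel_error}, and it is the step that lets us replace the unknown true answer in the denominator of the relative-error ratio by a quantity we can compute at query time.

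Then I would combine the hypothesis $\tilde{A}_{max} \ge \delta\bigl(1 + \tfrac{1}{\varepsilon_{rel}}\bigr)$ with this lower bound. Rearranging the hypothesis gives $\tilde{A}_{max} - \delta \ge \delta/\varepsilon_{rel} > 0$, so division is safe and
\begin{equation*}
\frac{\delta}{\tilde{A}_{max} - \delta} \le \varepsilon_{rel}.
\end{equation*}
Dividing the absolute-error bound by the lower bound on $R_{max}$, we then obtain
\begin{equation*}
\frac{|\tilde{A}_{max} - R_{max}(\mathcal{D},[l_q,u_q])|}{R_{max}(\mathcal{D},[l_q,u_q])} \le \frac{\delta}{\tilde{A}_{max} - \delta} \le \varepsilon_{rel},
\end{equation*}
which is exactly the relative-error guarantee required by Problem~\ref{prob:rel_error}.

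The only genuinely non-routine part is justifying the single-$\delta$ (rather than $2\delta$) absolute-error bound, since Equation~\ref{eq:approx-max} aggregates several polynomial maxima together with exact subtree aggregates. I would dispatch this by the $|\max - \max| \le \max|\cdot - \cdot|$ inequality applied termwise, using that each internal-node contribution is exact and each polynomial term differs from the true per-segment maximum by at most $\delta$ thanks to the pointwise deviation guarantee from Section~\ref{sec:OPT}. Once that bound is in hand, the rest of the argument is algebraic manipulation identical in spirit to Lemma~\ref{lem:count_rel_error}.
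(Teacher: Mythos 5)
Your proposal is correct and follows exactly the route the paper intends: the paper omits this proof, stating only that it is ``similar to the proof of Lemma~\ref{lem:count_rel_error},'' and your argument is precisely that adaptation, with the single-$\delta$ absolute bound (justified via $|\max_i a_i - \max_i a_i^\ast| \le \max_i |a_i - a_i^\ast|$ and the exactness of the $N_j.max$ terms) replacing the $2\delta$ bound of the \texttt{SUM} case, consistent with Lemma~\ref{lem:max_abs_error}. The remaining algebra matches Equations~\ref{eq:numerator} and~\ref{eq:denominator} of the paper's \texttt{SUM} proof term for term.
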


We now discuss how to evaluate Equation~\ref{eq:approx-max} in greater detail.
The third term is contributed by the inner nodes of the aggregate R-tree whose intervals are covered by $[l_q,u_q]$.
Regarding the first two terms, it suffices to find the maximum values for $\mathbb{P}_{I_l}(k)$ and $\mathbb{P}_{I_u}(k)$ in regions $[l_q, U_{I_l}]$ and $[L_{I_u}, u_q]$, as shown in Figure~\ref{fig:max_query}, where $U_{I_l}$ ($L_{I_u}$) is the upper (lower) end of the leaf node interval that $l_q$ ($u_q$) overlaps.
These values (i.e., red dots) can be calculated by checking the border points and the zero derivative points.


\begin{figure}[!hbt]
\vspace{-2mm}
\center
\includegraphics[width=0.9\columnwidth]{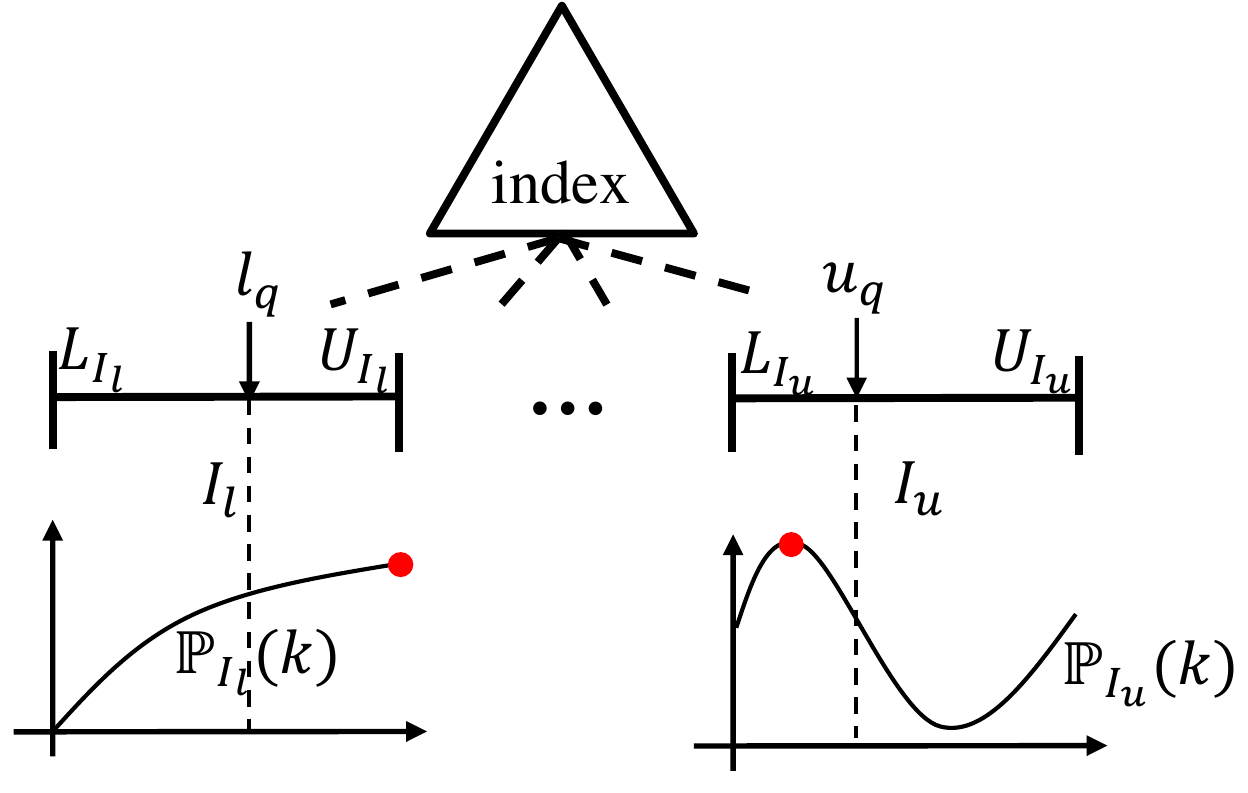}
\vspace{-3.5mm}
\caption{The maximum measure values (red dots) for two leaf nodes, which include $l_q$ and $u_q$}
\label{fig:max_query}
\vspace{-4.5mm}
\end{figure}



\stitle{The overall query algorithm}
We conclude the query algorithm for both types of error guarantees in Algorithm \ref{alg:query_processing_max}.
The processing for $Q_{abs}$ consists of two parts: index search $T_1$ (i.e., Line 3) and function evaluation $T_2$ (Lines 8-9).
The processing for $Q_{rel}$ includes $T_1, T_2$, and possible refinement $T_3$ (i.e., Lines 10-12).
The time complexities of $T_1$ and $T_3$ are still $O(\log(|\text{Seq}_{\mathbb{P}}|))$ and $O(\log |\mathcal{D}|)$.
However, for $T_2$, this includes calculating the zero derivative points within the intersection region.
If the degree is between 1 and 5, closed-form equations exist,
where the number of arithmetic operations in these cases are summarized in Table~\ref{tab:max_degree}.
Starting from degree 6, there is no closed-form equations, and thus require expensive
numerical evaluation methods like gradient descent~\cite{stewart2015galois}.
In practice, we recommend to use degrees up to 3 for the approximate range \verb"MAX" query.


\vspace{-3mm}
\begin{algorithm}[hbt]
	\small
	\caption{Query Processing for MAX (or MIN)}
	\label{alg:query_processing_max}
	\begin{algorithmic}[1]
		\Statex {\bf Input:} Aggregate R-tree $N$ on $\text{Seq}_{\mathbb{P}}$, $l_q$, $u_q$, $\mathcal{D}$, $\delta$, $Q_{type}$
		\Statex {\bf Output:} Approximate query result $A$
		\State $\tilde{A}_{max} \leftarrow -\infty$
		\If{$N$ is an internal node}
		\State update $\tilde{A}_{max}$ based on aggregate R-tree's mechanism
		\Else{}
		\For{leaf element $\mathbb{P}$ in $N$}
		\If{$\mathbb{P}.I \cap [l_q, u_q] \ne \varnothing$} \Comment the interval $\mathbb{P}$ covered
		\State $I^* \leftarrow \mathbb{P}.I \cap [l_q, u_q]$
		\State $\beta \leftarrow \{x \in I^* \; | \; \mathbb{P}'(x) = 0 \}$ \Comment zero derivative points
		\State $\tilde{A}_{max} \leftarrow \max( \tilde{A}_{max}, \max_{x \in \beta} \mathbb{P}(x), \mathbb{P}(I^*.l), \mathbb{P}(I^*.u)))$
		\EndIf
		\EndFor
		\EndIf
		\If{$N$ is root node and $Q_{type}=Q_{rel}$}
		\If{$\tilde{A}_{max}$ fails the error condition of Lemma~\ref{lem:max_rel_error}}
		\State $\tilde{A}_{max} \leftarrow$ perform refinement on $\mathcal{D}$ \Comment Section \ref{sec:exact_max}
		\EndIf
		\EndIf
		\State \Return $A$
	\end{algorithmic}
\end{algorithm}
\vspace{-3mm}

\vspace{-2mm}
\begin{table}[!htb]
	\centering
	\caption{Number of arithmetic operations for calculating zero derivative points}
	\label{tab:max_degree}
	\vspace{-3mm}
	\begin{tabular}{|c|c|c|c|c|c|}
		\hline
		degree & 1 & 2 & 3 & 4 & 5 \\ \hline
		operations & 0 & 2 & up to 18 & up to 261 & up to 1612  \\ \hline
	\end{tabular}
	\vspace{-2mm}
\end{table}


\vspace{-3mm}
\subsection{Tuning $deg$ and $\delta$}
\label{sec:choose_degree}
We discuss the effect of our index parameters (i.e., $deg, \delta$)
on the query response time and examine how to tune them.

\begin{figure}[!hbt]
	\centering
	\vspace{-0.5mm}
	\includegraphics[width=0.8\columnwidth]{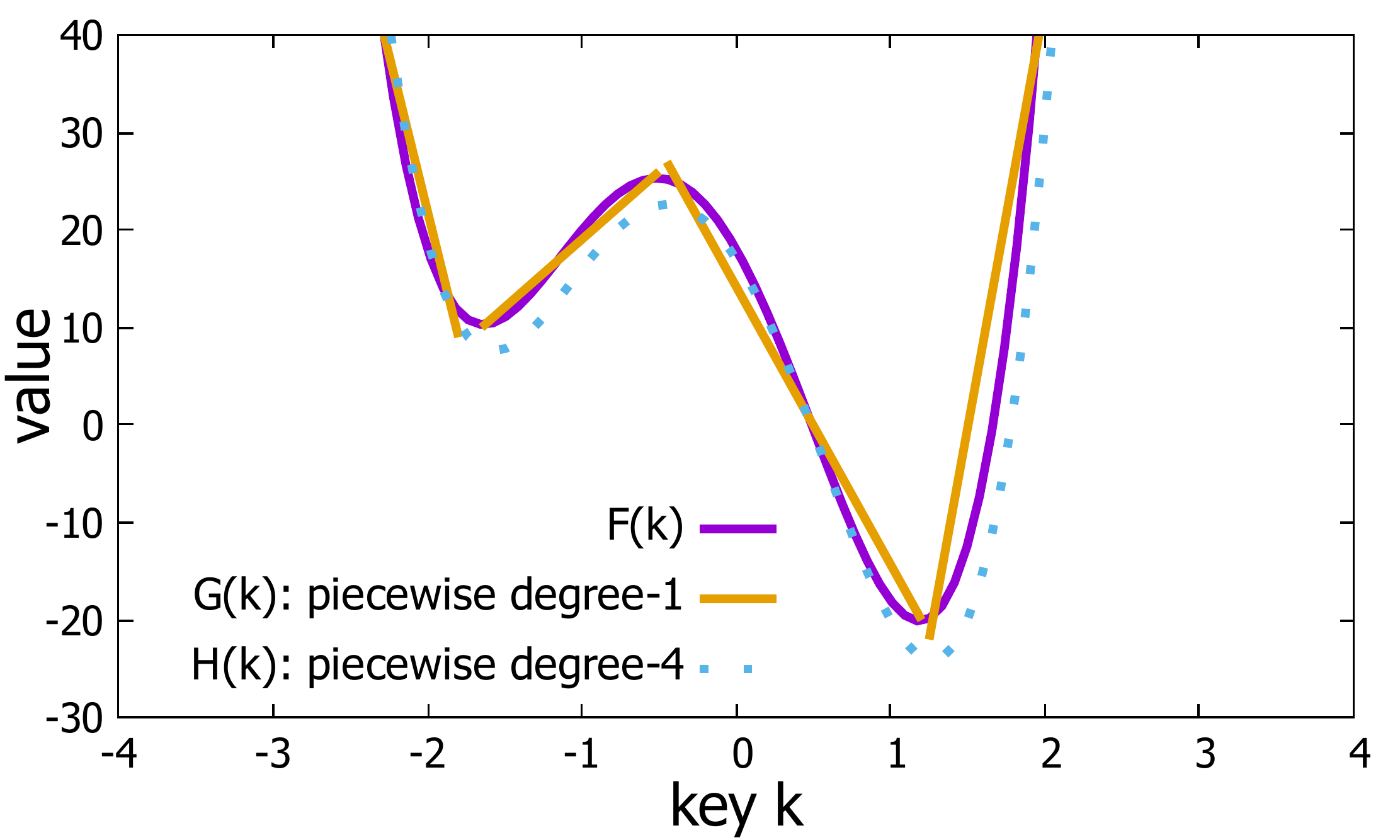}
	\vspace{-2.5mm}
	\caption{An example of degree selection}
	\vspace{-3.5mm}
	\label{fig:choosedegree}
\end{figure}

\stitle{How to tune the degree $deg$?}
The exact function $F(k)$ is approximated by
different polynomial functions with different degrees.
For instance, in Figure~\ref{fig:choosedegree},
the exact function $F(k)$ is approximated, among others, by the following functions (within the deviation threshold $\delta$):
(i) a piecewise function $G(k)$ with four pieces of degree-1 functions, or
(ii) a single-piece function $H(k)$ of degree-4.
Based on our experimental findings (cf. Section~\ref{sec:exp_polyfit_analysis}), we recommend to set the degree to 2 or 3.
In general, one could generate a random workload of queries to measure the performance of an index,
and then test the performance of index structures using different degrees (e.g., from 1 to 4).


	
As a remark, it is not practical to use large degree,
due to the limited precision of numeric data types in both the linear programming solver and the programming language~\cite{CPLEXPrecision1, CPLEXPrecision2}.
For example, IBM CPLEX uses $\kappa$ (kappa) as a statistical measurement of numerical difficulties.
In our experiments, the $\kappa$ value of a degree-4 polynomial (1E+10) is much higher than that of a degree-1 polynomial (1E+05).

\stitle{How to tune $\delta$?}
The tuning of $\delta$ depends on the most frequent query type used in the given application.
For $Q_{abs}$ (i.e., Problem \ref{prob:abs_error}),
if all users share the same absolute error threshold $\varepsilon_{abs}$, then it is used to
derive the value of $\delta$, according to Lemmas \ref{lem:count_abs_error} and \ref{lem:max_abs_error}.
Otherwise, we can select the value of $\delta$ such that it satisfies the error requirements
for the majority of users (e.g., 80\%).

For $Q_{rel}$ (i.e., Problem \ref{prob:rel_error}),
the processing includes three phases: index search, function evaluation, and refinement
(cf. Algorithms \ref{alg:query_processing_count} and \ref{alg:query_processing_max}).
A large $\delta$ leads to fast index search but high refinement probability.
In contrast, a small $\delta$ leads to slow index search but low refinement probability.
Observe that refinement is often more expensive than index search.
We recommend to pick a small $\delta$ such that
most users avoid the refinement phase.
In our experiments, we examine different values of $\delta$ (e.g., 25, 50, 100,
200, 500, and 1000) to identify the best setting in terms of the query response time.



\vspace{-2mm}
\section{Extensions: Queries with Two Keys}
\vspace{-1mm}
\label{sec:extend_2d}
Previous sections consider range aggregate queries with a single key (cf. Definition \ref{def:RAQ}). 
We now discuss how to support range aggregate queries with two keys (cf. Definition \ref{def:2D_Approx_RAQ}). 
Due to the space limit, we only consider the \verb"COUNT" query. In Appendix A.5 \cite{ZTMC20_arxiv}, we discuss the case of more than two keys.

\begin{definition}
\label{def:2D_Approx_RAQ}
Let $\mathcal{D}$ be a set of records $(u, v, w)$, where $u$, $v$, and $w$ are the first key, the second key, and the measure, respectively.
Given the query ranges $[l_q^{(1)}, u_q^{(1)}]$ and $[l_q^{(2)}, u_q^{(2)}]$
for $u$ and $v$, respectively, we define the \verb"COUNT" query as: 
\vspace{-0.1cm}
{
\small
\begin{equation}
R_{count}(\mathcal{D},[l_q^{(1)}, u_q^{(1)}][l_q^{(2)}, u_q^{(2)}])=\verb"COUNT"(V)
\end{equation}
}
where $V$ is the multi-set of measure values defined below:
$$V= \{ m: (k^{(1)}, k^{(2)}, m) \in \mathcal{D}, l_q^{(1)} \leq k^{(1)} \leq u_q^{(1)}, l_q^{(2)} \leq k^{(2)} \leq u_q^{(2)} \}$$
\end{definition}

We build the following key-cumulative function to represent the surface (cf. Figure \ref{fig:example2D}), which is formulated in Definition \ref{def:accu2D}.
\begin{definition}
\label{def:accu2D}
The key-cumulative function with two keys for \verb"COUNT" query is defined as $CF_{count}(u,v)$, where:
\begin{eqnarray}
\label{eq:accu2d}
CF_{count}(u,v) = R_{count}(\mathcal{D}[-\infty, u][-\infty, v])
\end{eqnarray}
\end{definition}

\begin{figure}[!hbt]
\vspace{-5mm}
\center
\subfloat[tweet locations as data points]{ \includegraphics[width=0.5\columnwidth]{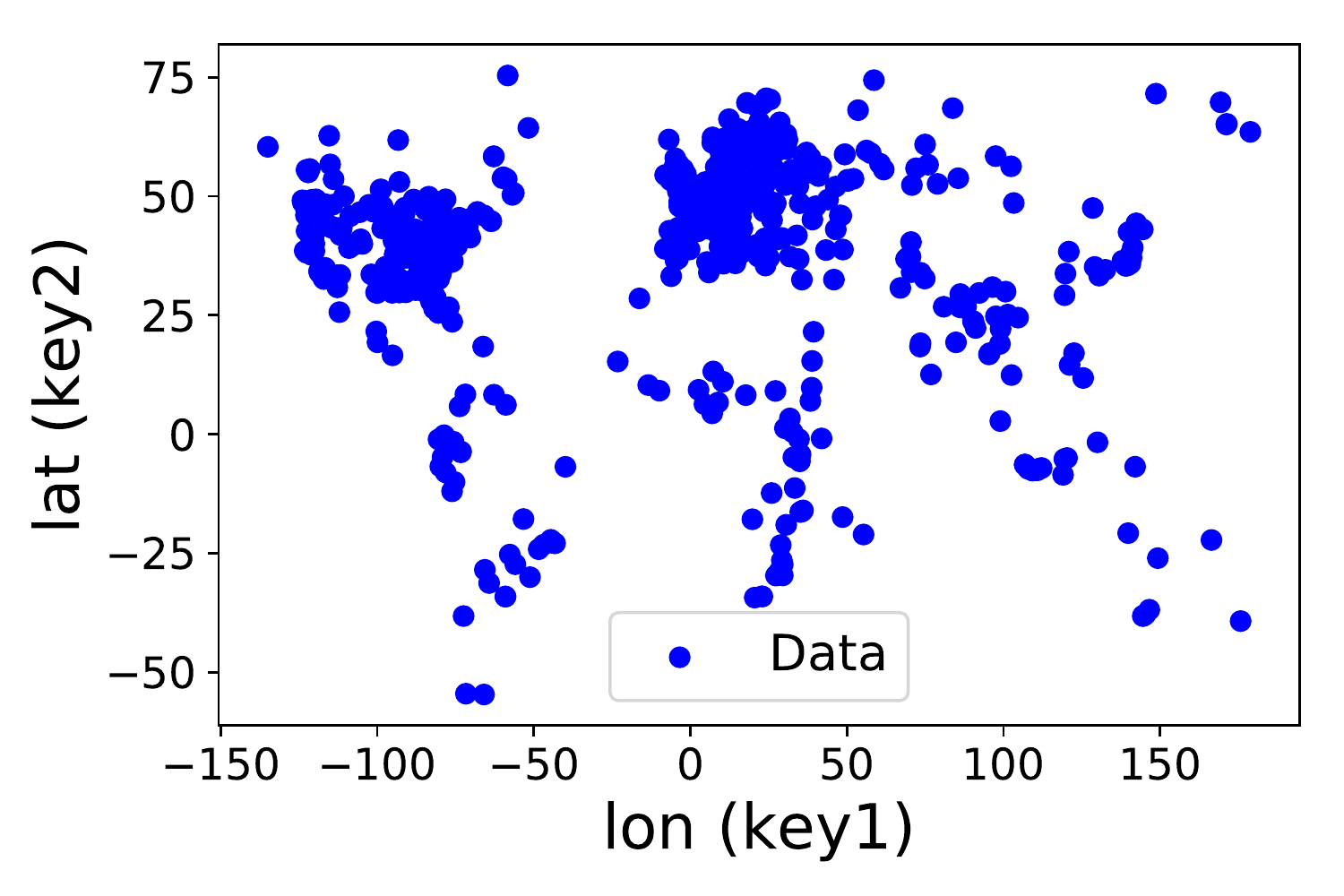} }
\subfloat[function for range COUNT queries]{ \includegraphics[width=0.5\columnwidth]{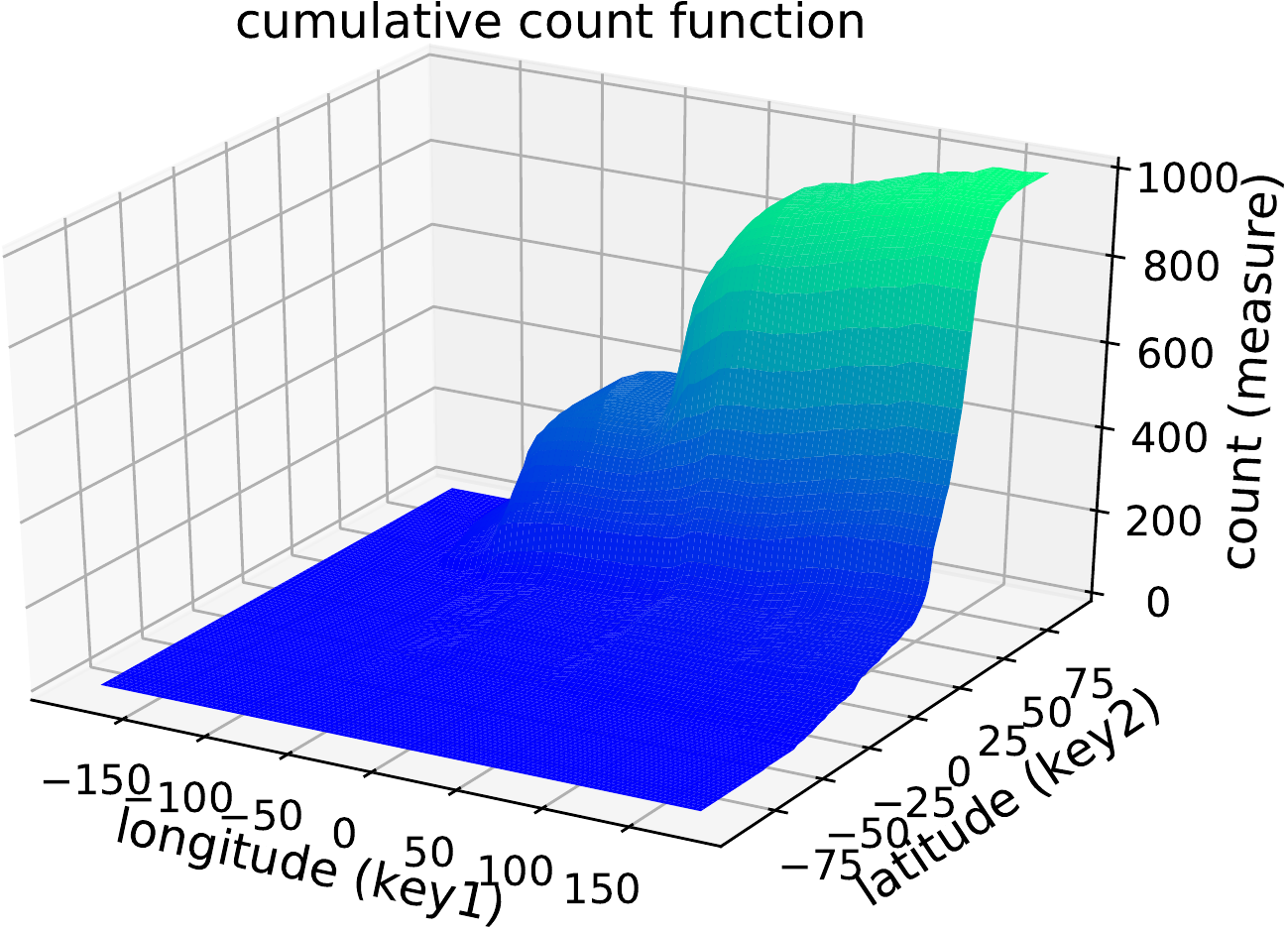} }
\vspace{-3mm}
\caption{Tweet locations, 2-dimensional keys: discrete data points vs. continuous function}
\label{fig:example2D}
\vspace{-3mm}
\end{figure}


The following equation enables us to answer the \verb"COUNT" query quickly.
{
\vspace{-1mm}
\small
\begin{equation*}
\begin{split}
R_{count}(\mathcal{D}[l_q^{(1)},u_q^{(1)}][l_q^{(2)},u_q^{(2)}])=CF_{count}(u_q^{(1)},u_q^{(2)})-CF_{count}(l_q^{(1)},u_q^{(2)})\\
-CF_{count}(u_q^{(1)},l_q^{(2)})+CF_{count}(l_q^{(1)},l_q^{(2)})
\end{split}
\end{equation*}
\vspace{-2mm}
}

Then, we follow an idea similar to that used in Section \ref{sec:min_error} and utilize the polynomial surface $\mathbb{P}(u,v)$ to approximate the key cumulative function $CF_{count}(u,v)$ with two keys, where:
\vspace{-2mm}
$$\mathbb{P}(u,v)=\sum_{i=0}^{deg} \sum_{j=0}^{deg}  a_{ij} u^i v^j$$
\vspace{-2mm}

By replacing $F(k_i)$ and $\mathbb{P}(k_i)$ in Equation \ref{eq:error} with $F(u_i,v_i)$ and $\mathbb{P}(u_i,v_i)$, respectively, we obtain a similar linear programming problem for obtaining the best parameters $a_{ij}$. However, unlike the one-dimensional case, it takes at least $O(n^2)$ to obtain the minimum number of segmentations when using the GS method (cf. Section \ref{sec:GS}), which is infeasible even for small-scale datasets (e.g., 10000 points). Instead, we propose a heuristics-based solution that performs quad-tree-like segmentations. As illustrated in Figure~\ref{fig:quad_tree}, when a region does not fulfill the error guarantee $\delta$ (e.g., white rectangles), it is  decomposed into four smaller regions. This procedure terminates when all regions satisfy the error guarantee $\delta$.

\begin{figure}[!hbt]
\vspace{-0.3cm}
\center
\includegraphics[width=1.0\columnwidth,page=1]{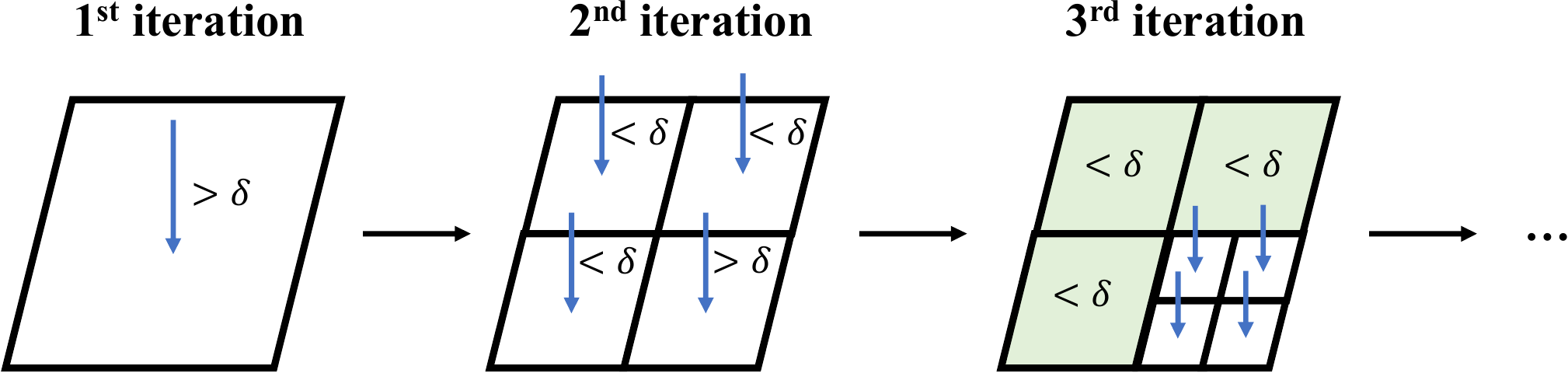}
\vspace{-0.4cm}
\caption{Quad-tree based approach for obtaining the segmentation}
\label{fig:quad_tree}
\vspace{-0.3cm}
\end{figure}

After building the \polyfit{} index structure, we utilize a similar approach in Section \ref{sec:query} to answer range aggregate queries with theoretical guarantees (cf. Lemmas \ref{lem:sum_abs_error_two_D} and \ref{lem:sum_rel_error_two_D}).


Given the query range $[l_q^{(1)}, u_q^{(1)}]$ for $u$ and $[l_q^{(2)}, u_q^{(2)}]$ for $v$, we propose to compute the approximate result as:
\begin{equation}\label{eq:2d_count}
\begin{split}
\tilde{A}_{count}=
\mathbb{P}_{I_{uu}}(u_q^{(1)},u_q^{(2)})-\mathbb{P}_{I_{lu}}(l_q^{(1)},u_q^{(2)})\\
-\mathbb{P}_{I_{ul}}(u_q^{(1)},l_q^{(2)})+\mathbb{P}_{I_{ll}}(l_q^{(1)},l_q^{(2)})
\end{split}
\end{equation}

where $I_{uu}$, $I_{lu}$, $I_{ul}$, and $I_{ll}$ denote the coverage regions of $\mathbb{P}$ that $(u_q^{(1)},u_q^{(2)})$, $(l_q^{(1)},u_q^{(2)})$, $(u_q^{(1)},l_q^{(2)})$, and $(l_q^{(1)},l_q^{(2)})$ overlap, respectively. These regions could be efficiently found with the same quad-tree index used in construction.

\begin{lemma}
\vspace{-1mm}
\label{lem:sum_abs_error_two_D} If we set $\delta=\frac{\varepsilon_{abs}}{4}$, then $\tilde{A}_{count}$ satisfies the absolute error guarantee $\varepsilon_{abs}$.
\vspace{-1mm}
\end{lemma}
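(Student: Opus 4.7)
The plan is to mimic the proof of Lemma~\ref{lem:count_abs_error} but account for the fact that the two-dimensional range \texttt{COUNT} decomposes into \emph{four} evaluations of the cumulative function (inclusion--exclusion on the rectangle corners), as opposed to only two in the one-dimensional \texttt{SUM} case. This is why the prescribed deviation threshold is $\varepsilon_{abs}/4$ rather than $\varepsilon_{abs}/2$.

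First I would invoke the per-region fitting guarantee from the quad-tree segmentation: since construction only terminates when every retained region satisfies the deviation threshold $\delta$, for each corner of the query rectangle we have
\begin{align*}
|CF_{count}(u_q^{(1)},u_q^{(2)}) - \mathbb{P}_{I_{uu}}(u_q^{(1)},u_q^{(2)})| &\leq \delta, \\
|CF_{count}(l_q^{(1)},u_q^{(2)}) - \mathbb{P}_{I_{lu}}(l_q^{(1)},u_q^{(2)})| &\leq \delta, \\
|CF_{count}(u_q^{(1)},l_q^{(2)}) - \mathbb{P}_{I_{ul}}(u_q^{(1)},l_q^{(2)})| &\leq \delta, \\
|CF_{count}(l_q^{(1)},l_q^{(2)}) - \mathbb{P}_{I_{ll}}(l_q^{(1)},l_q^{(2)})| &\leq \delta.
\end{align*}

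Next I would write the exact answer via the inclusion--exclusion identity stated just before Definition~\ref{def:accu2D}, and write the approximate answer $\tilde{A}_{count}$ using Equation~\ref{eq:2d_count}. Subtracting the two and applying the triangle inequality to the four differences term-by-term yields $|\tilde{A}_{count} - R_{count}(\mathcal{D},[l_q^{(1)},u_q^{(1)}][l_q^{(2)},u_q^{(2)}])| \leq 4\delta$. Substituting $\delta = \varepsilon_{abs}/4$ gives the desired absolute error bound $\varepsilon_{abs}$, completing the proof.

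No step is truly difficult here; the entire argument is a direct two-dimensional analogue of Lemma~\ref{lem:count_abs_error}. The only subtlety worth flagging in the write-up is making clear that the four corner points may lie in four \emph{different} quad-tree leaf regions ($I_{uu}, I_{lu}, I_{ul}, I_{ll}$), so the per-region guarantee must be applied separately to each one before combining via the triangle inequality. Once that observation is stated, the factor of $4$ (and hence the choice $\delta = \varepsilon_{abs}/4$) is immediate from the four-term inclusion--exclusion expansion, and the rest is bookkeeping.
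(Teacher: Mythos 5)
Your proof is correct and is exactly the argument the paper intends: the paper omits the proof, stating only that it is analogous to Lemma~\ref{lem:count_abs_error}, and your four-corner inclusion--exclusion plus triangle-inequality derivation of the $4\delta$ bound is precisely that analogue.
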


\begin{lemma}
\vspace{-1mm}
\label{lem:sum_rel_error_two_D} If $\tilde{A}_{count}\geq 4\delta(1+\frac{1}{\varepsilon_{rel}})$, then $\tilde{A}_{count}$ satisfies the relative error guarantee $\varepsilon_{rel}$.
\vspace{-1mm}
\end{lemma}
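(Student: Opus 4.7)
The plan is to mirror the argument used for Lemma \ref{lem:count_rel_error}, but with the factor of $2$ replaced by $4$ to account for the four polynomial evaluations in the inclusion-exclusion formula (Equation~\ref{eq:2d_count}). First, I would invoke the deviation-threshold guarantee established in Section~\ref{sec:OPT} (as carried over to the quad-tree segmentation in Section~\ref{sec:extend_2d}) to assert that for each of the four corner points $(u_q^{(1)},u_q^{(2)})$, $(l_q^{(1)},u_q^{(2)})$, $(u_q^{(1)},l_q^{(2)})$, $(l_q^{(1)},l_q^{(2)})$, the corresponding polynomial piece $\mathbb{P}_{I_{\cdot\cdot}}$ deviates from $CF_{count}$ by at most $\delta$.

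Next I would combine the four pointwise bounds with the triangle inequality, applied to the inclusion-exclusion expression
\[
\tilde{A}_{count}-R_{count} = \sum_{\text{four corners}} \pm \bigl(\mathbb{P}_{I_{\cdot\cdot}} - CF_{count}\bigr),
\]
which yields the analogue of Equation~\ref{eq:numerator}:
\begin{equation*}
\bigl|\tilde{A}_{count}-R_{count}(\mathcal{D},[l_q^{(1)},u_q^{(1)}][l_q^{(2)},u_q^{(2)}])\bigr| \leq 4\delta.
\end{equation*}
From this I would immediately extract the lower bound
\begin{equation*}
R_{count}(\mathcal{D},[l_q^{(1)},u_q^{(1)}][l_q^{(2)},u_q^{(2)}]) \geq \tilde{A}_{count} - 4\delta,
\end{equation*}
which plays the role of Equation~\ref{eq:denominator}.

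Then I would use the hypothesis $\tilde{A}_{count} \geq 4\delta\bigl(1+\tfrac{1}{\varepsilon_{rel}}\bigr)$ in exactly the same manner as in the 1D proof: it implies both $\tilde{A}_{count} > 4\delta$ (so the lower bound on $R_{count}$ is strictly positive, making the relative-error ratio well-defined) and $\frac{4\delta}{\tilde{A}_{count}-4\delta} \leq \varepsilon_{rel}$. Dividing the absolute-error bound by the lower bound on $R_{count}$ and chaining with this inequality gives the desired relative-error guarantee.

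I do not expect a real technical obstacle here, because the argument is structurally identical to Lemma~\ref{lem:count_rel_error}; the only point that requires any care is making sure the triangle inequality is applied to all four signed terms correctly so that the factor $4\delta$ (rather than $2\delta$) is obtained, and then propagating that factor consistently through the relative-error manipulation. The proof can be written in just a few lines by referring back to the structure of Lemmas~\ref{lem:count_abs_error} and~\ref{lem:count_rel_error}.
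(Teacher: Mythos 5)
Your proposal is correct and matches the paper's intended argument: the paper itself omits the proof, stating only that it is analogous to Lemma~\ref{lem:count_rel_error}, and your adaptation (four pointwise $\delta$-deviations at the corners of the inclusion-exclusion formula, triangle inequality giving $4\delta$, then the same relative-error manipulation with $2\delta$ replaced by $4\delta$) is precisely that analogue.
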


The proofs of Lemma \ref{lem:sum_abs_error_two_D} and \ref{lem:sum_rel_error_two_D} are similar to those of Lemmas \ref{lem:count_abs_error} and \ref{lem:count_rel_error}, respectively. 
\vspace{-2mm}
\section{Experimental Evaluation}
\label{sec:exp}
\vspace{-1mm}
We introduce the experimental setting in Section \ref{exp:setting}. Then, we investigate the performance of \polyfit{} in Section \ref{sec:investigation_polyfit}. Next, we compare PolyFit and error-bounded competitors on real datasets in Section \ref{exp:response_time_guarantee_error}. After that, we compare the response time of \polyfit{} with other heuristic methods in Section \ref{exp:heuristics}. Lastly, we compare the construction times of all methods in Section \ref{sec:construction_time_all_methods}. 

\vspace{-2mm}
\subsection{Experimental Setting}
\label{exp:setting}
\vspace{-1mm}
We use three real large-scale datasets (0.9M to 100M records) to evaluate the performance. They are summarized in Table \ref{tab:datasets}. For each dataset, we randomly generate 1000 queries. In the single-key case, we randomly choose two key values in the datasets as the start and end points of each query interval. In the two-key case, we randomly sample rectangles from the dataset as query regions. In our experiments, we focus on \verb"COUNT" and \verb"MAX" queries. Nevertheless, our methods are readily applicable to \verb"SUM" and \verb"MIN" queries.

\begin{table}[!htb]
\vspace{-3mm}
\small \center
\caption{Datasets} \label{tab:datasets}
\vspace{-3mm}
\begin{tabular}{|@{ }c@{ }|@{ }c@{ }|@{ }c@{ }|@{ }c@{ }|@{ }c@{ }|}
    \hline
    Name & Size & Key(s) & Measure & Aggregate function \\ \hline
    HKI \cite{hk40index} & 0.9M & timestamp & index value & \verb"MAX" \\ \hline
    TWEET \cite{chen2015temporal} & 1M & latitude & \# of tweets & \verb"COUNT" \\ \hline
    OSM \cite{osmplanet2019} & 100M & latitude, longitude & \# of records & \verb"COUNT" \\ \hline
\end{tabular}
\vspace{-3mm}
\end{table}

Table \ref{tab:methods} summarizes different methods for supporting range aggregate queries.
We classify these methods based on five features:
(i) whether it provides absolute error guarantees (cf. Problem \ref{prob:abs_error} ($Q_{abs}$)),
(ii) whether it provides relative error guarantees (cf. Problem \ref{prob:rel_error} ($Q_{rel}$)),
(iii) whether it supports queries with two keys (cf. Section \ref{sec:extend_2d}),
(iv) whether it supports the \verb"COUNT" query, and
(v) whether it supports the \verb"MAX" query.

\begin{table}[!htb]
\vspace{-3.5mm}
\caption{Methods for range aggregate queries} \label{tab:methods}
\vspace{-4mm}
\hspace*{-1.5mm}
\begin{tabular}{|c|c|c|c|c|c|}
\multicolumn{6}{c}{$\checkmark$ Directly support \; $\triangle$ Extend to support \; $\times$ Cannot support} \\ 
	\hline
	Method & $Q_{abs}$ & $Q_{rel}$ & 2 keys & \verb"COUNT" & \verb"MAX" \\ \hline \hline
	aR-tree \cite{DPJY01} & $\checkmark$ & $\checkmark$ & $\checkmark$ & $\checkmark$ & $\checkmark$ \\ \hline
	MRTree \cite{MRTree01} & $\checkmark$ & $\checkmark$ & $\checkmark$ & $\checkmark$ & $\checkmark$ \\ \hline
	RMI \cite{kraska2018case} & $\triangle$ & $\triangle$ & $\times$ & $\checkmark$ & $\times$ \\ \hline
	FITing-tree \cite{fiting2019}  & $\triangle$ & $\triangle$ & $\times$ & $\checkmark$ & $\times$ \\ \hline
	PGM \cite{PG20} & $\triangle$ & $\triangle$& $\times$ & $\checkmark$ & $\times$ \\ \hline
	\polyfit{} (ours) & $\checkmark$ & $\checkmark$ & $\checkmark$ & $\checkmark$ & $\checkmark$ \\ \hline \hline
	Hist \cite{to2013entropy} & $\times$ & $\times$ & $\times$ & $\checkmark$ & $\times$ \\ \hline
	S-tree \cite{STXBtree} & $\times$ & $\times$ & $\times$ & $\checkmark$ & $\times$ \\ \hline
	S2 \cite{haas1992sequential} & $\times$ & $\times$ & $\checkmark$ & $\checkmark$ & $\times$ \\ \hline
	VerdictDB \cite{VerdictDB18} & $\times$ & $\times$ & $\checkmark$ & $\checkmark$ & $\times$ \\ \hline
	DBest \cite{QP19} & $\times$ & $\times$ & $\checkmark$ & $\checkmark$ & $\times$ \\ \hline
	PLATO \cite{Plato20} & $\times$ & $\times$ & $\times$ & $\checkmark$ & $\times$ \\ \hline
\end{tabular}
\vspace{-3.5mm}
\end{table}

We first introduce the methods that can satisfy deterministic error guarantees (i.e., those with $\checkmark$ or $\triangle$ in the $Q_{abs}$ and $Q_{rel}$ columns in Table \ref{tab:methods}). The aR-tree \cite{DPJY01} is a traditional tree-based method for answering exact \verb"COUNT" and \verb"MAX" queries. The MRTree \cite{MRTree01} extends the aR-tree by utilizing progressive lower and upper bounds to answer approximate \verb"COUNT" and \verb"MAX" queries with error guarantees. In addition, both the aR-tree and the MRTree can support the range aggregate queries with two keys.
With simple modifications, the learned-index methods, including RMI \cite{kraska2018case}, FITing-tree \cite{fiting2019}, and PGM \cite{PG20}, can be extended to support range aggregate queries with both absolute and relative error guarantees. However, they are unable to support queries with two keys and the \verb"MAX" query. Due to the space limitation, we cover the modifications and parameter tuning in our technical report (cf. Appendix in \cite{ZTMC20_arxiv}). \polyfit{} supports all these five features. By default, we follow Lemmas \ref{lem:count_abs_error}, \ref{lem:max_abs_error}, and \ref{lem:sum_abs_error_two_D} to set the $\delta$ values in Problem \ref{prob:abs_error} ($Q_{abs}$), for different absolute error threshold $\varepsilon_{abs}$. In addition, we adopt $\delta=100$ in \polyfit{} for the experiments with two keys in Problem \ref{prob:rel_error} ($Q_{rel}$).

We then discuss the methods that are unable to fulifll the deterministic error guarantee (i.e., the methods with $\times$ in the $Q_{abs}$ and $Q_{rel}$ columns in Table \ref{tab:methods}). Hist \cite{to2013entropy} adopts the entropy-based histogram for answering the \verb"COUNT" query. The S-tree prebuilds the STX B-tree \cite{STXBtree} on top of a sampled subset of each dataset. S2 \cite{haas1992sequential} and VerdictDB \cite{VerdictDB18} are sampling-based approaches that can only provide probabilistic error guarantees. By default, we set the probability to 0.9 in our experiments.
Both DBest \cite{QP19} and PLATO \cite{Plato20} are the state-of-the-art methods in approximate query processing and time series databases, respectively, that can be also adapted to answer approximate range aggregate queries. Since these methods cannot provide deterministic error guarantees, we regard them as heuristic methods.


We implemented all methods in C++ and conducted experiments on an Intel Core i7-8700 3.2GHz PC using WSL (Windows 10 Subsystem for Linux).

\vspace{-2mm}
\subsection{\polyfit{} Tuning}
\label{sec:investigation_polyfit}
\vspace{-1mm}
In this section, we investigate two research questions for \polyfit{}, namely (1) how does the degree $deg$ affect the query response time of \polyfit{}? (2) how does the degree $deg$ affect the construction time of \polyfit{}?

\vspace{-1mm}
\subsubsection{{\bf Effect of $deg$ on the query response time}}
\label{sec:exp_polyfit_analysis}
\vspace{-1mm}
Recall that we need to select the degree $deg$ in order to build \polyfit{}. It is thus important to understand how this parameter affects the query response time. Here, we use the form \polyfit{}-$deg$ to represent the degree $deg$ of \polyfit{}. Figure \ref{fig:vary_deg} shows the trends for the query response time for both \verb"COUNT" (one key and two keys) and \verb"MAX" (one key) queries, using the absolute error threshold $\varepsilon_{abs}=100$. When we choose a larger degree $deg$, the polynomial function can provide better approximation for $F(k)$, and thus reduce the index size, which can reduce the response time for each query. However, the larger the degree $deg$, the larger the computation time for each node in \polyfit{}. Therefore, we can find that the response time increases (e.g., $deg=3$ and 4 in Figure \ref{fig:vary_deg}a), once we utilize a high degree $deg$. By default, in subsequent experiments, we choose deg = 2 for the COUNT query with a single key, and deg = 3 for the COUNT query with two keys and for the MAX query.


\begin{figure*}[!hbt]
\vspace{-3mm}
\centering
\begin{tabular}{c c c}
\hspace{-2mm}
\includegraphics[width=0.50\columnwidth]{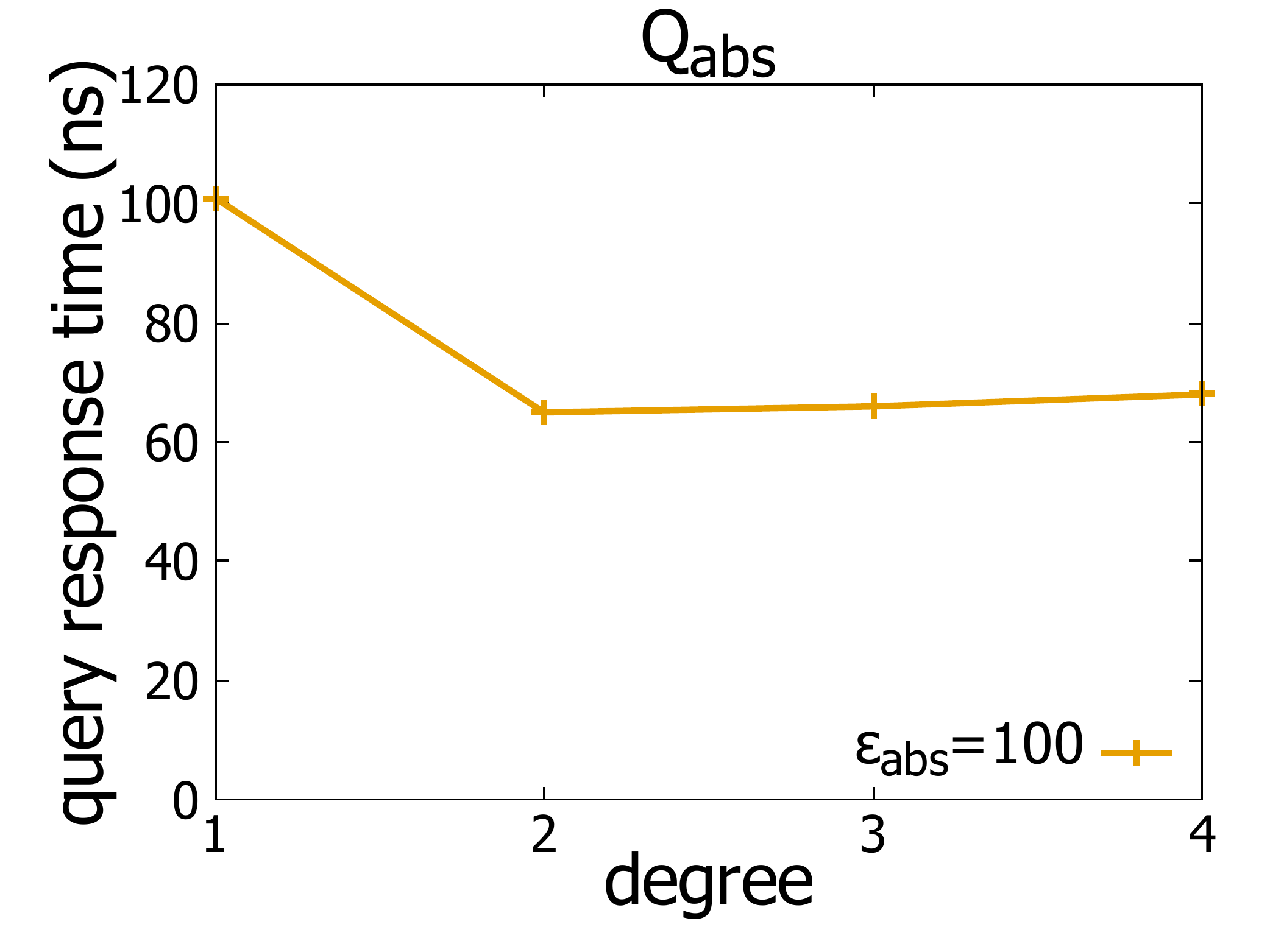} &
\hspace{-2mm}
\includegraphics[width=0.50\columnwidth]{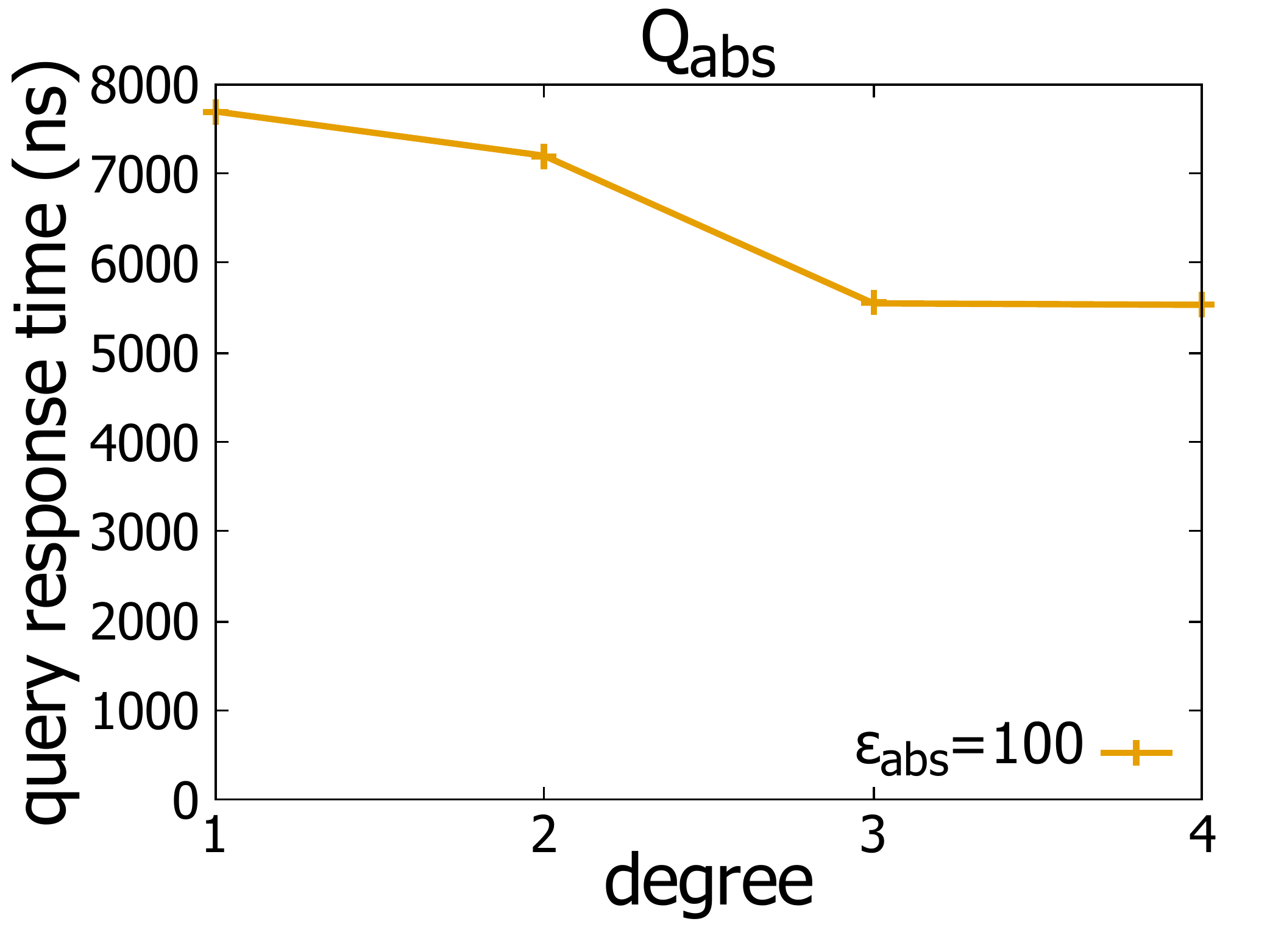} &
\hspace{-2mm}
\includegraphics[width=0.50\columnwidth]{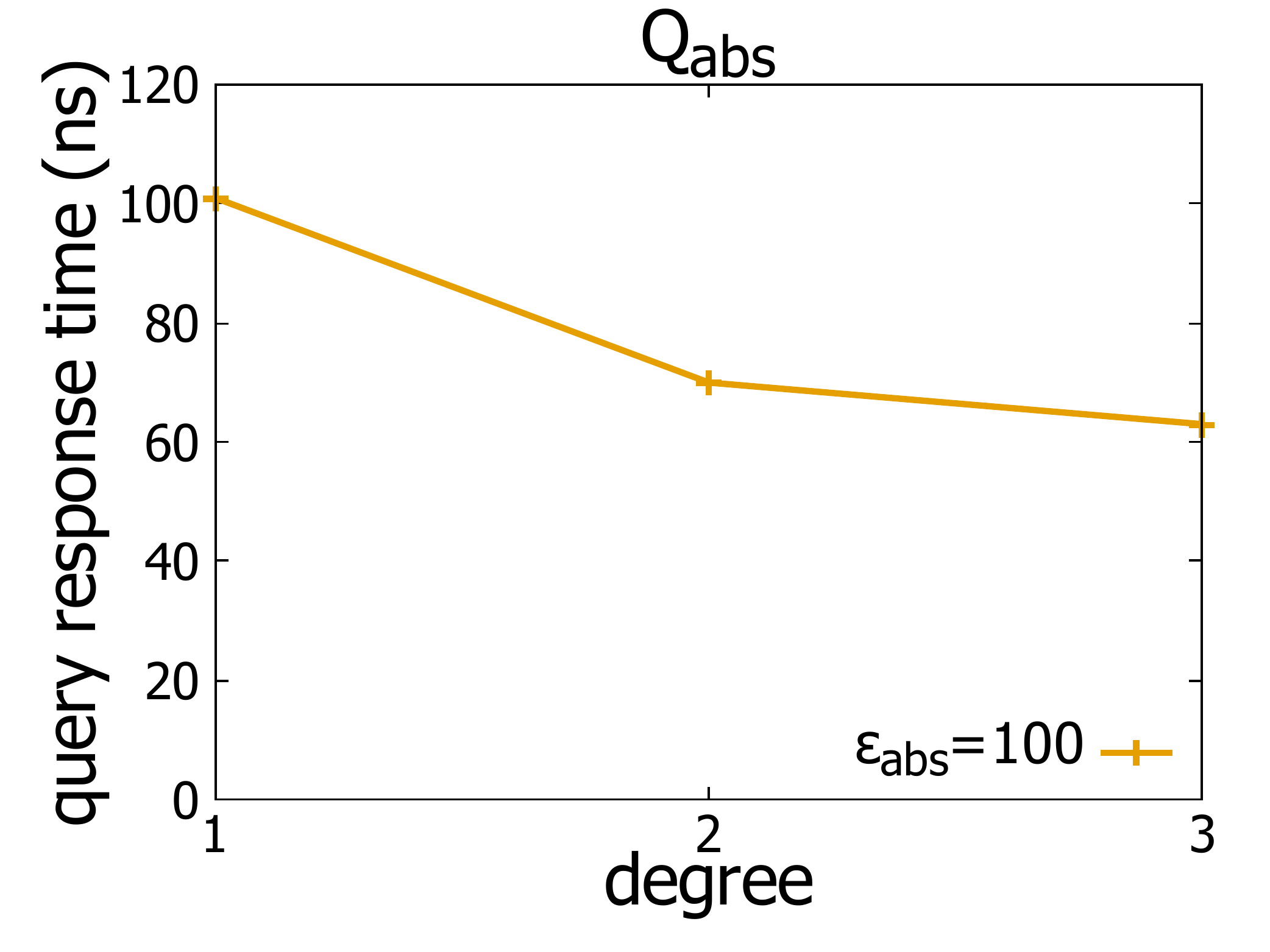}\\
(a) \verb"COUNT" query (single key) & (b) \verb"COUNT" query (two keys) & (c) \verb"MAX" query (single key)\\
\end{tabular}
\vspace{-4.0mm}
\caption{Running time for \texttt{COUNT} (single key), \texttt{COUNT} (two keys), and \texttt{MAX} queries on TWEET, OSM, and HKI datasets, respectively, varying the degree $deg$ of \polyfit{}}
\vspace{-1.5mm}
\label{fig:vary_deg}
\end{figure*}

\vspace{-1mm}
\subsubsection{{\bf Effect of $deg$ on the construction time}}
\label{sec:exp_construction_time}
\vspace{-1mm}
We further examine the construction time for \polyfit{}, varying the highest degree $deg$ from 1 to 4 in the polynomial function (cf. Figure \ref{fig:construction_time_vary_deg}). Since a polynomial function with a higher degree can produce error guarantee for a longer interval $I$, i.e., $E(I)\leq \delta$, the GS method needs to call the LP solver with longer intervals (cf. line 4 in Algorithm \ref{alg:GS}), which can increase the construction time when using polynomial functions with higher degree $deg$. 

\begin{figure}[!hbt]
	\centering
	\vspace{-3mm}
	\begin{tabular}{c c}
		\hspace{-3mm}
		\includegraphics[width=0.5\columnwidth]{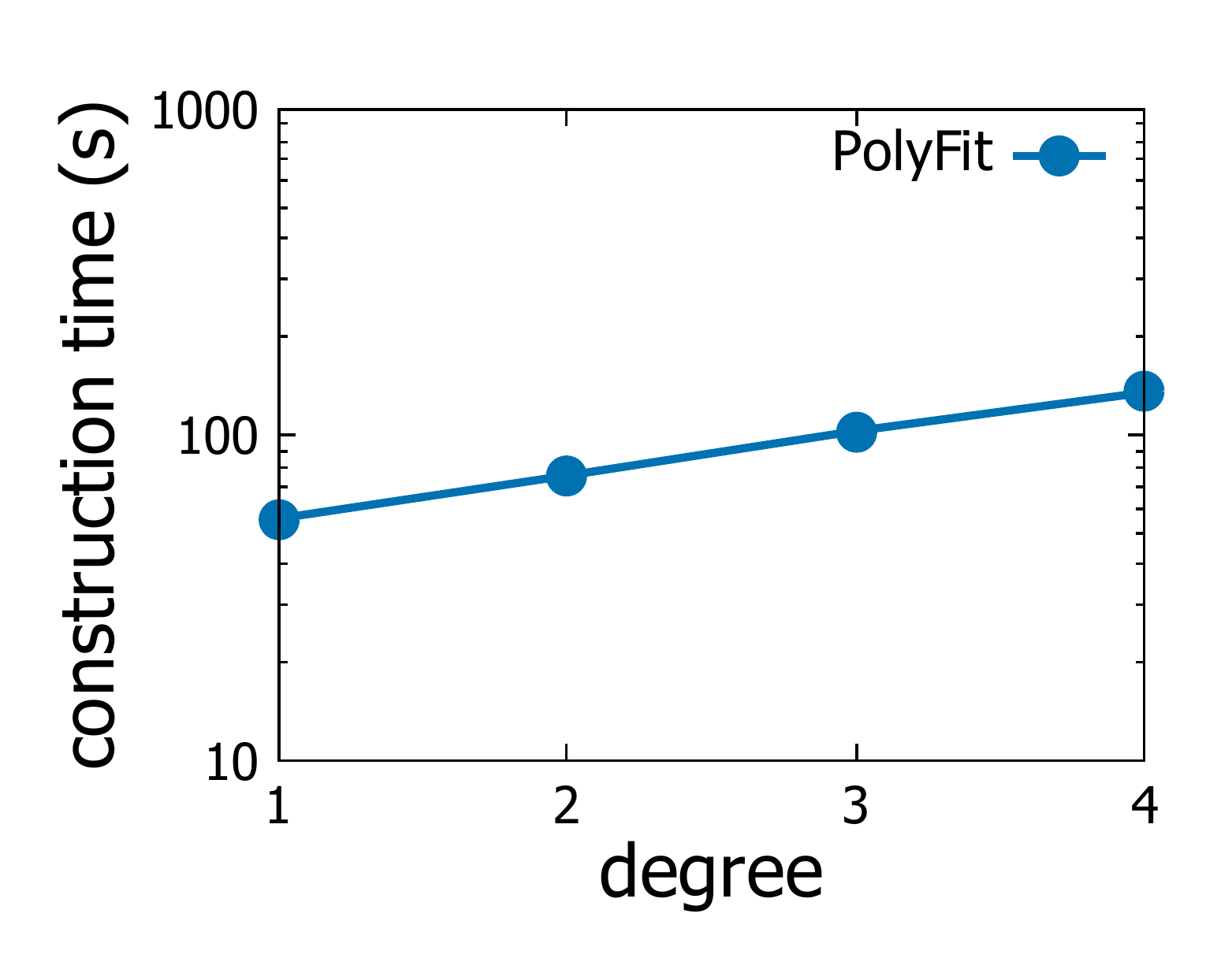} &
		\hspace{-3mm}
		\includegraphics[width=0.5\columnwidth]{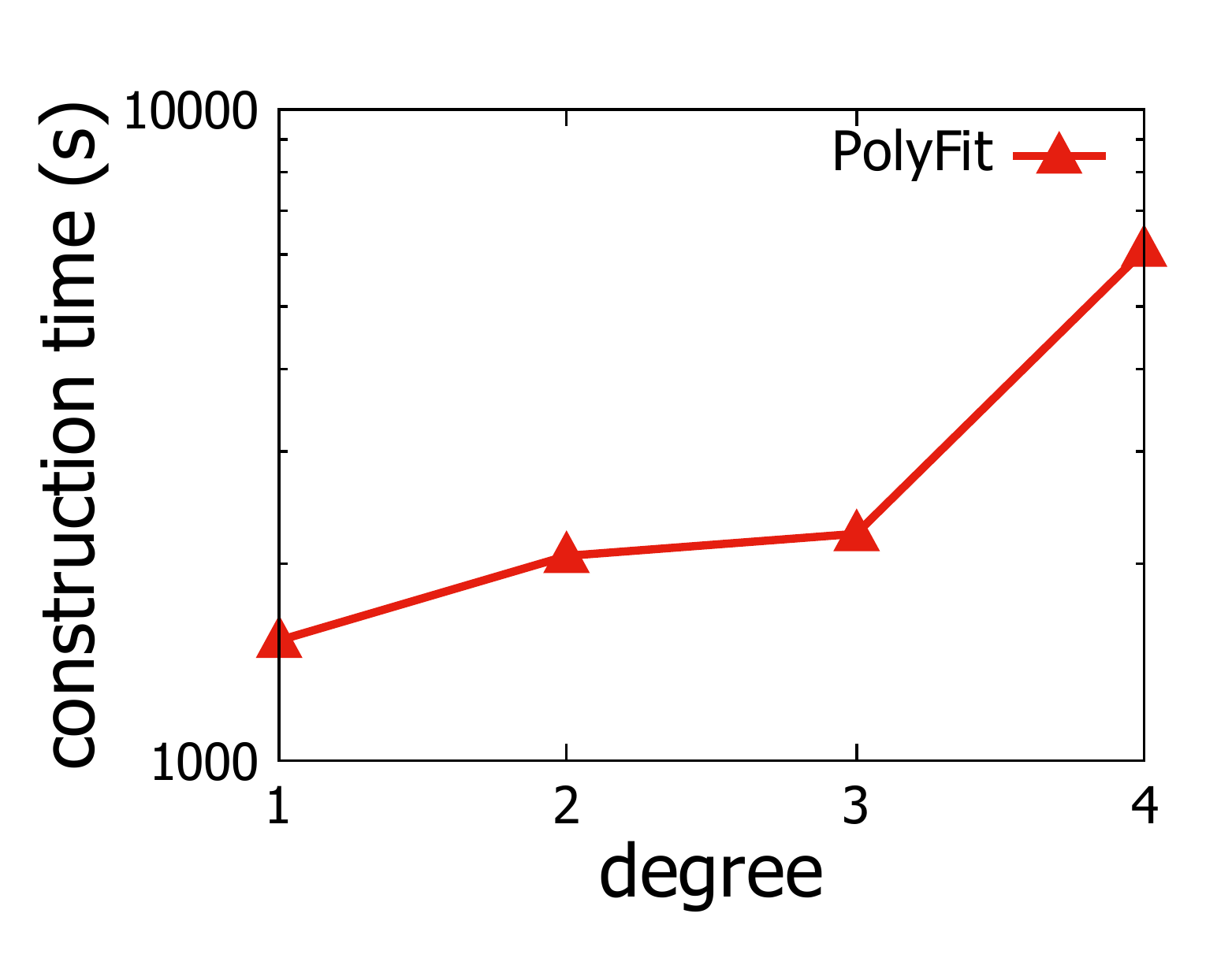} \\
		(a) \verb"COUNT" query (single key) & (b) \verb"COUNT" query (two keys)
	\end{tabular}
	\vspace{-3.5mm}
	\caption{Index construction time of \polyfit{} for \texttt{COUNT} query with single key (using TWEET dataset) and two keys (using OSM dataset), varying the degree $deg$}
	\vspace{-0.5mm}
	\label{fig:construction_time_vary_deg}
\end{figure}

\vspace{-2mm}
\subsection{Comparing with Error-Bounded Methods}
\label{exp:response_time_guarantee_error}
\vspace{-1mm}
In this section, we test the response time of the different methods that can fulfill the absolute and relative error guarantees. Here, we adopt the default settings for these methods (cf. Section \ref{exp:setting}) and use the datasets HKI, TWEET, and OSM for testing the performance of \verb"COUNT" (single key), \verb"MAX" (single key), and \verb"COUNT" (two keys) queries, respectively. For Problem \ref{prob:abs_error} ($Q_{abs}$), we fix the absolute error $\varepsilon_{abs}=100$ and $\varepsilon_{abs}=200$ for the experiments with one key and two keys, respectively. For Problem \ref{prob:rel_error} ($Q_{rel}$), we fix the relative error $\varepsilon_{rel}=0.01$. Table \ref{tab:all_methods_error_guarantee} shows the response time of different methods. Observe that \polyfit{} achieves the best performance for all the types of queries. For the \verb"COUNT" query with two keys, \polyfit{} can achieve a speedup of at least two orders of magnitude over the existing methods.

\vspace{-3mm}
\begin{table}[hbt]
\caption{Response time (nanoseconds) for all methods with error guarantees}
\label{tab:all_methods_error_guarantee}
\vspace{-3mm}
\begin{tabular}{|@{ }c@{ }|@{ }c@{ }|@{ }c@{ }|@{ }c@{ }|@{ }c@{ }|@{ }c@{ }|@{ }c@{ }|} \hline
	Error guarantee & \multicolumn{3}{|c|}{$Q_{abs}$} & \multicolumn{3}{|c|}{$Q_{rel}$} \\ \hline
	Query type& \verb"COUNT" & \verb"MAX" & \verb"COUNT" & \verb"COUNT" & \verb"MAX" & \verb"COUNT" \\ \hline
	\# of keys & 1 & 1 & 2 & 1 & 1 & 2 \\ \hline
	\hline
	aR-tree & 590 & 3592 & 357457 & 590 & 3592 & 357457 \\ \hline
	MRTree & 565 & 182 & 385391 & 335 & 138 & 98919 \\ \hline
	RMI & 568 & n/a & n/a & 579 & n/a & n/a \\ \hline
	FITing-tree & 135 & n/a & n/a & 147 & n/a & n/a \\ \hline
	PGM & 104 & n/a & n/a & 118 & n/a & n/a \\ \hline
	Polyfit & {\bf 68} & {\bf 63} & {\bf 5274} & {\bf  79} & {\bf 65} & {\bf 5299} \\ \hline
\end{tabular}
\end{table}
\vspace{-3mm}

{\bf Sensitivity of} $\varepsilon_{abs}$ {\bf for \verb"COUNT" query.} We investigate how the absolute error $\varepsilon_{abs}$ affects the response times of different methods. For the \verb"COUNT" query with single key, we choose five absolute error values for testing, which are 100, 200, 400, 1000, and 2000. Observe from Figure \ref{fig:count_varying_abs}a that since \polyfit{}, FITing-tree, and PGM can provide more compact index structures for the datasets, these methods can significantly improve the efficiency, compared with the traditional index structures, i.e., the aR-tree and the MRTree. In addition, due to the better approximation with nonlinear polynomial functions ($deg=2$), \polyfit{} can achieve 1.33x to 6x speedups, over the existing learned-index structures, including RMI, FITing-tree, and PGM. For the \verb"COUNT" query with two keys, we choose 200, 400, 800, 2000, and 4000 as the absolute error values for testing. Since the state-of-the-art learned index structures (RMI, FITing-tree, and PGM) can only support queries with a single key, we omit these methods in this experiment. Figure \ref{fig:count_varying_abs}b shows that \polyfit{} achieves at least one order of magnitude speedups compared with the existing methods (aR-tree and MRTree), which is due to its compact index structure and query processing method.

\begin{figure}[!hbt]
\vspace{-3.5mm}
\begin{tabular}{c c}
\hspace{-3mm}
\includegraphics[width=0.5\columnwidth]{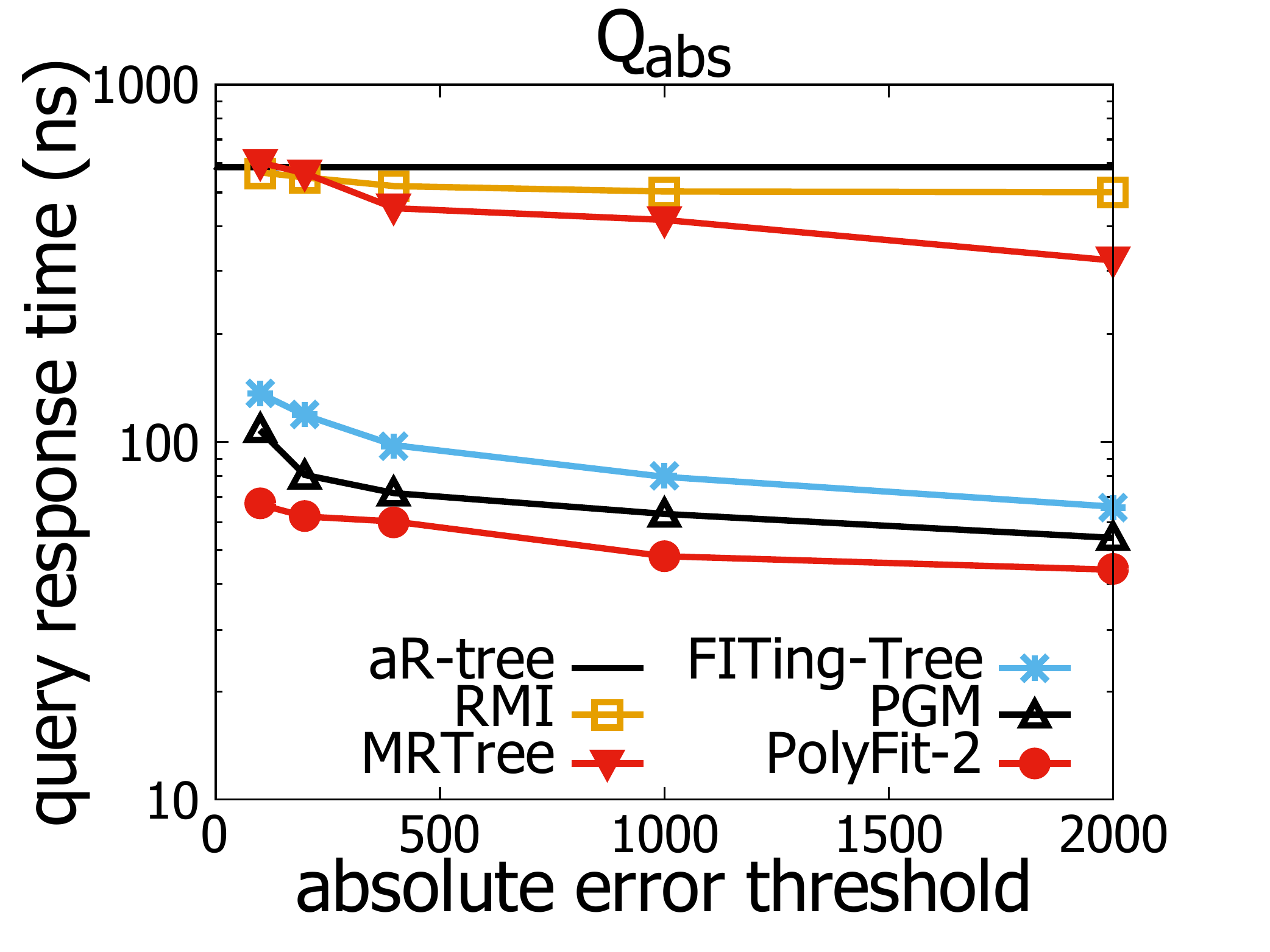} &
\hspace{-3mm}
\includegraphics[width=0.5\columnwidth]{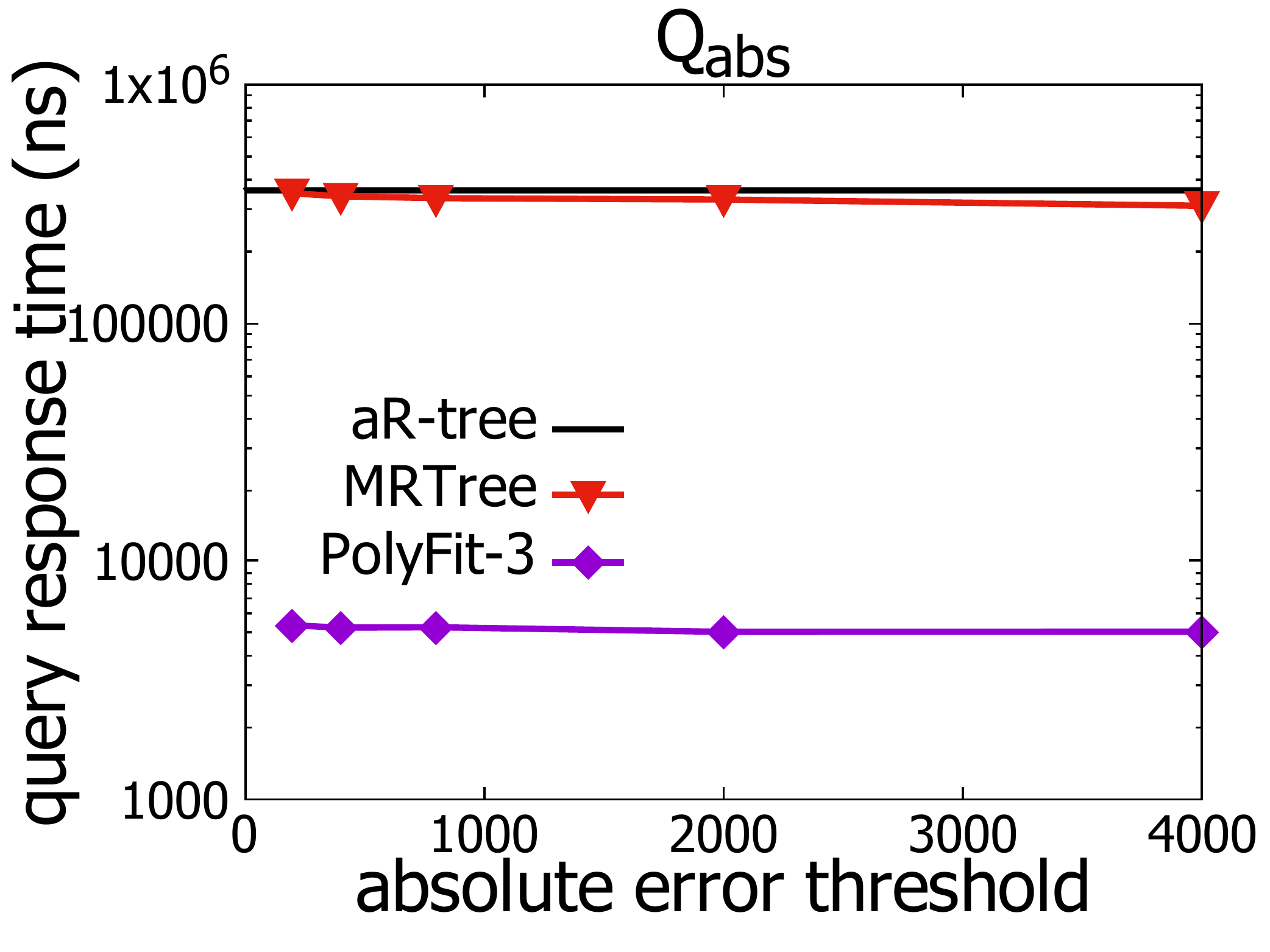}\\
(a) \verb"COUNT" query (single key) & (b) \verb"COUNT" query (two keys)
\end{tabular}
\vspace{-3.5mm}
\caption{Response time for \texttt{COUNT} query in TWEET dataset (for single key) and OSM dataset (for two keys), varying the absolute error $\varepsilon_{abs}$}
\vspace{-1.5mm}
\label{fig:count_varying_abs}
\end{figure}

{\bf Sensitivity of} $\varepsilon_{rel}$ {\bf for \verb"COUNT" query.} We proceed to test how the relative error $\varepsilon_{rel}$ affects the response time of the different methods. In this experiment, we choose five relative error values, which are 0.005, 0.01, 0.05, 0.1, and 0.2. Based on the more compact index structure, \polyfit{} is able to achieve better performance, compared with the existing methods (cf. Figure \ref{fig:count_varying_rel}a). For the \verb"COUNT" query with two keys, \polyfit{} significantly outperforms the existing methods, i.e., the aR-tree and the MRTree, by at least one order of magnitude (cf. Figure \ref{fig:count_varying_rel}b).

\begin{figure}[!hbt]
\vspace{-2mm}
\begin{tabular}{c c}
\hspace{-3mm}
\includegraphics[width=0.5\columnwidth]{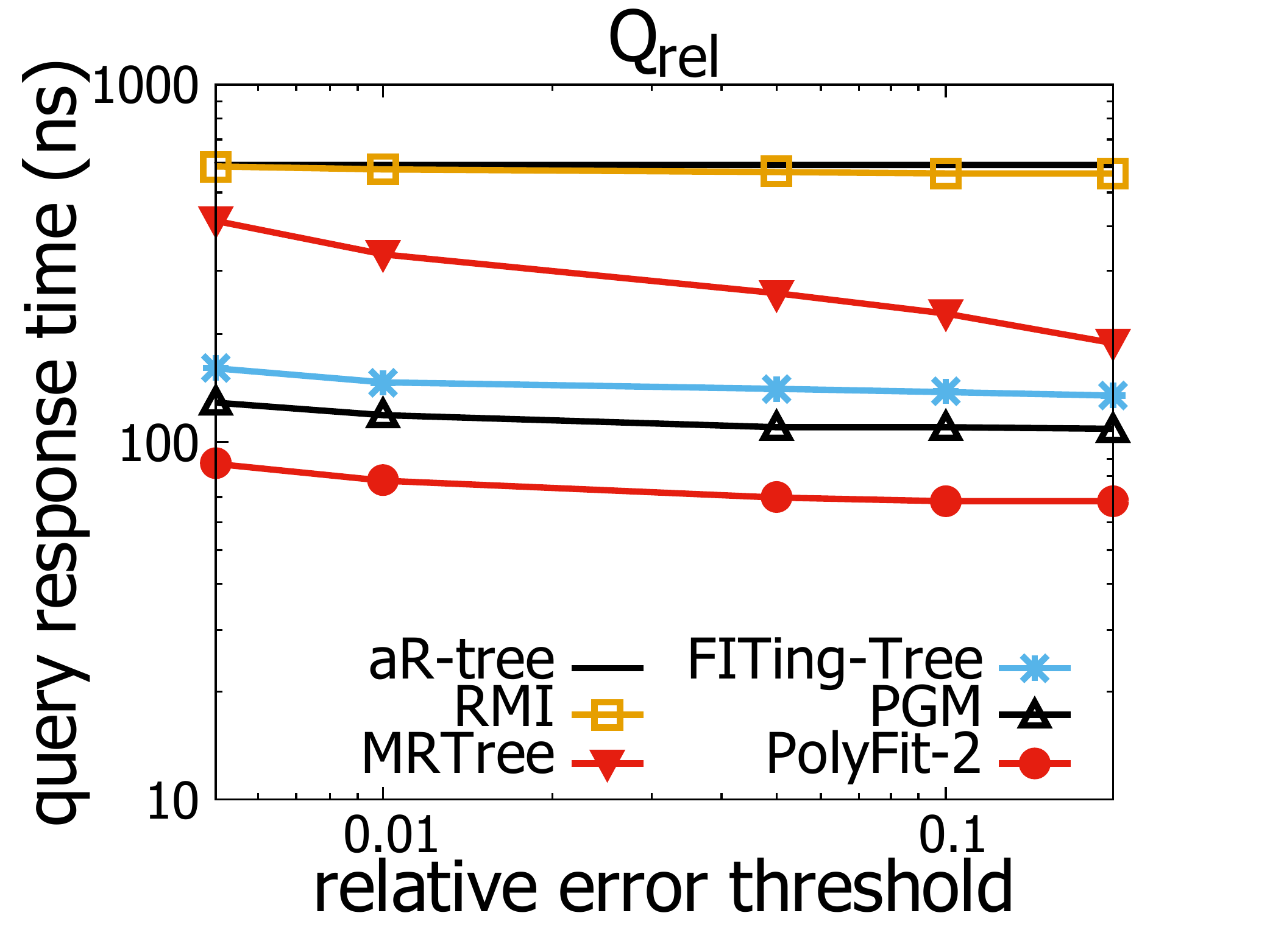} &
\hspace{-3mm}
\includegraphics[width=0.5\columnwidth]{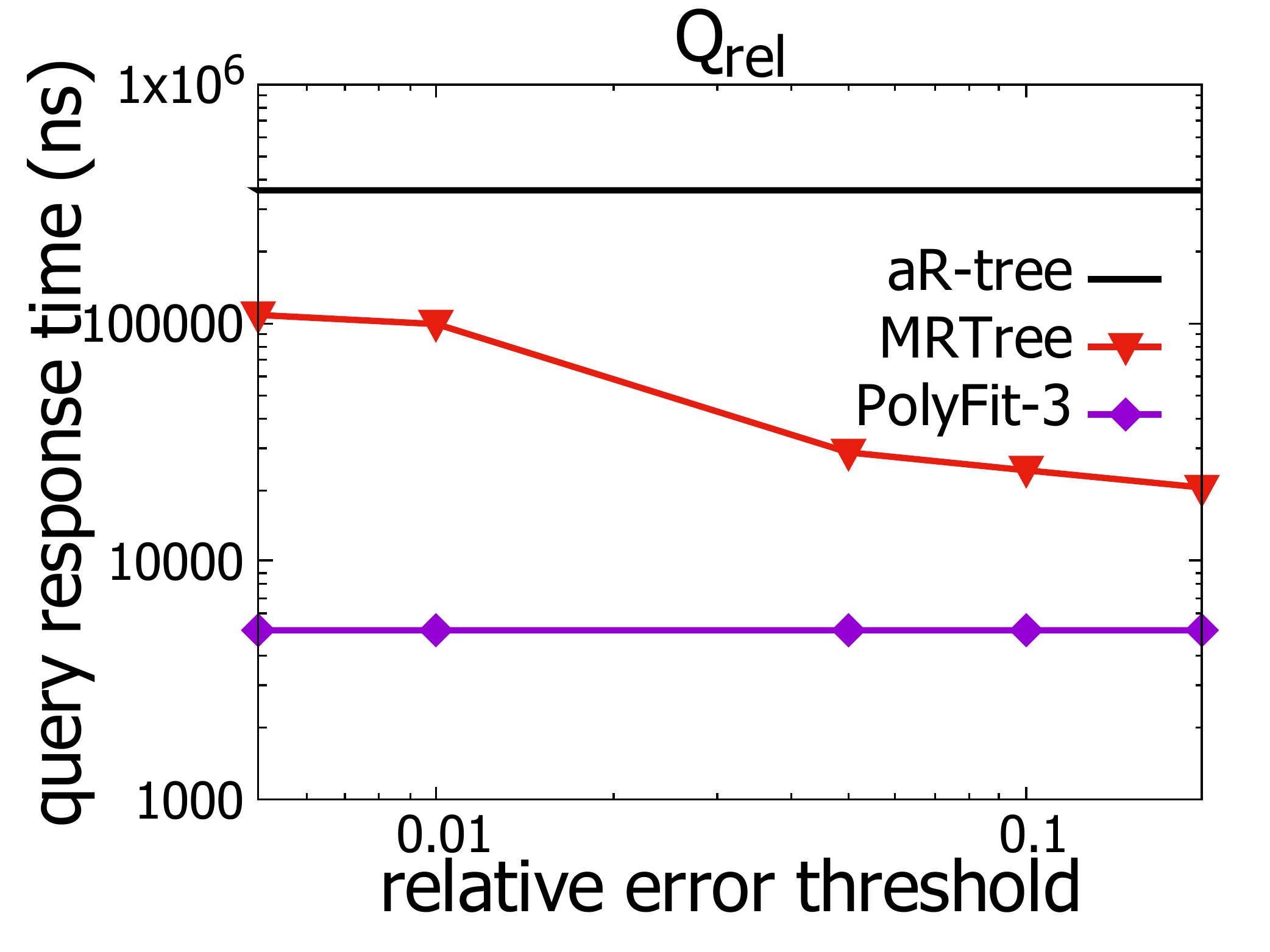} \\
(a) \verb"COUNT" query (single key) & (b) \verb"COUNT" query (two keys)  \\
\end{tabular}
\vspace{-3.5mm}
\caption{Response time for \texttt{COUNT} query in TWEET dataset (for single key) and OSM dataset (for two keys), varying the relative error $\varepsilon_{rel}$}
\vspace{-2.5mm}
\label{fig:count_varying_rel}
\end{figure}

\begin{figure}[!hbt]
\centering
\vspace{-1.5mm}
\begin{tabular}{c c}
\hspace{-3mm}
\includegraphics[width=0.5\columnwidth]{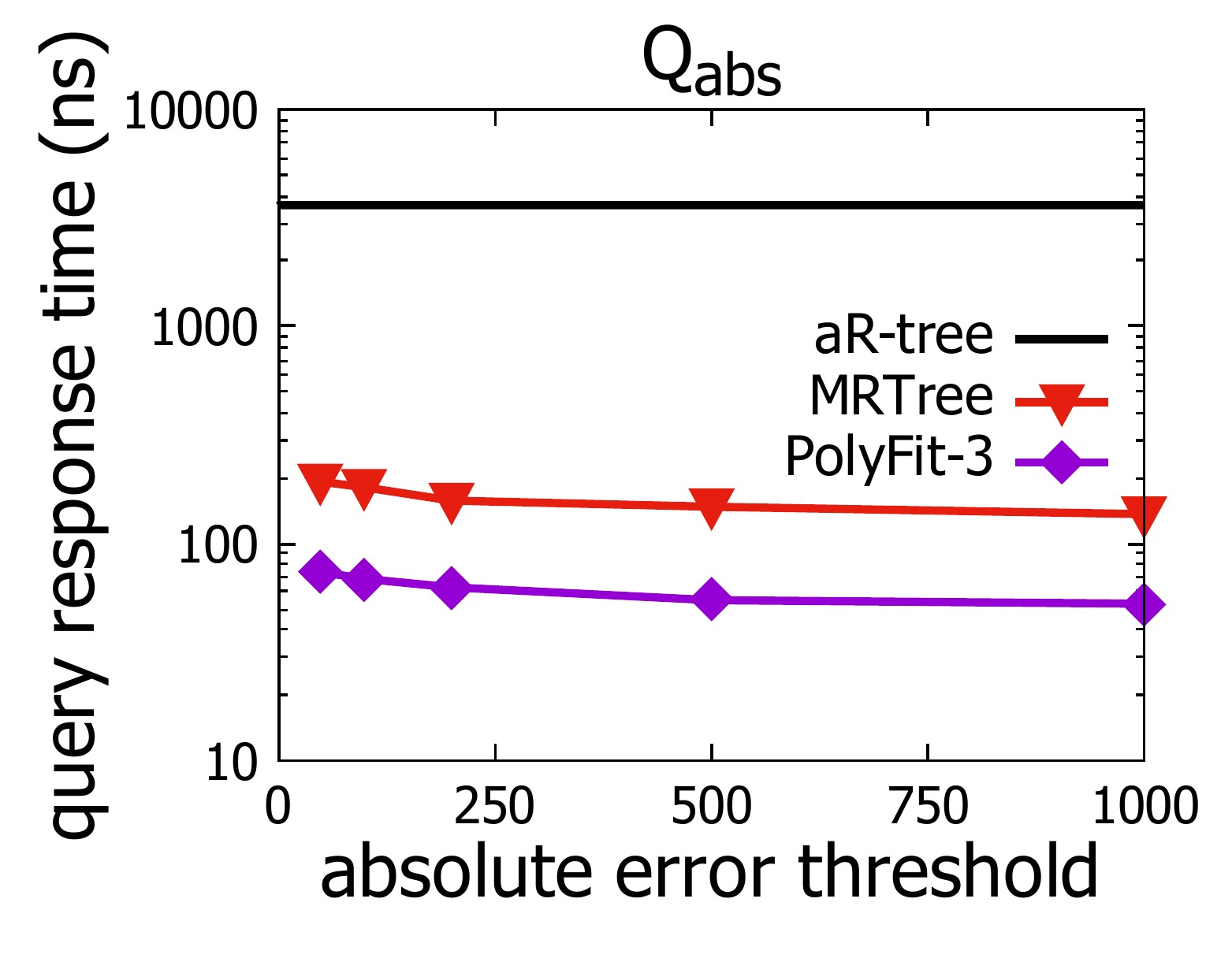} &
\hspace{-3mm}
\includegraphics[width=0.5\columnwidth]{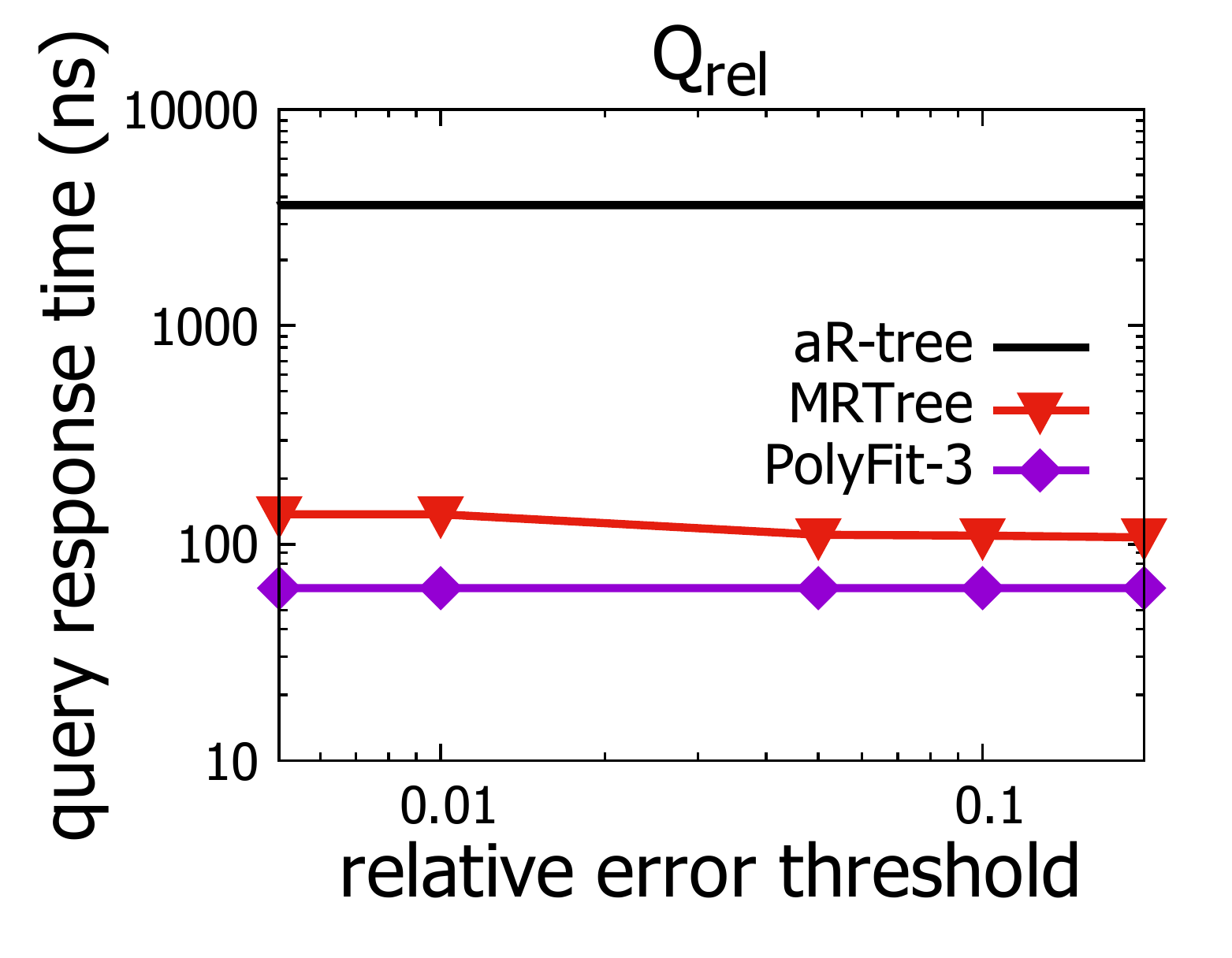} \\
(a) \verb"MAX" query, varying $\varepsilon_{abs}$ & (b) \verb"MAX" query, varying $\varepsilon_{rel}$
\end{tabular}
\vspace{-3.5mm}
\caption{Response time for \texttt{MAX} query in HKI dataset}
\vspace{-3.5mm}
\label{fig:max_varying_error}
\end{figure}

\begin{figure}[!hbt]
\centering
\vspace{-3.5mm}
\begin{tabular}{c c}
\hspace{-3mm}
\includegraphics[width=0.5\columnwidth]{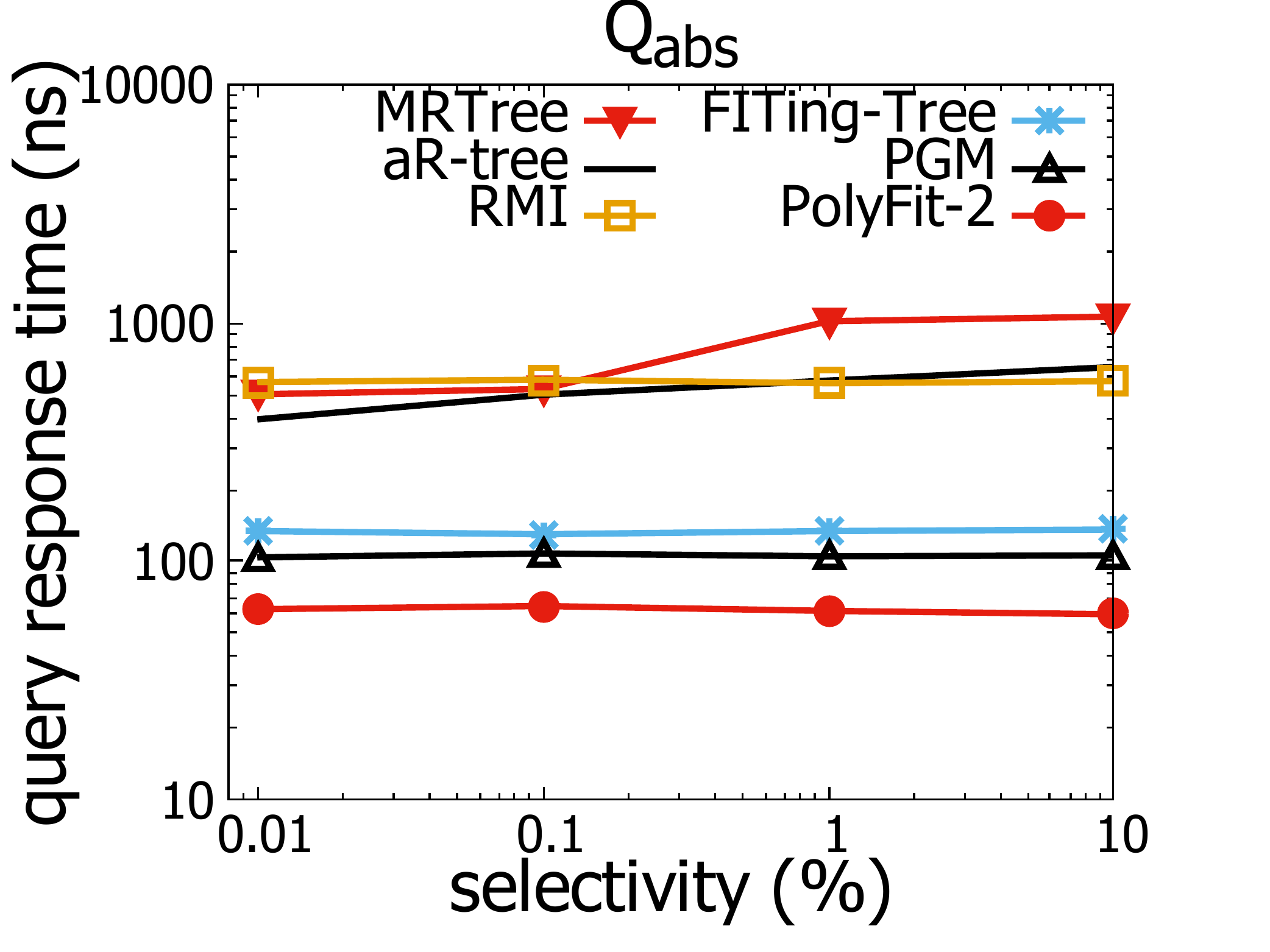} &
\hspace{-3mm}
\includegraphics[width=0.5\columnwidth]{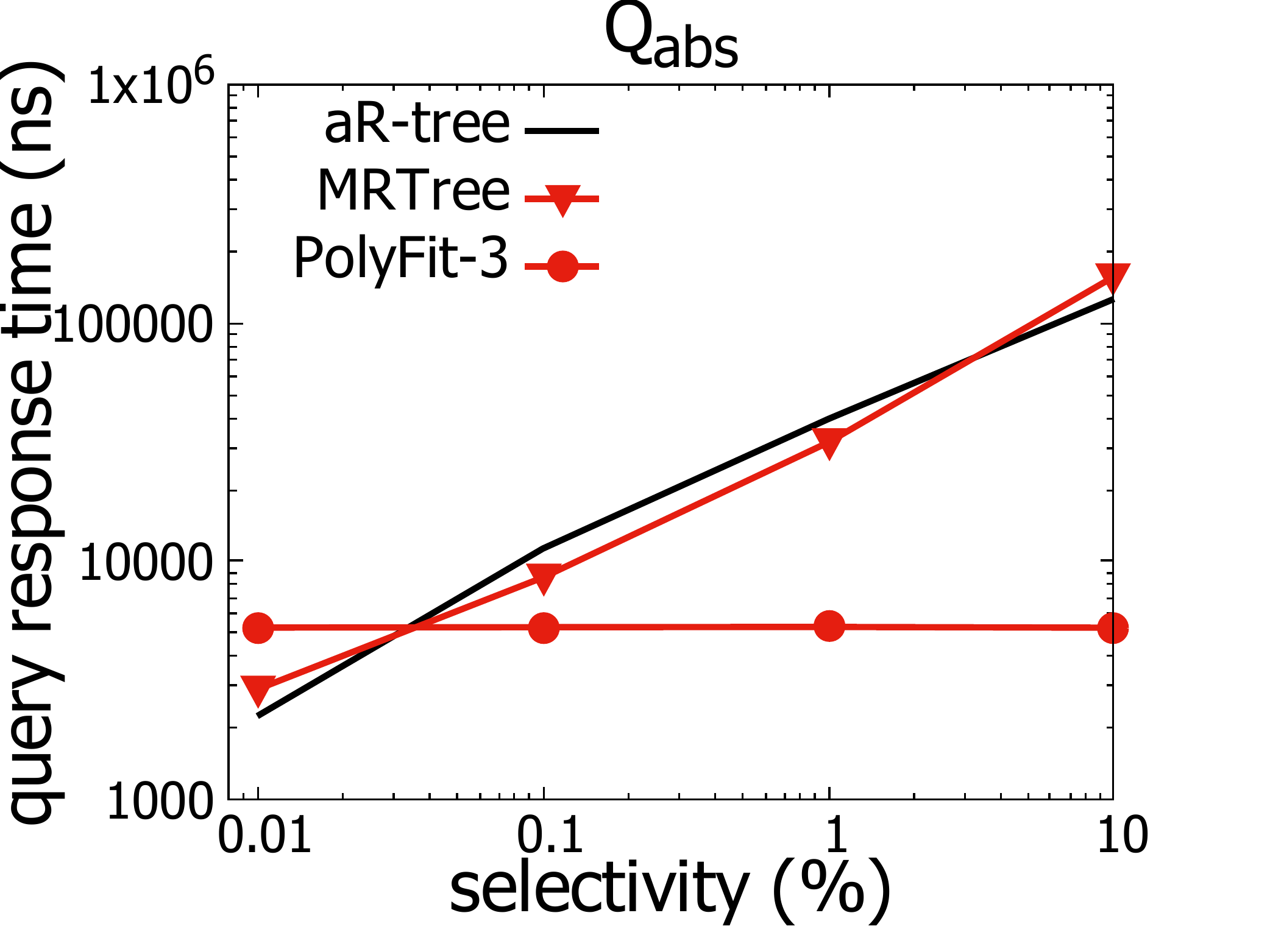} \\
(a) \verb"COUNT" query (single key) & (b) \verb"COUNT" query (two keys)
\end{tabular}
\vspace{-3.5mm}
\caption{Response time for \texttt{COUNT} query in TWEET dataset (for single key) and OSM dataset (for two keys), varying the selectivity of the query}
\vspace{-1.0mm}
\label{fig:selectivity}
\end{figure}

\begin{figure}[!hbt]
	\centering
	\vspace{-3.5mm}
	\begin{tabular}{c c}
		\hspace{-3mm}
		\includegraphics[width=0.5\columnwidth]{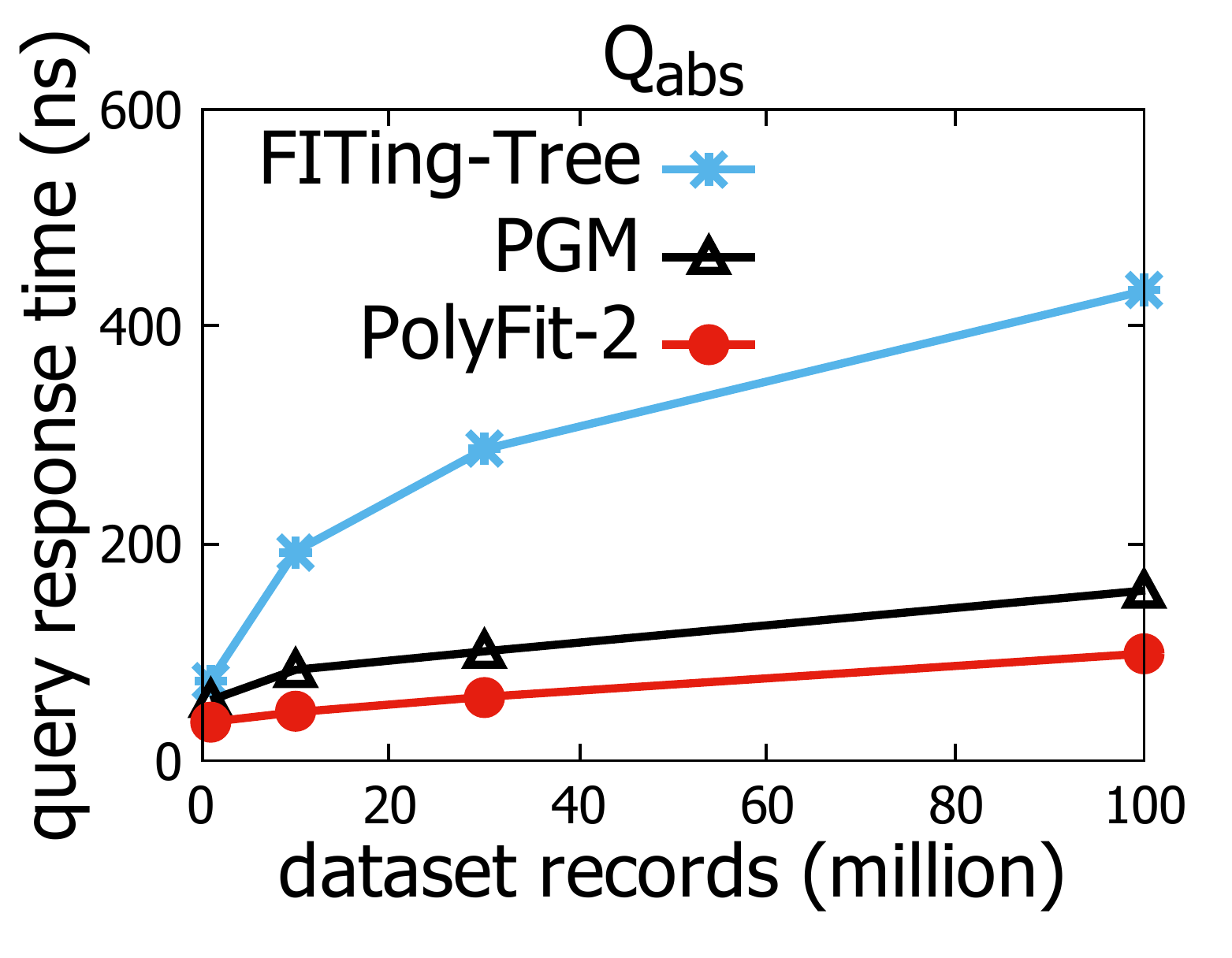} &
		\hspace{-3mm}
		\includegraphics[width=0.5\columnwidth]{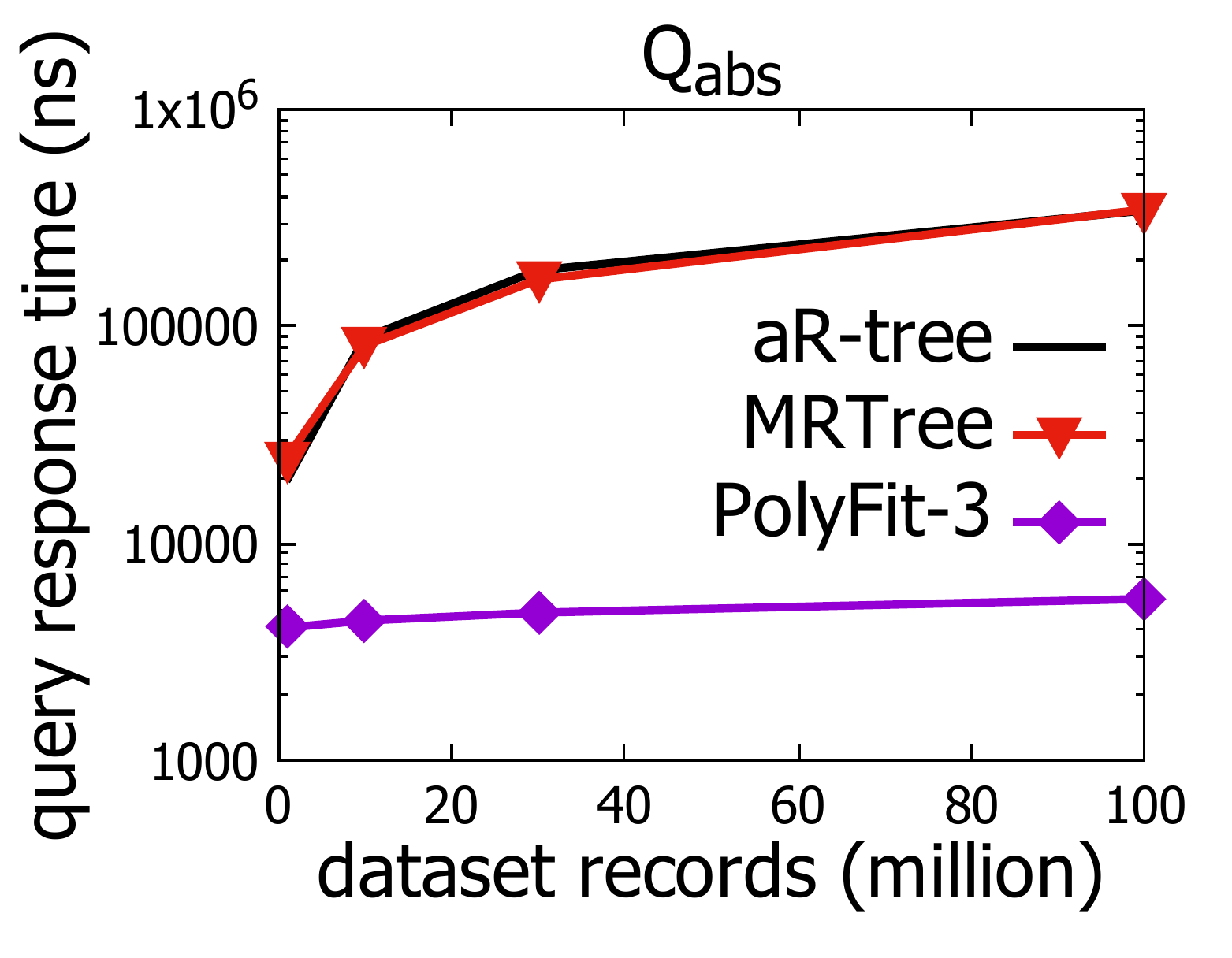} \\
		(a) \verb"COUNT" query (single key) & (b) \verb"COUNT" query (two keys)
	\end{tabular}
	\vspace{-3.5mm}
	\caption{Response time for \texttt{COUNT} query in OSM dataset, varying the dataset size}
	\vspace{-1.0mm}
	\label{fig:count_varying_data_size}
\end{figure}

\begin{figure}[!hbt]
\centering
\vspace{-3.5mm}
\includegraphics[width=1.0\columnwidth]{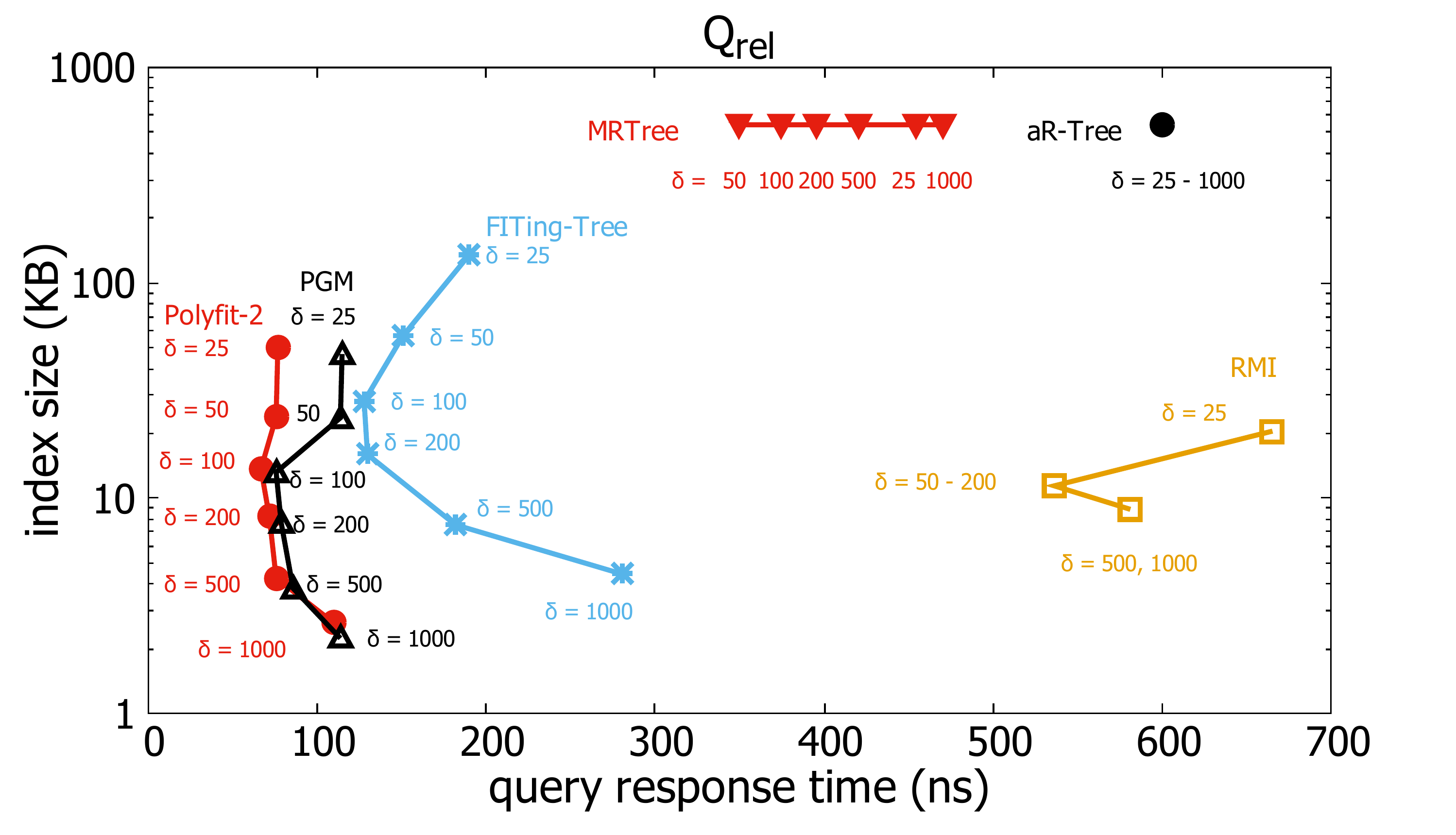}
\vspace{-8mm}
\caption{Trade-off between the query response time and index size of \texttt{COUNT} query (single key) in TWEET dataset, for $Q_{rel}$ with $\varepsilon_{rel}= 0.01$, varying $\delta$ from $25$ to $1000$}
\vspace{-5mm}
\label{fig:time_vs_index_size}
\end{figure}

{\bf Sensitivity of} $\varepsilon_{abs}$ {\bf and} $\varepsilon_{rel}$ {\bf for \verb"MAX" query.} In this experiment, we proceed to investigate how the absolute error $\varepsilon_{abs}$ and relative error $\varepsilon_{rel}$ affect the efficiency performance of different methods. Observe from Figure \ref{fig:max_varying_error}, \polyfit{} can achieve at least 2x speedup, compared with other methods, even though the selected error is small. 

{\bf Sensitivity of the selectivity for \verb"COUNT" query.} We further test the response time of the different methods, varying the selectivity of the \verb"COUNT" query. Figure \ref{fig:selectivity} shows that when we increase the selectivity of the \verb"COUNT" query (i.e., each query covers a larger region), the query response time normally increases in different methods. Since all methods for the \verb"COUNT" query with a single key have logarithmic time complexity, they are not sensitive to the selectivity (cf. Figure \ref{fig:selectivity}a). Unlike the single key case, both the existing methods aR-tree and MRTree are sensitive to the selectivity, compared with \polyfit{} (cf. Figure \ref{fig:selectivity}b).

In both cases, \polyfit{} achieves better performance across different selectivities. Since the methods MRTree, aR-tree, and RMI always provide inferior efficiency in the single key case (cf. Figures \ref{fig:count_varying_abs}a, \ref{fig:count_varying_rel}a and \ref{fig:selectivity}a), compared with FITing-Tree, PGM and \polyfit{}, we omit their results in subsequent experiments.

{\bf Scalability to the dataset size.} We proceed to test how the dataset size affects the efficiency of \polyfit{} and other methods. In this experiment, we choose the largest dataset OSM (with 100M records) for testing. Here, we focus on solving Problem \ref{prob:abs_error} ($Q_{abs}$) for \verb"COUNT" query, in which we adopt the default absolute errors, i.e., $\varepsilon_{abs}=100$ and $\varepsilon_{abs}=200$ for the cases in single key and two keys, respectively, and choose the latitude attribute as the key. To conduct this experiment, we choose five dataset sizes, which are 1M, 10M, 30M, and 100M. Figure \ref{fig:count_varying_data_size} shows that \polyfit{} scales well with the dataset size and outperforms other methods.


{\bf Trade-off between the query response time and index size.} We proceed to investigate the trade-off between the query response time and index size of the different indexing methods. To conduct this experiment, we focus on Problem \ref{prob:rel_error} ($Q_{rel}$) and choose 25, 50, 100, 200, 500, and 1000 as values of $\delta$ for testing. In Figure \ref{fig:time_vs_index_size}, since the changes to $\delta$ cannot affect the index construction methods of the aR-tree and MRTree, parameter $\delta$ cannot affect the index sizes of these two methods. We also notice that these index structures consistently provide inferior performance in terms of index size and query response time, compared with the FITing-tree, PGM, and the PolyFit methods. For the other methods, we can observe that the smaller the $\delta$, the larger the index size and query response time. The reason is that smaller $\delta$ values lead to more leaf nodes in the index structures in the different methods (e.g., more intervals are generated by the GS method (cf. Algorithm \ref{alg:GS}) in \polyfit{}). On the other hand, if $\delta$ is too large, it is easier for an online query to violate the error condition for $Q_{rel}$ (i.e., Lemma \ref{lem:count_rel_error}), and thus the query response time can also be larger. As such, all curves (except for the MRTree and aR-tree methods) in Figure \ref{fig:time_vs_index_size} resemble the ``C''-shape. In general, \polyfit{}-2 offers a better trade-off compared with other methods.

\vspace{-3.0mm}
\subsection{Comparing with Heuristic Methods}
\vspace{-1.0mm}
\label{exp:heuristics}
We compare the response time of \polyfit{} with other heuristic methods, which cannot fulfill deterministic error guarantees, i.e., $Q_{abs}$ (cf. Problem \ref{prob:abs_error}) and $Q_{rel}$ (cf. Problem \ref{prob:rel_error}). In this experiment, we adopt the default setting for the method PLATO \cite{Plato20}, vary the bin size for the method Hist and vary the sampling size for the sampling-based methods, including S-tree, S2, VerdictDB, and DBest. Since S2 cannot achieve less than 100000ns query response time with 10\% measured relative error, we omit the result of S2 in Figure \ref{fig:heuristics_error_time}a. In addition, we only report the results of the heuristic methods DBest and VerdictDB in Figure \ref{fig:heuristics_error_time}b, as the other heuristic methods cannot support \verb"COUNT" queries with two keys (cf. Table \ref{tab:methods}). In these two figures, \polyfit{} yields the smallest query response time with similar relative error.

\begin{figure}[!hbt]
\centering
\vspace{-3.5mm}
\begin{tabular}{c c}
\hspace{-3mm}
\includegraphics[width=0.5\columnwidth]{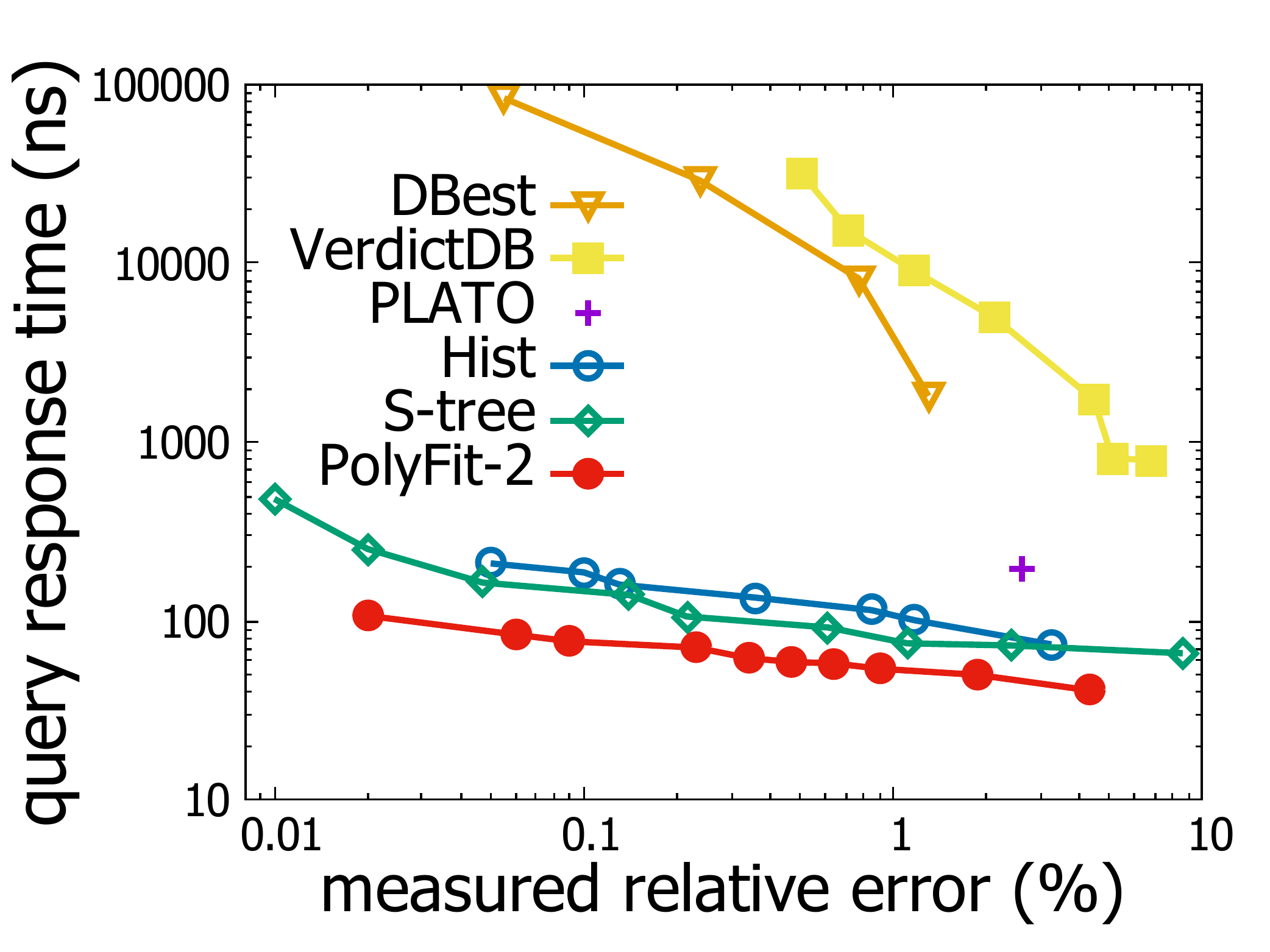} &
\hspace{-3mm}
\includegraphics[width=0.5\columnwidth]{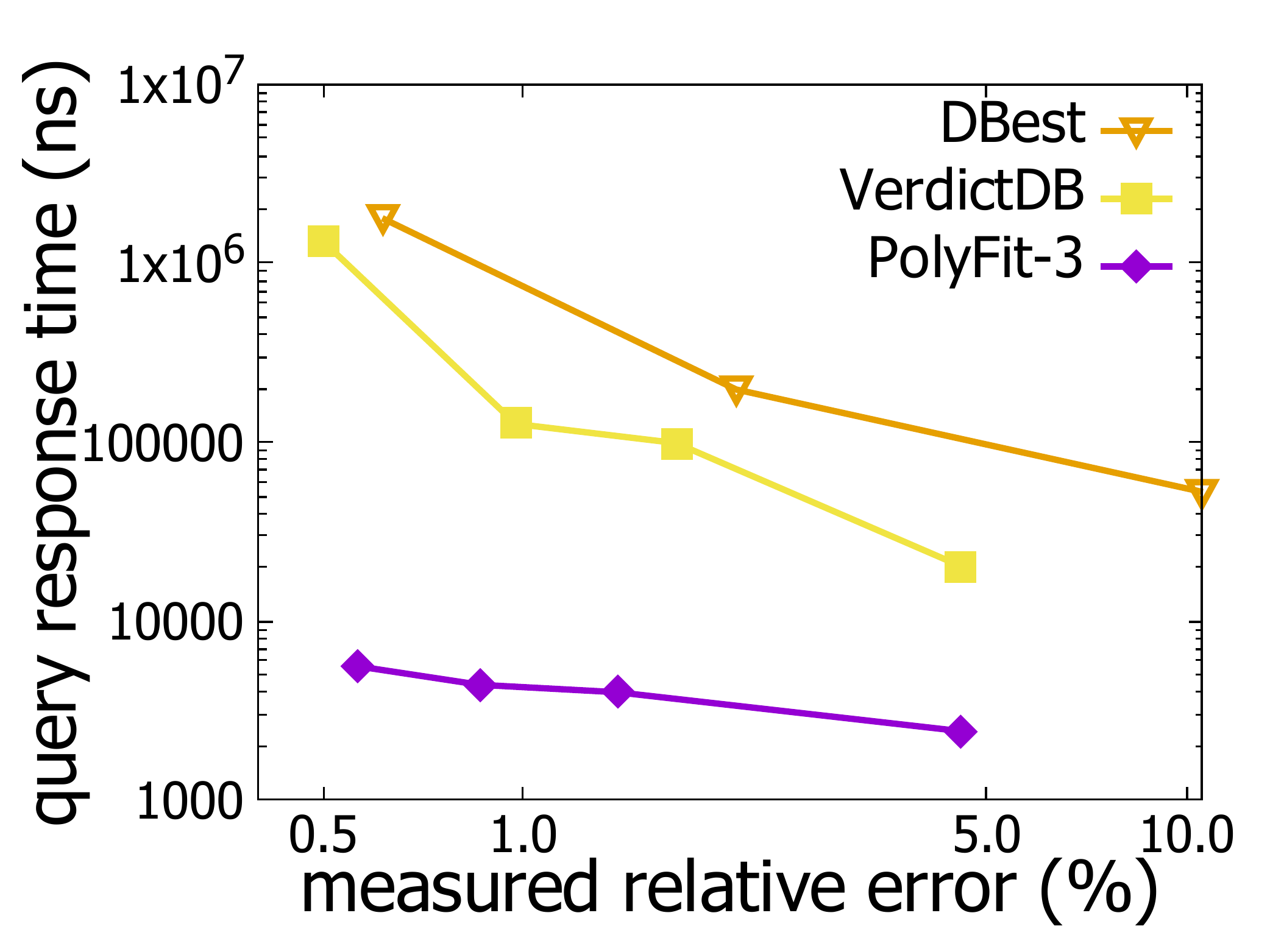} \\
(a) \verb"COUNT" query (single key) & (b) \verb"COUNT" query (two keys)
\end{tabular}
\vspace{-3.5mm}
\caption{Response time between \polyfit{} and the heuristic methods for \texttt{COUNT} query with single key and two keys in TWEET and OSM datasets, respectively}
\vspace{-3.5mm}
\label{fig:heuristics_error_time}
\end{figure}

\vspace{-3.0mm}
\subsection{Comparing the Construction Time of All Methods}
\vspace{-1.0mm}
\label{sec:construction_time_all_methods}
We proceed to investigate further how the construction times of all methods change across different dataset sizes. Here, we adopt the default degrees, i.e., $deg=2$ and $deg=3$, for the polynomial functions in the \texttt{COUNT} query with a single key and two keys, respectively. In Figure \ref{fig:construction_time_vary_dataset_size}, \polyfit{} consistently achieves faster construction time than Hist and DBest. Although \polyfit{} may not achieve the fastest construction time, compared with some methods (e.g., the aR-tree and the MRTree), \polyfit{} takes less than 150s and 2500s (with default $deg$) in the construction stage with 1 million (TWEET) and 30 million records (OSM), respectively, which are acceptable in practice where the datasets are static during data analytics tasks. 

\begin{figure}[!hbt]
	\centering
	\vspace{-3.5mm}
	\begin{tabular}{c c}
		\hspace{-8mm}
		\includegraphics[width=0.58\columnwidth]{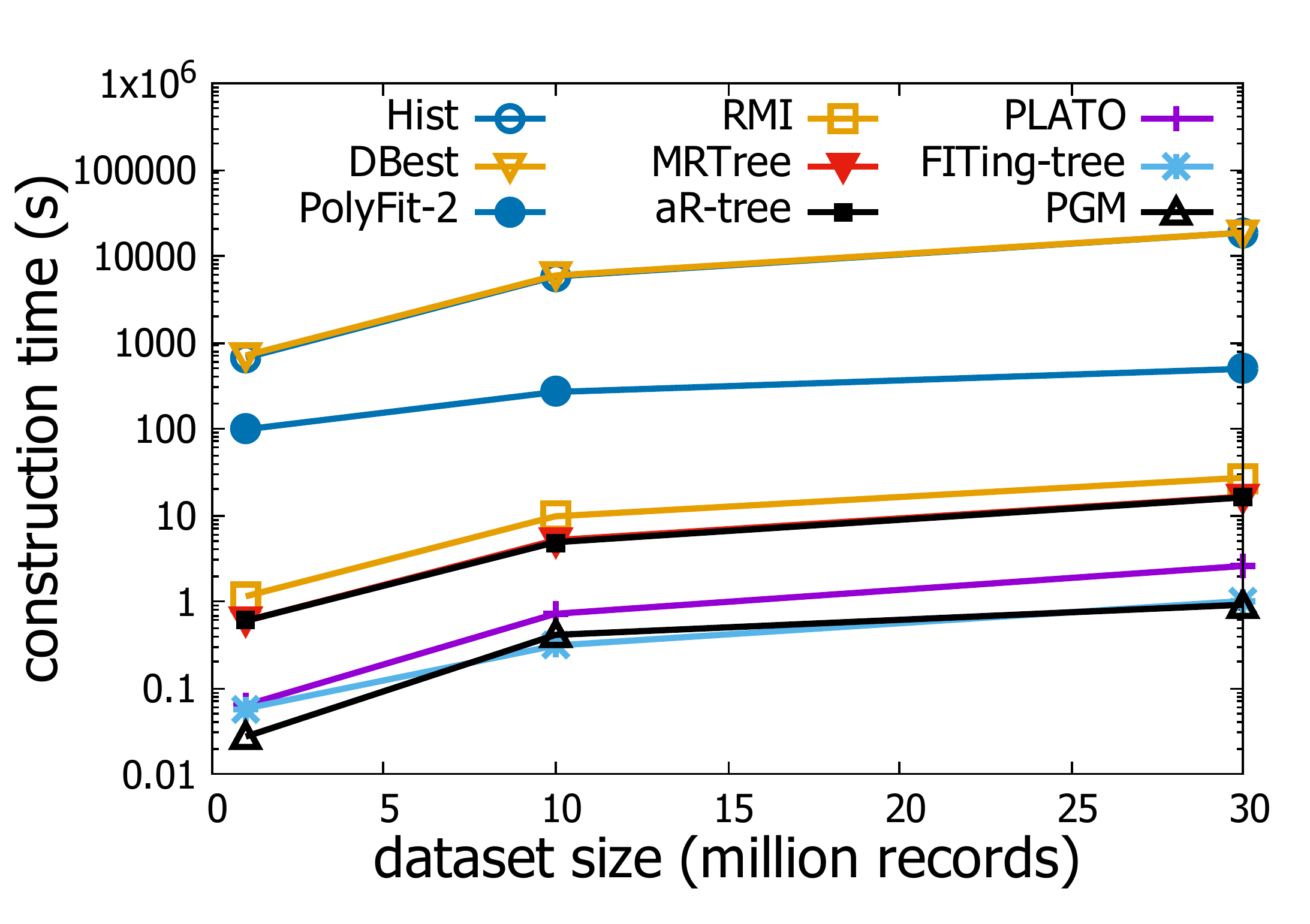} &
		\hspace{-5mm}
		\includegraphics[width=0.58\columnwidth]{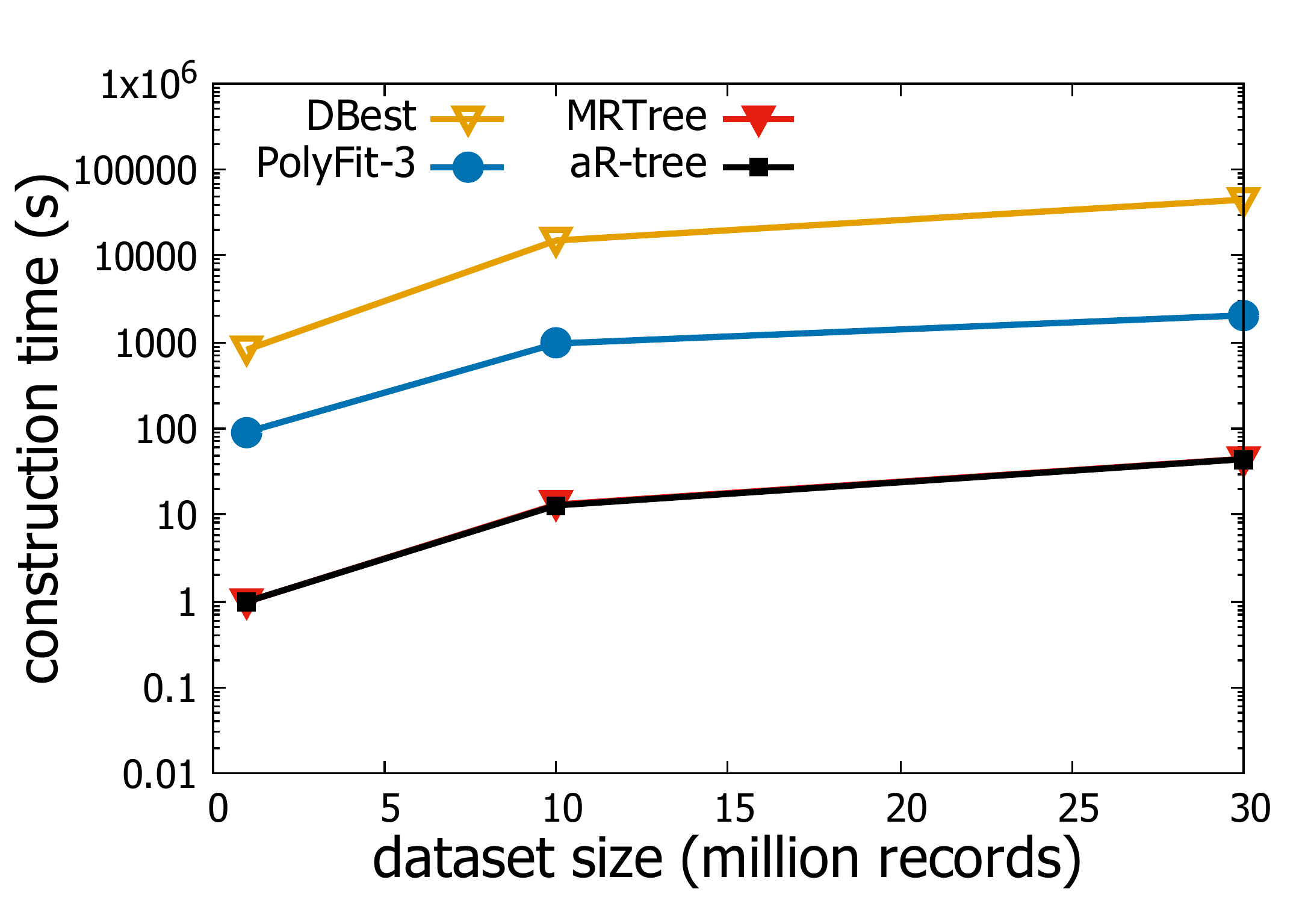} \\
		(a) \verb"COUNT" query (single key) & (b) \verb"COUNT" query (two keys)
	\end{tabular}
	\vspace{-2.5mm}
	\caption{Index construction time of methods for \texttt{COUNT} query with single key and two keys (using OSM dataset for both settings), varying the dataset size}
	\vspace{-2.5mm}
	\label{fig:construction_time_vary_dataset_size}
\end{figure}


\vspace{-5.0mm}
\section{Conclusion}
\label{sec:conclusion}
\vspace{-1.5mm}
In this paper, we study the range aggregate queries with two types of approximate guarantees, which are (1) absolute error guarantees (cf. Problem \ref{prob:abs_error} ($Q_{abs}$)) and (2) relative error guarantees (cf. Problem \ref{prob:rel_error} ($Q_{rel}$)). Unlike the existing methods, our work can efficiently support the most commonly used range aggregate queries (\verb"SUM", \verb"COUNT", \verb"MIN", \verb"MAX"), fulfill the error guarantees, and support the setting of two keys.

In order to improve the efficiency of computing these queries, we utilize several polynomial functions to fit the data points and then build the compact index structure \polyfit{} on top of these polynomial functions. An experimental study shows that \polyfit{} can achieve significant speedups compared with existing learned-index methods and other traditional exact/ approximate methods for different query types. In particular, we can achieve at most 5$\mu$s query response time in a dataset with 30 million records, which cannot be achieved by the state-of-the-art methods. 

In the future, we plan to further develop advanced techniques to improve the efficiency of constructing \polyfit{}, in order to handle updates of records in large-scale datasets. In addition, we aim to extend our methods to support other fundamental analytics operations, including standard deviation, median, etc. Moreover, we plan to investigate how to utilize the idea of \polyfit{} to further improve the efficiency of other types of statistics and machine learning models, e.g., kernel density estimation \cite{TRM20,TML19}, and support vector machines \cite{TLRMS20,TML19}.


\bibliographystyle{abbrv}
\bibliography{ref}

\appendix
\section{Appendix}
\label{sec:appendix}

\subsection{Approximate Range Aggregate Queries via Learn Index Methods}
The learned index methods, including RMI~\cite{kraska2018case}, FITing-tree~\cite{fiting2019}, and PGM~\cite{PG20}, are originally designed for range and point queries. In order to support range aggregate queries, e.g., \verb"SUM" and \verb"MAX", with both absolute error and relative error guarantees (cf. Problem \ref{prob:abs_error} and Problem \ref{prob:rel_error}), we follow the same mechanism of these index structures to fit the curve of either $CF_{sum}(k)$ or $DF_{max}(k)$ cf. Equation \ref{eq:Fk}. Instead of finding the exact result for either range query or point query, we utilize our querying methods (i.e., Lemma \ref{lem:count_abs_error} to \ref{lem:max_rel_error}) to solve both Problem \ref{prob:abs_error} and Problem \ref{prob:rel_error}.

\subsection{Tuning the RMI}
RMI \cite{kraska2018case} is a flexible learned index structure, which contains many parameters for tuning this index, including: (1) types of machine learning models, (2) the number of stages in RMI, (3) the number of models for each stage in RMI. Here, we adopt the TWEET dataset (cf. Table \ref{tab:datasets}) to tune the parameters, so as to obtain the best performance for RMI.

\subsubsection{{\bf Model Selection}}
In \cite{kraska2018case}, they adopt the neural network (NN) with at most two hidden layers and linear regression (LR) for testing the performance. Table \ref{tab:model_comparison} summarizes the response time and measured relative error, using single model to fit $CF_{sum}(k)$ in TWEET dataset for approximate \verb"SUM" query. Here, we use 1:X:Y:1 to represent the NN architecture with two hidden layers, i.e., X and Y neurons in the first and second hidden layers respectively, where the first and last one denote the input and output respectively. Similarly, we also use 1:X:1 to represent one hidden layer of the NN architecture.

Even though NN model can generally provide accurate fitting to the curve ($CF_{sum}(k)$ in this experiment), the response time can be much larger, compared with linear regression model. As an example, once we choose the shallow NN architecture 1:8:1, the response time can achieve more than 100ns, which can be worse than the performance of FITing-tree (cf. Table \ref{tab:all_methods_error_guarantee}). Due to the inefficiency issue of highly non-linear NN model, we choose LR model for RMI.


\begin{table}[htbp]
	\centering
	\caption{Comparison of different machine learning models}
	\label{tab:model_comparison}
	\begin{tabular}{|c|c|c|c|} \hline
		Model & NN & Prediction time  & Measured \\
		 & architecture & (ns) & relative error (\%) \\ \hline
		 LR & n.a. & 20 & 38 \\ \hline
		NN & 1:4:1 & 119 & 24.1 \\ \hline
		NN & 1:8:1 & 189 & 25.3 \\ \hline
		NN & 1:16:1 & 275 & 25.3 \\ \hline
		NN & 1:4:4:1 & 152 & 24.8 \\ \hline
		NN & 1:8:8:1 & 347 & 21.4 \\ \hline
		NN & 1:16:16:1 & 972 & 23.3 \\ \hline
	\end{tabular}
\end{table}

\subsubsection{{\bf Tuning RMI Structure}}
\label{sec:staged_models}
RMI utilize multiple models (e.g., LR) to obtain the approximate searching position (cf. Figure \ref{fig:staged_models}). However, due to the large degree of flexibility for RMI, including the number of stages and the number of models for each stage in RMI, we only test some of the combinations for choosing the structure of models, i.e., RMI structure.

\begin{figure}
\centering
\includegraphics[scale=0.19]{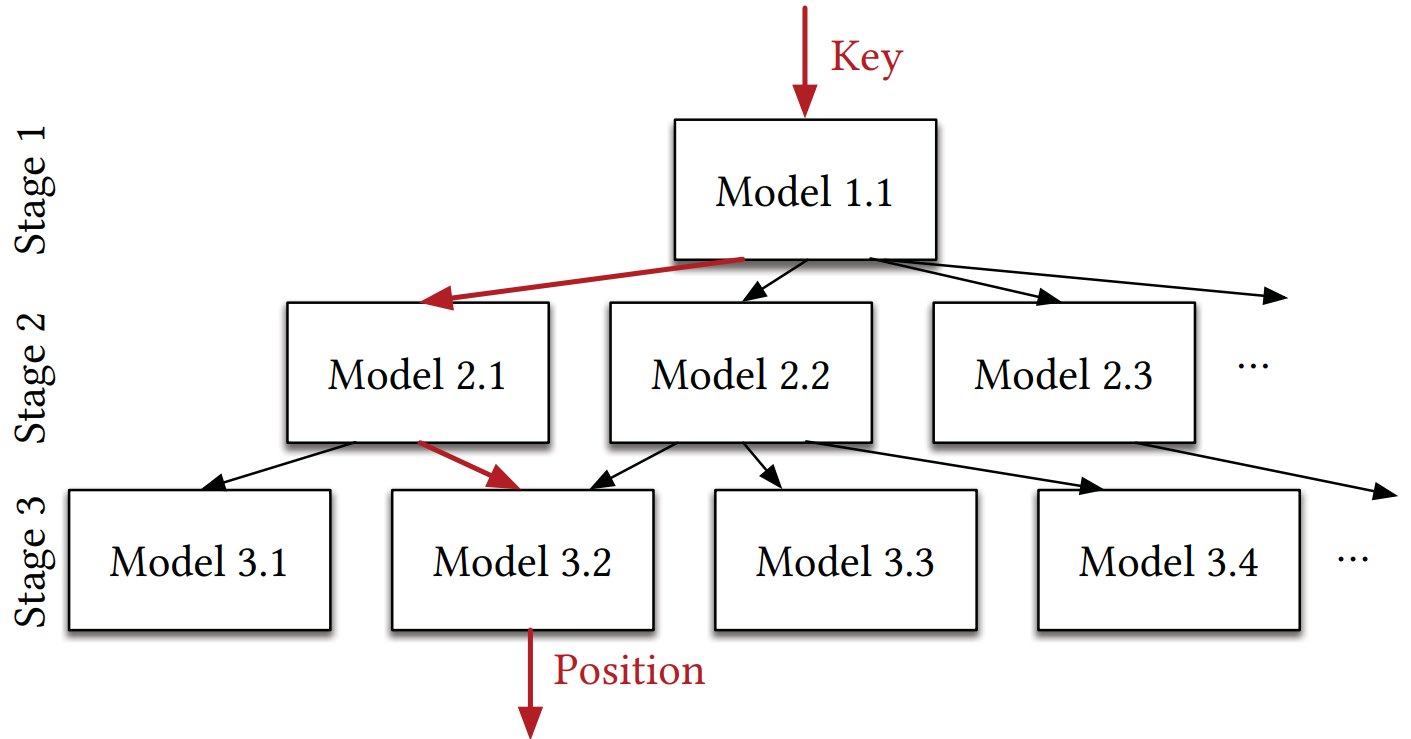}
\caption{RMI structure (Cropped from \cite{kraska2018case})}
\label{fig:staged_models}
\end{figure}

Theoretically, RMI can provide better performance with the large number of models, in which RMI can consume more memory resources. As such, we restrict the number of models in the leaf level (stage 3 in Figure \ref{fig:staged_models} as an example) of RMI to be approximately the same as PolyFit for the sake of fairness. Here, we use 1$\rightarrow$ 10 $\rightarrow$ X to denote the three-stage RMI structure with one model in stage 1, ten models in stage 2 and X models in stage 3. Similarly, we also use 1$\rightarrow$ 10 $\rightarrow$ 100 $\rightarrow$ Y to denote the four-stage RMI structure. Here, we vary X from 100 to 1000 and vary Y from 100 to 1000, by increasing X/Y 100 each time. In our experiment, we find that the RMI structure 1$\rightarrow$ 10 $\rightarrow$ 100 $\rightarrow$ 1000 can normally provide the smallest query response time, compared with other RMI structures. As such, we choose the RMI structure with 1$\rightarrow$ 10 $\rightarrow$ 100 $\rightarrow$ 1000 and LR for each model in our experiments (cf. Section \ref{sec:exp}).

\subsection{The Case of Repeated Keys}
In this scenario, we assume that the records with repeated keys in $\mathcal{D}$
are arranged in the ascending order on measure, i.e.,  \\
$(k_i, m_i), (k_{i+1}, m_{i+1}),\cdots,(k_{i+x}, m_{i+x})$, where $k_i=k_{i+1}=...=k_{i+x}$.
We propose to pre-process the dataset $\mathcal{D}$ as follows.

For \verb"SUM", \verb"MIN", or \verb"MAX" queries,
we propose to replace the repeated-key records by a single pair $(k_i, x)$,
where $\mathcal{G}$ is an aggregate function
and $x=\mathcal{G}(\{m_i, m_{i+1}, \cdots, m_{i+x}\})$.

For \verb"COUNT" queries, we replace the repeated-key records
by a single pair $(k_i, x)$, where  $x=CF_{count}(k_i) - CF_{count}(k_{i-1})$.
Then, during query evaluation, we execute a \verb"SUM" query instead of a \verb"COUNT" query.


\subsection{The Case of Negative Measure Values}
Our problem definitions in Section~\ref{sec:background} are applicable to negative measure values.
Nevertheless, the error conditions for query evaluation need to be examined and revised 
in order to preserve correctness.

For the absolute error guarantee, the error conditions 
(in Lemmas~\ref{lem:count_abs_error}, \ref{lem:max_abs_error} and \ref{lem:sum_abs_error_two_D})
are directly applicable to negative measure values.

In contrast, for the relative error guarantee,
we need to revise the error conditions in our lemmas with respect to negative measure values.

For example, we can replace Lemma~\ref{lem:count_rel_error} by the following:
\begin{lemma}
If $\tilde{A}_{sum} \geq 2\delta(1+\frac{1}{\varepsilon_{rel}})$ or $\tilde{A}_{sum} \leq -2\delta(1+\frac{1}{\varepsilon_{rel}})$, then $\tilde{A}_{sum}$ satisfies the relative error guarantee with respect to $\varepsilon_{rel}$.
\end{lemma}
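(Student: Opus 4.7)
The plan is to reuse the numerator bound from the proof of Lemma~\ref{lem:count_rel_error} and then split into two cases according to which branch of the hypothesis holds. First I would observe that the deviation threshold guarantee
\[
|CF_{sum}(l_q) - \mathbb{P}_{I_l}(l_q)| \leq \delta, \qquad |CF_{sum}(u_q) - \mathbb{P}_{I_u}(u_q)| \leq \delta
\]
does not use any sign assumption on measure values, so, exactly as in the proof of Lemma~\ref{lem:count_abs_error}, the triangle inequality yields
\[
|\tilde{A}_{sum} - R_{sum}(\mathcal{D},[l_q,u_q])| \leq 2\delta.
\]
This two-sided bound remains valid when measures (and hence $R_{sum}$) are negative, and it will serve as the numerator bound in both cases.

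Next I would handle the two branches of the hypothesis. The first branch, $\tilde{A}_{sum} \geq 2\delta(1+\tfrac{1}{\varepsilon_{rel}})$, is exactly the hypothesis of Lemma~\ref{lem:count_rel_error}, so its proof transfers verbatim: one obtains $R_{sum} \geq \tilde{A}_{sum} - 2\delta > 0$ and concludes $|\tilde{A}_{sum}-R_{sum}|/R_{sum} \leq 2\delta/(\tilde{A}_{sum}-2\delta) \leq \varepsilon_{rel}$. For the second branch, $\tilde{A}_{sum} \leq -2\delta(1+\tfrac{1}{\varepsilon_{rel}})$, I would use the upper half of the two-sided bound to get
\[
R_{sum}(\mathcal{D},[l_q,u_q]) \leq \tilde{A}_{sum} + 2\delta \leq -\tfrac{2\delta}{\varepsilon_{rel}} < 0,
\]
so $R_{sum}$ is strictly negative, and hence $|R_{sum}| = -R_{sum} \geq -\tilde{A}_{sum} - 2\delta = |\tilde{A}_{sum}| - 2\delta > 0$.

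Finally I would divide the numerator bound $|\tilde{A}_{sum} - R_{sum}| \leq 2\delta$ by this lower bound on $|R_{sum}|$, giving
\[
\left|\frac{\tilde{A}_{sum}-R_{sum}}{R_{sum}}\right| \leq \frac{2\delta}{|\tilde{A}_{sum}| - 2\delta} \leq \varepsilon_{rel},
\]
where the last inequality follows from $|\tilde{A}_{sum}| \geq 2\delta(1+\tfrac{1}{\varepsilon_{rel}})$ by the same algebraic manipulation used in Lemma~\ref{lem:count_rel_error}. The main subtlety, and the only real obstacle, is getting the denominator right: because $R_{sum}$ can be negative, one must work with $|R_{sum}|$ rather than $R_{sum}$ itself, and show that the hypothesis $\tilde{A}_{sum} \leq -2\delta(1+\tfrac{1}{\varepsilon_{rel}})$ is strong enough to force $R_{sum}$ to lie strictly below $-2\delta/\varepsilon_{rel}$, which guarantees the denominator is nonzero and large enough. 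The two branches of the hypothesis are symmetric mirror images of each other around $0$, which is why the same constant $2\delta(1+\tfrac{1}{\varepsilon_{rel}})$ appears with both signs.
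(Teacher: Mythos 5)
Your proof is correct and is exactly the argument the paper intends: the paper states this lemma in Appendix A.4 without proof, offering it as a drop-in replacement for Lemma~\ref{lem:count_rel_error}, whose proof covers your first branch verbatim. Your treatment of the negative branch---deriving $R_{sum}(\mathcal{D},[l_q,u_q]) \leq \tilde{A}_{sum} + 2\delta < 0$ and then bounding $|R_{sum}(\mathcal{D},[l_q,u_q])| \geq |\tilde{A}_{sum}| - 2\delta > 0$ so that the same algebra goes through with absolute values---is the natural symmetric completion and is sound.
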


Similarly, we can replace Lemma~\ref{lem:max_rel_error} by the following:
\begin{lemma}
If $\tilde{A}_{max}\geq \delta(1+\frac{1}{\varepsilon_{rel}})$ or If $\tilde{A}_{max}\leq -\delta(1+\frac{1}{\varepsilon_{rel}})$, then $\tilde{A}_{max}$ satisfies the relative error guarantee with respect to $\varepsilon_{rel}$.
\end{lemma}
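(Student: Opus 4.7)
The plan is to reuse the absolute-error bound that is implicit in Lemma~\ref{lem:max_abs_error} (namely $|\tilde{A}_{max} - R_{max}(\mathcal{D},[l_q,u_q])| \leq \delta$, which follows directly from the deviation guarantee $|DF_{max}(k)-\mathbb{P}(k)|\leq \delta$ that governs every piece used in Equation~\ref{eq:approx-max}) and then split on the sign of $\tilde{A}_{max}$ to produce a lower bound on $|R_{max}(\mathcal{D},[l_q,u_q])|$ that is strong enough to control the relative error. Note that the issue with negative measure values is that $R_{max}$ can lie on either side of zero, so the single-sided condition $\tilde{A}_{max}\geq \delta(1+1/\varepsilon_{rel})$ of Lemma~\ref{lem:max_rel_error} must be complemented by a mirror-image condition in the negative range.

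First I would derive the two-sided absolute bound $R_{max}(\mathcal{D},[l_q,u_q])\in[\tilde{A}_{max}-\delta,\,\tilde{A}_{max}+\delta]$ exactly as in the proof sketch for Lemma~\ref{lem:max_abs_error}. Next, I would consider the two branches of the hypothesis separately. In the positive branch $\tilde{A}_{max}\geq \delta(1+1/\varepsilon_{rel})$, I would write $\tilde{A}_{max}-\delta\geq \delta/\varepsilon_{rel}>0$, conclude that $R_{max}(\mathcal{D},[l_q,u_q])\geq \delta/\varepsilon_{rel}>0$, and therefore $|R_{max}(\mathcal{D},[l_q,u_q])|\geq \delta/\varepsilon_{rel}$. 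In the negative branch $\tilde{A}_{max}\leq -\delta(1+1/\varepsilon_{rel})$, I would symmetrically get $\tilde{A}_{max}+\delta\leq -\delta/\varepsilon_{rel}<0$, hence $R_{max}(\mathcal{D},[l_q,u_q])\leq -\delta/\varepsilon_{rel}<0$, which again yields $|R_{max}(\mathcal{D},[l_q,u_q])|\geq \delta/\varepsilon_{rel}$.

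Finally, I would combine the absolute error bound $|\tilde{A}_{max}-R_{max}(\mathcal{D},[l_q,u_q])|\leq \delta$ with the lower bound $|R_{max}(\mathcal{D},[l_q,u_q])|\geq \delta/\varepsilon_{rel}$ (which holds in both branches) to conclude
\[
\left|\frac{\tilde{A}_{max}-R_{max}(\mathcal{D},[l_q,u_q])}{R_{max}(\mathcal{D},[l_q,u_q])}\right| \;\leq\; \frac{\delta}{\delta/\varepsilon_{rel}} \;=\; \varepsilon_{rel},
\]
matching Problem~\ref{prob:rel_error}.

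The main obstacle is not algebraic but conceptual: ensuring that the two-sided assumption actually rules out the problematic band around zero where $R_{max}(\mathcal{D},[l_q,u_q])$ could be arbitrarily small and the relative error undefined or unbounded. The hypothesis forces $\tilde{A}_{max}$ to lie at least $\delta/\varepsilon_{rel}$ away from $\pm\delta$, and it is precisely this slack that, together with the $\delta$-tight absolute bound, keeps $R_{max}$ bounded away from $0$ with the same sign as $\tilde{A}_{max}$; I would state this clearly so that the branch analysis looks symmetric and there is no hidden case where $R_{max}$ straddles zero.
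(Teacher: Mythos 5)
Your proposal is correct and follows essentially the same route as the paper's (omitted, "similar to Lemma~\ref{lem:count_rel_error}") argument: derive the one-sided absolute bound $|\tilde{A}_{max}-R_{max}(\mathcal{D},[l_q,u_q])|\leq\delta$, use the hypothesis to bound the denominator away from zero with the same sign as $\tilde{A}_{max}$, and divide; your rewriting of the condition as $|R_{max}(\mathcal{D},[l_q,u_q])|\geq\delta/\varepsilon_{rel}$ is algebraically identical to the paper's bound $\frac{\delta}{\tilde{A}_{max}-\delta}\leq\varepsilon_{rel}$. The symmetric negative branch is exactly the adjustment the appendix intends for negative measure values, so no gap remains.
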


Finally, we can replace Lemma~\ref{lem:sum_rel_error_two_D} by the following:
\begin{lemma}
If $\tilde{A}_{count}\geq 4\delta(1+\frac{1}{\varepsilon_{rel}})$ or $\tilde{A}_{count}\leq -4\delta(1+\frac{1}{\varepsilon_{rel}})$ then $\tilde{A}_{count}$ satisfies the relative error guarantee $\varepsilon_{rel}$.
\end{lemma}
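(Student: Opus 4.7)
The plan is to mirror the proof of Lemma~\ref{lem:count_rel_error}, adapting it to the two-dimensional setting where $\tilde{A}_{count}$ is an alternating sum of four polynomial evaluations (cf.\ Equation~\ref{eq:2d_count}), and to handle both signs in the hypothesis. The first step is to re-derive the two-dimensional absolute error bound already implicit in Lemma~\ref{lem:sum_abs_error_two_D}: since each of the four polynomials $\mathbb{P}_{I_{uu}}, \mathbb{P}_{I_{lu}}, \mathbb{P}_{I_{ul}}, \mathbb{P}_{I_{ll}}$ approximates $CF_{count}$ at its respective corner within $\delta$, combining the four deviations via the triangle inequality yields
\begin{equation*}
|\tilde{A}_{count} - R_{count}(\mathcal{D}, [l_q^{(1)}, u_q^{(1)}][l_q^{(2)}, u_q^{(2)}])| \leq 4\delta.
\end{equation*}
This is the numerator bound I will divide through by $|R_{count}|$ in each of the two sign cases.

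Next I would split on the sign condition. In the positive case $\tilde{A}_{count} \geq 4\delta(1 + 1/\varepsilon_{rel})$, the absolute bound gives $R_{count} \geq \tilde{A}_{count} - 4\delta$, which is strictly positive by the hypothesis (since $\varepsilon_{rel} > 0$ forces $4\delta(1 + 1/\varepsilon_{rel}) > 4\delta$). Dividing the numerator bound by $R_{count}$ then gives
\begin{equation*}
\frac{|\tilde{A}_{count} - R_{count}|}{|R_{count}|} \;\leq\; \frac{4\delta}{\tilde{A}_{count} - 4\delta} \;\leq\; \varepsilon_{rel},
\end{equation*}
where the last inequality is an algebraic rearrangement of the hypothesis, exactly as in the proof of Lemma~\ref{lem:count_rel_error}. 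In the negative case $\tilde{A}_{count} \leq -4\delta(1 + 1/\varepsilon_{rel})$, the absolute bound gives $R_{count} \leq \tilde{A}_{count} + 4\delta < 0$, hence $|R_{count}| \geq |\tilde{A}_{count}| - 4\delta$. A symmetric derivation then yields $|\tilde{A}_{count} - R_{count}| / |R_{count}| \leq 4\delta / (|\tilde{A}_{count}| - 4\delta) \leq \varepsilon_{rel}$.

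The main obstacle is really just bookkeeping rather than any new idea: one must confirm that the four signed $\delta$-deviations combine additively to yield the $4\delta$ absolute bound despite the alternating signs in Equation~\ref{eq:2d_count} (the triangle inequality is insensitive to those signs), and then verify in each sign case that $4\delta / (|\tilde{A}_{count}| - 4\delta) \leq \varepsilon_{rel}$ is algebraically equivalent to $|\tilde{A}_{count}| \geq 4\delta(1 + 1/\varepsilon_{rel})$. No new machinery beyond the one-dimensional Lemma~\ref{lem:count_rel_error} is required; the proof is essentially a sign-aware replay of that argument with the absolute-bound constant $2\delta$ replaced by $4\delta$.
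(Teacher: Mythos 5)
Your proof is correct and follows exactly the route the paper intends: the paper gives no explicit proof of this appendix lemma, only noting that it is the sign-symmetric version of Lemma~\ref{lem:sum_rel_error_two_D}, whose proof is in turn "similar to" that of Lemma~\ref{lem:count_rel_error}, and your argument is precisely that replay with the constant $2\delta$ replaced by $4\delta$ plus the mirrored negative case. Both the triangle-inequality derivation of the $4\delta$ numerator bound and the algebraic equivalence $4\delta/(|\tilde{A}_{count}|-4\delta)\leq\varepsilon_{rel} \iff |\tilde{A}_{count}|\geq 4\delta(1+\frac{1}{\varepsilon_{rel}})$ check out in each sign case.
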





\subsection{The Case of Multiple Keys}
We now discuss how to extend the techniques in Section~\ref{sec:extend_2d} to support queries with multiple ($d>2$) keys.

It is straightforward to extend the key-cumulative surface (in Figure \ref{fig:example2D}b)
for the $d$-dimensional space.

According to Ho et al.~\cite{ho1997range}, the \verb"COUNT" of any rectangular region
can be expressed as the sum (and the difference) of $2^d$ precomputed terms.
This idea enables us to compute the result in $O(2^d)$ time.
However, the time complexity increases rapidly when $d$ increases.

Furthermore, if we apply the quad-tree based partitioning approach (in Figure~\ref{fig:quad_tree})
to fulfill the error guarantee $\delta$,
the required number of partitions (and models) grows exponentially with the increase of $d$.
The reason is that, at a high dimensionality, the key-cumulative surface becomes more complicated,
rendering it hard to approximate it well.



\subsection{The Case of Parallel Construction}
The index construction process can be accelerated by parallel computation.

First, we consider the single key scenario in Section~\ref{sec:index_construction}. 
We may apply a fast heuristics method (e.g., equi-width partitioning) 
to divide the key domain into $m$ intervals: $I_1, I_2, \cdots, I_m$.
Then, we create $m$ threads, where the $j$-th thread is used to run
Algorithm~\ref{alg:GS} on the $j$-th interval $I_j$ only.
Since these $m$ intervals are disjoint, it is feasible to run these $m$ threads in parallel.
The speedup of this method is at most $m$ times when compared to the single-thread execution.
Nevertheless, if the workload of these $m$ threads are not balanced, the speedup may become much lower than $m$.
In future, we will examine a fast heuristics method
that can balance the workload of threads.


Observe that the above idea can also be extended to the two keys scenario in Section~\ref{sec:extend_2d}.
The only difference is that we use a heuristic method to partitioning the 2-dimensional domain into $m$ disjoint regions.

%
%
\end{document}